\newlength{\bibitemsep}\setlength{\bibitemsep}{.1\baselineskip plus .05\baselineskip minus .05\baselineskip}
\newlength{\bibparskip}\setlength{\bibparskip}{1.2pt}
\let\oldthebibliography\thebibliography
\renewcommand\thebibliography[1]{%
  \oldthebibliography{#1}%
  \setlength{\parskip}{\bibitemsep}%
  \setlength{\itemsep}{\bibparskip}%
}
\renewcommand{\paragraph}{%
  \@startsection{paragraph}{4}%
  {\z@}{1.6ex \@plus 1ex \@minus .2ex}{-0.5em}%
  {\normalfont\normalsize\bfseries}%
}
\DeclareFontFamily{U}{mathx}{\hyphenchar\font45}
\DeclareFontShape{U}{mathx}{m}{n}{
  <5> <6> <7> <8> <9>
  <10> <10.95> <12> <14.4> <17.28> <20.74> <24.88>
  mathx10
}{}
\DeclareSymbolFont{mathx}{U}{mathx}{m}{n}
\DeclareMathAccent{\widebar}{0}{mathx}{"73}
\theoremstyle{plain}
\newtheorem{thm}{Theorem}[section]
\newtheorem{lem}[thm]{Lemma}
\newtheorem{cor}[thm]{Corollary}
\newtheorem{cla}[thm]{Claim}
\newtheorem{prop}[thm]{Proposition}
\theoremstyle{definition}
\newtheorem{rem}[thm]{Remark}
\newcommand*{\claimproofname}{Proof.}
\newenvironment{claimproof}[1][\claimproofname]{\begin{proof}[#1]}{\end{proof}}
\newcommand{\leqnomode}{\tagsleft@true}
\newcommand{\reqnomode}{\tagsleft@false}
 \newcommand{\linkdest}[1]{\Hy@raisedlink{\hypertarget{#1}{}}}
\def\final{0}  
\def\iflong{\iffalse}
\newcommand{\kristof}[1]{{\color{red}[{ \tiny \textbf{Kristóf:}  #1}]\marginpar{\color{red}*}}}
\newcommand{\tamas}[1]{{\color{blue}[{ \tiny \textbf{Tamás:}  #1}]\marginpar{\color{blue}*}}}
\newcommand{\andris}[1]{{\color{magenta}[{ \tiny \textbf{Andris:}  #1}]\marginpar{\color{magenta}*}}}
\newcommand{\florian}[1]{{\color{teal}[{ \tiny \textbf{Florian:}  #1}]\marginpar{\color{teal}*}}}
\newcommand{\kristof}[1]{}
\newcommand{\tamas}[1]{}
\newcommand{\andris}[1]{}
\newcommand{\florian}[1]{}
\definecolor{myred}{HTML}{A71D31}
\definecolor{myblue}{HTML}{355691}
\definecolor{mygreen}{HTML}{4DA167}
\newcommand{\bZ}{\mathbb{Z}}
\newcommand{\cC}{\mathcal{C}}
\newcommand{\cF}{\mathcal{F}}
\newcommand{\cH}{\mathcal{H}}
\newcommand{\cP}{\mathcal{P}}
\newcommand{\cS}{\mathcal{S}}
\newcommand{\cU}{\mathcal{U}}
\newtcolorbox{probbox}{arc=6pt,
                      colback=white!100,
                      colframe=black!50,
                      before skip=6pt,
                      after skip=6pt,
                      boxsep=1pt,
                      left=6pt,
                      right=6pt,
                      top=4pt,
                      bottom=4pt}
\newcommand{\dGaux}[2]{%
  \vphantom{#2}%
  \overset{%
    \smash{%
      \raisebox{%
        \ifx#1\scriptstyle -0.4ex 
        \else\ifx#1\scriptscriptstyle -0.2ex 
        \else -0.6ex 
        \fi\fi
      }{$\scriptstyle\rightharpoonup$}%
    }%
  }{#2}%
}
\newcommand{\dG}{\mathpalette\dGaux G}
\newcommand{\dP}{\mathpalette\dGaux P}
\newcommand{\dGp}{\mathpalette\dGaux {G'}}
\newcommand{\dGpo}{\mathpalette\dGaux {G'_1}}
\newcommand{\dGpt}{\mathpalette\dGaux {G'_2}}
\newcommand{\dH}{\mathpalette\dGaux H}
\newcommand{\dHp}{\mathpalette\dGaux {H'}}
\newcommand{\dA}{\mathpalette\dGaux A}
\newcommand{\dE}{\mathpalette\dGaux E}
\newcommand{\de}{\mathpalette\dGaux e}
\newcommand{\df}{\mathpalette\dGaux f}
\newcommand{\dg}{\mathpalette\dGaux g}
\newcommand{\ktori}{\textsc{Steiner Rooted $k$-Orientation with $t$ Terminals}\xspace}
\newcommand{\ltori}{\textsc{Steiner Rooted $\ell$-Orientation with $t$ Terminals}\xspace}
\newcommand{\kori}{\textsc{Steiner Rooted $k$-Orientation}\xspace}
\newcommand{\twoori}{\textsc{Steiner Rooted $2$-Orientation}\xspace}
\newcommand{\mtori}{\textsc{Modified Steiner Rooted Orientation with $t$ Terminals}\xspace}
\newcommand{\mthreeori}{\textsc{Modified Steiner Rooted Orientation with $3$ Terminals}\xspace}
\newcommand{\tori}{\textsc{Steiner Rooted Orientation with $t$ Terminals}\xspace}
\newcommand{\threeori}{\textsc{Steiner Rooted Orientation with $3$ Terminals}\xspace}
\newcommand{\fourori}{\textsc{Steiner Rooted Orientation with $4$ Terminals}\xspace}
\newcommand{\ori}{\textsc{Steiner Rooted Orientation}\xspace}
\newcommand{\roria}{\textsc{$R$-Orientation with $\alpha$-Demand}\xspace}
\newcommand{\rori}{\textsc{$R$-Orientation}\xspace}
\newcommand{\true}{\texttt{True}}
\newcommand{\false}{\texttt{False}}
\title{Fixed-Parameter Tractability and Hardness for Steiner Rooted and Locally Connected Orientations}
\author{
Kristóf Bérczi\thanks{MTA-ELTE Matroid Optimization Research Group and HUN-REN-ELTE Egerváry Research Group, Department of Operations Research, ELTE Eötvös Loránd University, and HUN-REN Alfréd Rényi Institute of Mathematics, Budapest, Hungary. Email: \texttt{kristof.berczi@ttk.elte.hu}.}
\and
Florian Hörsch\thanks{CISPA Helmholtz Center for Information Security, Saarbrücken, Germany. Email: \texttt{florian.hoersch@cispa.de}.}
\and
András Imolay\thanks{Department of Operations Research, ELTE Eötvös Loránd University, Budapest, Hungary. Email: \texttt{andras.imolay@ttk.elte.hu}.}
\and
Tamás Schwarcz\thanks{Department of Mathematics, London School of Economics and Political Science, London, England, United Kingdom. Email: \texttt{t.b.schwarcz@lse.ac.uk}.} 
}
\date{}
\begin{document}
\maketitle

\thispagestyle{empty}
\begin{abstract} 
Finding a Steiner strongly $k$-arc-connected orientation is particularly relevant in network design and reliability, as it guarantees robust communication between a designated set of critical nodes. Király and Lau (FOCS 2006) introduced a rooted variant, called the Steiner Rooted Orientation problem, where one is given an undirected graph on $n$ vertices, a root vertex, and a set of $t$ terminals. The goal is to find an orientation of the graph such that the resulting directed graph is Steiner rooted $k$-arc-connected. This problem generalizes several classical connectivity results in graph theory, such as those on edge-disjoint paths and spanning-tree packings. While the maximum $k$ for which a Steiner strongly $k$-arc-connected orientation exists can be determined in polynomial time via Nash-Williams' orientation theorem, its rooted counterpart is significantly harder: the problem is NP-hard when both $k$ and $t$ are part of the input. In this work, we provide a complete understanding of the problem with respect to these two parameters. In particular, we give an algorithm that solves the problem in time $f(k,t)\cdot n^{O(1)}$, establishing fixed-parameter tractability with respect to the number of terminals $t$ and the target connectivity $k$. We further show that the problem remains NP-hard if either $k$ or $t$ is treated as part of the input, meaning that our algorithm is essentially optimal from a parameterized perspective. Importantly, our results extend far beyond the Steiner setting: the same framework applies to the more general orientation problem with local connectivity requirements, establishing fixed-parameter tractability when parameterized by the total demand and thereby covering a wide range of arc-connectivity orientation problems.
\medskip

\noindent \textbf{Keywords:} Complexity, Fixed-parameter tractability, Graph orientation, Local connectivity requirements, Steiner rooted orientation 

\end{abstract}
\newpage
\pagenumbering{roman}
\tableofcontents
\newpage
\pagenumbering{arabic}
\setcounter{page}{1}

\section{Introduction}
\label{sec:intro}

A {\it Steiner strongly $k$-arc-connected orientation} of a graph is an orientation of an undirected graph $G=(V,E)$ with a set of terminals $S \subseteq V$ such that, for every pair of terminals $u,v \in S$, there are at least $k$ pairwise arc-disjoint directed paths from $u$ to $v$. Steiner strongly $k$-arc-connected orientations are of interest in network design and reliability, where one aims to ensure robust communication or flow between a designated set of important nodes. Algorithmically, it is known that the maximum $k$ for which a graph admits a Steiner strongly $k$-arc-connected orientation can be determined in polynomial time via Nash-Williams' celebrated orientation theorem~\cite{nash1960orientations}, whose constructive proof provides an explicit method to produce such an orientation, and thus allows one to compute this maximum $k$ efficiently.

Motivated by the interest in Steiner strongly $k$-arc-connected orientations, Király and Lau~\cite{kiraly2006approximate,kiraly2008approximate} proposed to study its {\it rooted counterpart}. Formally, let $G=(V,E)$ be an undirected graph, $r\in V$ a designated root vertex, and $S\subseteq V-r$ a set of terminals.  An orientation of $G$ is called a {\it Steiner rooted $k$-arc-connected orientation} if the resulting directed graph is Steiner rooted $k$-arc-connected; that is, for every $s\in S$, there exist $k$ pairwise arc-disjoint directed paths from $r$ to $s$. The problem of finding such orientations naturally generalizes several classical connectivity results in graph theory, including Menger's theorem~\cite{menger1927allgemeinen} on edge-disjoint paths and the spanning-tree packing theorems of Tutte~\cite{tutte1961problem} and Nash-Williams~\cite{nash1961edge}. The notion of Steiner rooted $k$-arc-connected orientations is also closely related to {\it Steiner tree packings}~\cite{jain2003packing,kriesell2003edge,lau2004approximate,devos2016packing}, where the goal is to find trees connecting a set of terminal vertices. Indeed, any graph having $k$ edge-disjoint Steiner trees that span $S+r$ admits a Steiner rooted $k$-arc-connected orientation, though the reverse implication does not hold. Furthermore, Király and Lau~\cite{kiraly2006approximate,kiraly2008approximate} showed that determining whether a graph admits a Steiner rooted $k$-arc-connected orientation is NP-complete, a somewhat surprising result given that rooted variants of connectivity problems are typically easier than their non-rooted counterparts. However, they posed the following open question: {\it Is the Steiner rooted $k$-arc-connected orientation problem polynomially solvable for any fixed $k$?} Remarkably, even the case $k=2$ has remained unresolved. 

A natural generalization of this framework was considered by Frank, Király, and Király~\cite{frank2003orientation}, where the connectivity requirements between vertex pairs are not restricted to a single root and a fixed set of terminals. In the {\it $R$-orientation} problem, one is given a requirement function that assigns a nonnegative integer $R(u,v)$ to every ordered pair of vertices. The task is to orient the edges so that for each pair $(u,v)$ there exist at least $R(u,v)$ pairwise arc-disjoint directed paths from $u$ to $v$. This formulation captures the Steiner rooted $k$-arc-connected orientation problem as a special case by setting $R(r,s)=k$ for each terminal $s$ and $R(u,v)=0$ otherwise. The general {\it $R$-orientation} problem was shown to be NP-hard in~\cite{frank2003orientation} even when all requirements are at most three, yet its parameterized complexity has remained open. In particular, one may ask: {\it Is the problem fixed-parameter tractable when parameterized by the total requirement $\alpha = \sum_{(u,v)\in V\times V} R(u,v)$?}

These questions form the central motivation for our work.

\subsection{Related Work and Motivation}
\label{sec:prev}

\paragraph{Orientation problems.} While Steiner strongly $k$-arc-connected orientations and Steiner tree packings have received considerable attention, work on the rooted variants has remained rather limited. In his seminal paper~\cite{nash1960orientations}, Nash-Williams proved that every undirected graph admits an orientation such that the directed edge-connectivity between any pair of vertices is at least the floor of half the undirected edge-connectivity between them. Specifically, this implies that if the root together with the set of terminals is Steiner $2k$-edge-connected, then the graph admits a Steiner $k$-arc-connected rooted orientation. 

Király and Lau~\cite{kiraly2006approximate,kiraly2008approximate} studied the problem in the setting of undirected hypergraphs and established approximate min-max relations for the existence of Steiner rooted orientations. They showed that if the terminals are $2k$-hyperedge-connected, then a Steiner rooted $k$-hyperarc-connected orientation exists, and obtained an analogous result for graphs with element-connectivity. Their approach relies on submodular optimization and decomposition techniques and also yields constant-factor approximations for maximizing~$k$.

\paragraph{Steiner tree packings.} The Steiner Tree Packing problem is a natural common generalization of the edge-disjoint paths problem and the edge-disjoint spanning trees problem. Given a graph $G=(V,E)$ and a terminal set $S\subseteq V$, the goal is to find the maximum number of edge-disjoint Steiner trees, each connecting all vertices of $S$. When $S=\{r,v\}$, this reduces to Menger's theorem on edge-disjoint $r$-$v$ paths, and when $S=V$, it becomes the edge-disjoint spanning trees problem characterized by Tutte~\cite{tutte1961problem} and Nash-Williams~\cite{nash1961edge}. In these classical settings, the orientation and packing formulations are equivalent.

Kriesell~\cite{kriesell2003edge} conjectured that if every edge-cut separating $S$ has size at least $2k$, then $G$ contains $k$ edge-disjoint Steiner trees connecting $S$. While this conjecture remains open in general, it is known that such graphs always admit a $k$-rooted Steiner orientation for any chosen root $r\in S$, by Nash-Williams' strong orientation theorem~\cite{nash1960orientations}; however, the reverse implication does not hold in general. Hence, although the two notions are no longer equivalent, results for Steiner orientations can be viewed as relaxations of the corresponding packing statements.

\paragraph{Steiner orientations.} In the Steiner Orientation problem, we are given an undirected graph together with a collection of ordered vertex pairs, and the goal is to orient the edges so as to maximize the number of pairs $(s,t)$ for which $t$ is reachable from $s$. In this setting, 2-edge-connected subgraphs can safely be contracted as they admit a strongly connected orientation by Robbins' theorem~\cite{robbins1939theorem}. Consequently, the problem reduces to the case where the input graph is a tree, for which deciding whether there exists an orientation that connects every pair of vertices is straightforward~\cite{hassin1989orientations}.

The general maximization variant is significantly more difficult and has been the subject of extensive study, see~\cite{elberfeld2011approximability,cygan2013steiner,hakimi1997orienting}. Extensions have also been considered, including formulations on mixed graphs \cite{arkin2002note,elberfeld2013approximation,gazmu2016improved,wlodarczyk2020parameterized,cygan2013steiner,pilipczuk2018directed,hanaka2025structural,horsch2025maximum}. To the best of our knowledge, following the work of Király and Lau~\cite{kiraly2006approximate,kiraly2008approximate}, ours is the first study to address a version of this problem under higher connectivity requirements. A key additional difficulty at higher connectivity levels is that 2-edge-connected subgraphs cannot be safely contracted, and thus the instance cannot be simplified this way. 

\paragraph{Encoding complexity.} In the multicast network coding problem, a root needs to transmit $k$ packets to a set of $t$ terminals through a communication network. Each vertex may either act as a forwarding vertex, simply relaying incoming packets, or as an encoding vertex, which combines multiple incoming packets to generate new ones. Since encoding vertices require additional computational capability and introduce delay, a central objective is to minimize their number while still achieving the desired multicast capacity. Interestingly, this problem is closely related to Steiner rooted $k$-arc-connected orientations, where the encoding vertices correspond to vertices of in- or out-degree at least 2. If we aim to bound the number of vertices in an undirected graph of minimum degree at least 3 that admits a Steiner rooted $k$-arc-connected orientation but loses this property after the deletion of any edge, then bounding the number of encoding vertices plays a role similar to bounding the number of vertices in such a minimal instance. The connection is not direct, however, since in the former case the orientation is fixed, whereas in our setting the orientation itself must be found. 

Fragouli, Soljanin, and Shokrollahi~\cite{fragouli2004network} and Fragouli and Soljanin~\cite{fragouli2006information} considered the case when the network is acyclic and $k=2$, and showed that in this special case the number of encoding vertices is bounded by the number $t$ of terminals. Tavory, Feder, and Ron~\cite{feder2003bounds} obtained similar partial results and conjectured that, in acyclic networks, the number of encoding vertices depends only on the number of packets and terminals, but not on the network size. The conjecture was settled by Langberg, Sprintson, and Bruck~\cite{langberg2006encoding}, who proved that in acyclic networks the number of encoding vertices required to achieve full capacity can be bounded by $O(k^3\cdot t^2)$ and provided explicit constructions attaining this bound. They also presented examples requiring $\Omega(k^2\cdot t)$ encoding vertices. Closing the gap between the upper bound of $O(k^3\cdot t^2)$ and the lower bound of $\Omega(k^2\cdot t)$ remains an intriguing open problem. For general networks that may contain directed cycles, they showed that the minimum number of encoding vertices can be bounded in terms of the size of a minimum feedback arc set, and that determining or approximating this number is NP-hard. Xu and Han~\cite{xu2015minimum} later studied a related problem, aiming to bound the number of high-degree non-terminal vertices, called hubs, required to satisfy multiple flow demands between distinct source-sink pairs. They proved that the minimum number of such hubs is bounded independently of the network size; however, in their model, all edges not incident to any terminal are bidirected.

\subsection{Our Results}
\label{sec:our}

The hardness of determining the maximum $k$ for which a Steiner rooted $k$-arc-connected orientation exists, together with the open question of Király and Lau on the polynomial-time solvability for fixed $k$, motivates a parameterized complexity approach to the problem. This direction is further supported by the general $R$-orientation problem, which is NP-hard even when all connectivity requirements are at most three, and for which the parameterized complexity with respect to the total requirement has remained open. 

In the parameterized framework, besides the input size $n$, a list of integer parameters $k_1,\ldots,k_q$ is considered, and the running time of an algorithm is analyzed as a function of both $n$ and these parameters. The main goal is to design fixed-parameter algorithms, that is, algorithms running in time $f(k_1,\ldots,k_q)\cdot n^{O(1)}$ for some computable function $f$. Although the function $f$ can grow quickly, such algorithms provide a precise understanding of how the computational difficulty depends on the parameters and are often the right notion of tractability when polynomial-time solvability is out of reach.

Our main contribution is a full characterization of the parameterized complexity of the Steiner rooted $k$-arc-connected orientation problem with respect to two key parameters: the target connectivity $k$ and the number of terminals $t$. For ease of discussion, we denote the problem by \ktori when both $k$ and $t$ are fixed. Formally, given an undirected graph $G=(V,E)$, a specified root vertex $r\in V$, and a subset of vertices $S\subseteq V-r$ of size $t$, the goal is to find an orientation of $G$ that is rooted $k$-arc-connected from $r$ to every vertex in $S$. When either $k$ or $t$ is not fixed, we omit it from the notation. That is, \tori, \kori, and \ori refer to the variants where $k$, $t$, or both $k$ and $t$ are part of the input, respectively. 

Throughout the paper, we assume without loss of generality that the input graph contains at most $2k$ parallel edges between any pair of vertices. Indeed, in any Steiner rooted $k$-arc-connected directed graph, if more than $k$ arcs go in the same direction between two vertices, removing the excess arcs preserves Steiner rooted $k$-arc-connectivity. Therefore, the total number of edges is polynomial in $k$ and the number $n$ of vertices.

Our first main theorem provides a positive answer to the natural open question for fixed $k$ and $t$, showing that \kori can indeed be solved efficiently in a parameterized sense.

\begin{restatable}{thm}{mainthm} \label{thm:main1}
    \ktori can be solved in time $f(k,t)\cdot n^{O(1)}$.
\end{restatable}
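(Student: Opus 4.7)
The plan is to reduce the problem to a ``kernel'' instance whose vertex count is bounded by a function of $k$ and $t$, and then solve the kernel by brute force. I would apply local reduction rules exhaustively, then prove a structural size bound for irreducible instances, and finally enumerate orientations on the kernel.

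The main reduction rules are as follows. First, any vertex $v \notin S+r$ of degree $2$ is suppressed: in any feasible orientation its two incident edges must be oppositely oriented (one entering, one leaving $v$), so suppressing $v$ and replacing the two edges by a single edge between its neighbors preserves feasibility. Second, the multiplicity of every edge is capped at $2k$, using the observation stated in the excerpt. Third, tight min-cuts are exploited: if some $X \subseteq V-r$ meets $S$ and satisfies $d_G(X)=k$, then every edge of $\delta(X)$ is forced to be oriented into $X$ in any feasible orientation, and these forced orientations can trigger further simplifications (for instance, creating new low-degree non-special vertices that can be suppressed). Finally, pendant subgraphs that contain no vertex of $S+r$ and are attached to the rest by at most $k$ edges can be handled independently via a Nash-Williams type orientation of their interior and then contracted.

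The crux of the proof is a structural theorem: every instance that is irreducible under the above rules has at most $f(k,t)$ vertices. After suppression, every non-special vertex has degree at least $3$, and a branching-type argument based on the union of the $kt$ required $r$-to-terminal paths should bound the number of such ``encoding'' vertices in terms of $k$ and $t$. This is analogous in spirit to the encoding complexity bound of Langberg, Sprintson, and Bruck discussed in the introduction for acyclic networks; the main difficulty is to extend such a bound to graphs that may contain cycles and to orientations that are not yet fixed. Once the kernel has size at most $f(k,t)$, the multiplicity cap gives $|E|\le 2k\binom{f(k,t)}{2}$, so one can enumerate all $2^{|E|}$ orientations of the kernel and verify each by $t$ max-flow computations of value $k$ from $r$. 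Combined with the polynomial time spent applying the reductions, this yields the claimed $f(k,t)\cdot n^{O(1)}$ running time.

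The main obstacle is the structural size bound. It requires choosing the reduction rules carefully enough that ``no reduction applies'' is a genuinely strong hypothesis, and then converting that hypothesis into a quantitative bound on the number of vertices of degree $\ge 3$. A likely approach is a win-win argument: either one of the reductions above (or a slightly more refined variant, such as contracting a subgraph that is $k$-edge-connected to the rest through a single special vertex) can be triggered, or the graph already decomposes into a bounded number of ``path-like'' pieces between the $t+1$ special vertices and the encoding vertices, each contributing only $O(k)$ parallel arcs.
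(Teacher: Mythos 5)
Your plan is a kernelization: apply local reduction rules until none fire, then argue that the residual instance has size $f(k,t)$ and solve it by brute force. This is a genuinely different strategy from the paper, and unfortunately it has a fundamental gap at the crux you yourself identify.

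The problem is that your claimed structural theorem (``every instance irreducible under the above rules has at most $f(k,t)$ vertices'') is almost certainly false for any set of local rules of the kind you describe. Consider a large $3$-regular graph of girth and connectivity comfortably above $k$ to which you attach $r$ and the terminals each by $k$ edges: this can be a yes-instance (via Nash--Williams) with arbitrarily many vertices, yet none of your rules fire -- there are no degree-$2$ vertices to suppress, multiplicities are already $1$, there need not be any tight cut, and there is no small pendant piece to contract. The size of a yes- (or no-) instance simply is not bounded by $k$ and $t$, so any attempt to kernelize down to $f(k,t)$ vertices and enumerate all $2^{|E|}$ orientations cannot work as stated. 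What can be bounded, and what the paper actually bounds, is the size of the \emph{minimal} $3$-regular yes-instances, i.e.\ those whose feasibility is destroyed by deleting any single edge. The algorithm then does not shrink the input at all: after a simple preprocessing to make the input $3$-regular with the root and terminals of degree $k$ (your degree-$2$ suppression and multiplicity cap appear here, in Lemmas~\ref{lem:degk} and~\ref{lem:deg3}), it enumerates the finitely many minimal patterns and tests whether any one of them is an $(S+r)$-fixed topological minor of the input, using the FPT topological minor algorithm of Fomin, Lokshtanov, Panolan, Saurabh, and Zehavi. Lemma~\ref{lem:ortopmin} is the bridge: a $3$-regular instance is a yes-instance iff it contains such a minimal pattern as a fixed topological minor.

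A second, lesser gap is that even the bound on minimal instances is much harder to establish than your ``branching-type argument'' suggests. The paper's Proposition~\ref{prop:bound} is proved by induction on $k$ and splits into a chain: either the minimum feedback arc set of a feasible orientation is small, in which case a Langberg--Sprintson--Bruck-style argument (Propositions~\ref{prop:directed graph} and~\ref{prop:fac}) bounds the size directly; or it is large, in which case the directed Erdős--Pósa theorem yields many vertex-disjoint directed cycles, and then the sunflower lemma, Ramsey's theorem, and Erdős--Szekeres are used to extract a long sequence of cycles consistently ordered with respect to tight cuts for every terminal (Propositions~\ref{prop:order} and~\ref{prop:moreterminalorder}), after which a three-case analysis (Lemmas~\ref{lem:case1}--\ref{lem:case3}) produces a contradiction with minimality. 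None of this is a win-win over local reductions; it is an argument about feasible orientations and tight cuts. Your intuition that the Langberg--Sprintson--Bruck encoding bound is relevant is correct, and that their acyclic setting is the real obstacle is also correct, but the resolution requires the extremal-combinatorics machinery above rather than a sharpening of reduction rules. If you want to salvage your approach, the right target to bound is not the instance itself but the set of minimal yes-instances, and the right algorithmic primitive is fixed topological minor testing rather than kernelization plus brute force.
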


The proof is carried out in multiple steps and relies on a careful combination of fundamental results from combinatorial optimization and extremal combinatorics. Key ingredients include the Erdős-Pósa property for directed cycles, established by Reed, Robertson, Seymour, and Thomas~\cite{reed1996packing}; Ramsey's theorem on monochromatic cliques~\cite{Ramsey1930}; the sunflower lemma of Erdős and Rado~\cite{erdos1960intersection}; the Erdős-Szekeres theorem on monotone subsequences~\cite{erdos1935combinatorial}; and the fixed-parameter algorithms of Fomin, Lokshtanov, Panolan, Saurabh, and Zehavi~\cite{fomin2020hitting} for finding topological minors. It is truly remarkable that these classical and seemingly unrelated results can be combined seamlessly, ultimately leading to an efficient algorithm for the problem.

An important and somewhat unexpected consequence of our main result is that the orientation problem with local connectivity requirements is also fixed-parameter tractable when parameterized by the total requirement. In other words, our framework not only resolves the rooted Steiner case but simultaneously yields tractability for the more general setting of arbitrary pairwise connectivity demands. We denote the general problem by \rori, and by \roria when it is parameterized by the total demand $\alpha$.

\begin{restatable}{cor}{maincor}\label{cor:main}
    \roria can be solved in time $g(\alpha)\cdot n^{O(1)}$.
\end{restatable}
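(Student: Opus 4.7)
The plan is to prove \cref{cor:main} by a polynomial-time reduction from \roria to \ktori. Given an instance $(G, R)$ of \roria with total demand $\alpha = \sum_{(u,v)} R(u,v)$, I would construct an auxiliary graph $G^*$, a new super-root $r^*$, a terminal set $S^*$, and an integer $k^*$, such that $|S^*|$ and $k^*$ are each bounded by $O(\alpha)$, and such that $G^*$ admits a Steiner rooted $k^*$-arc-connected orientation if and only if $G$ admits an $R$-orientation. Invoking \cref{thm:main1} on $(G^*, r^*, S^*, k^*)$ then yields the desired running time $g(\alpha) \cdot n^{O(1)}$.

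The intended construction attaches, for every pair $(u,v)$ with $R(u,v) = d > 0$, a small gadget encoding the requirement of $d$ arc-disjoint $u \to v$ paths inside $G$. A natural candidate introduces a terminal $s_{u,v}$ together with bundles of parallel edges: $d$ edges $v s_{u,v}$ forcing exactly $d$ of the $k^*$ required $r^*$-$s_{u,v}$ paths to enter $s_{u,v}$ via $v$; $k^* - d$ edges $r^* s_{u,v}$ absorbing the remaining paths directly; and a private auxiliary vertex $p_{u,v}$ adjacent only to $r^*$ and $u$, each with multiplicity $d$, serving as a \emph{source gate} through which the $d$ indirect paths are meant to enter $G$ at $u$. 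Setting $k^* = \alpha$ and aggregating over all demands makes the total edge capacity between $r^*$ and $V(G)$ exactly $\alpha$, matching the total indirect demand; this is the cleanest choice for the counting argument below.

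The forward direction of the equivalence is routine: starting from an $R$-orientation of $G$, I would orient every gadget edge uniformly (away from $r^*$ and towards $s_{u,v}$) and assemble the $d$ indirect $r^*\to s_{u,v}$ paths from the $d$ arc-disjoint $u\to v$ paths inside $G$. The reverse direction is the real content of the argument, and is where I expect the main obstacle to lie: one must show that any Steiner rooted $k^*$-arc-connected orientation of $G^*$ induces a valid $R$-orientation of $G$. The difficulty is that, a priori, the $d$ indirect paths for demand $(u,v)$ may enter $G$ through private vertices $p_{u',v'}$ associated with other demands, witnessing $u'\to v$ connectivity rather than the required $u\to v$ connectivity. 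Handling this is likely to require a more refined gadget — for instance, replacing each private vertex by a small identity subgraph whose structure, combined with the tightness at $r^*$ and arc-disjointness within each terminal, forces the demand-to-source correspondence via a Hall-type matching or flow-decomposition argument — or a global rerouting argument that exchanges segments between different demands' path families so that each demand eventually uses its intended gate. Once the equivalence is established, the proof concludes by invoking \cref{thm:main1} with parameters $k^* = O(\alpha)$ and $|S^*| = O(\alpha)$.
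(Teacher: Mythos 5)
Your high-level strategy matches the paper's: reduce \roria{} to a \ktori{} instance with $k$ and $t$ both $O(\alpha)$, then invoke \cref{thm:main1}. Your forward direction is also fine. The problem is the reverse direction, which you correctly flag as the real content and then leave unresolved. That gap is genuine, and your proposed gadget does not close it.

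Concretely, with your gadget, fix a demand $(u,v)$ with $d = R(u,v) > 0$ and a $u$-$v$ cut $X$ in $G$ with $u\in X$, $v\notin X$. The natural $r^*$-$s_{u,v}$ cut to examine is $Z = X \cup \{r^*\} \cup \{p_{a,b} : (a,b)\in\cP\} \cup \{s_{a,b} : (a,b)\neq(u,v)\}$. Its out-degree is $d^{out}_{\dG}(X) + (\alpha - d) + \bigl|\{p_{a,b}\to a : a\notin X,\ \text{oriented outward}\}\bigr|$. Since $r^*$ is \emph{not} a tight vertex in your construction (its degree is $\sum_{(u,v)} R(u,v) + \sum_{(u,v)}(\alpha - R(u,v)) = \alpha|\cP|$, far exceeding $\alpha$), and since the vertices $p_{a,b}$ are not tight either, the third summand can be positive: nothing forces the edges $r^*\!\sim p_{a,b}$ with $a\notin X$ to be oriented toward $r^*$. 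So the inequality $d^{out}_{\dGp}(Z)\geq\alpha$ only yields $d^{out}_{\dG}(X)\geq d - (\text{leakage through foreign gates})$, not $d^{out}_{\dG}(X)\geq d$. This is exactly the cross-talk you worried about, and the hand-waved fixes (Hall-type matching, rerouting) are not worked out; it is not clear they can be made to work with this gadget.

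The paper resolves this with a different, tighter gadget: for each $(u,v)\in\cP$ it adds a local source $r_{(u,v)}$ joined to $u$ by $R(u,v)$ parallel edges and a local sink $s_{(u,v)}$ joined to $v$ by $R(u,v)$ parallel edges; a global root $r$ joined to each $r_{(u,v)}$ by $R(u,v)$ parallel edges; and — crucially — $R(u,v)$ parallel edges from $r_{(u,v)}$ to \emph{every other} sink $s_{(x,y)}$ with $(x,y)\neq(u,v)$. This makes $r$ and every $s_{(x,y)}$ tight (degree exactly $\alpha$), so all their incident edges are forced. The absorber paths for $s_{(u,v)}$ are routed $r\to r_{(a,b)}\to s_{(u,v)}$ rather than directly from the root, and this is what makes the accounting close: taking $Z = X \cup \{r\} \cup \{r_{(a,b)}:a\in X\} \cup \{s_{(a,b)}:(a,b)\neq(u,v)\}$, every cut arc other than those in $\delta^{out}_{\dG}(X)$ is either an $r\to r_{(a,b)}$ arc with $a\notin X$ or an $r_{(a,b)}\to s_{(u,v)}$ arc with $a\in X$, $(a,b)\neq(u,v)$, and these two families together contribute exactly $\sum_{(a,b)\neq(u,v)} R(a,b) = \alpha - R(u,v)$. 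Hence $d^{out}_{\dG}(X) + (\alpha - R(u,v)) \geq \alpha$, giving $d^{out}_{\dG}(X)\geq R(u,v)$ with no slack. Replacing your direct $r^*\to s_{u,v}$ absorber edges with these cross-connections through $r_{(u,v)}$ is the missing idea.
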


Despite our positive result on parameterized complexity, the rooted problem remains computationally challenging when only one of the parameters is fixed. In particular, we show the following hardness result for fixed $k$.

\begin{restatable}{thm}{kfix} \label{kfix}
    For any fixed $k \geq 2$, \kori is NP-hard.
\end{restatable}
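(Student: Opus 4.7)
The plan is to establish NP-hardness of \kori for every fixed $k \geq 2$ by polynomial-time reductions from known NP-hard problems. Since the terminal count is unbounded, we have full freedom to introduce many terminals as witnesses for combinatorial constraints in the source problem. The natural strategy is to handle the base case $k = 2$ by a direct reduction, and then propagate the hardness to all $k \geq 3$ by an inductive padding argument.

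For the base case \twoori, the idea is to reduce from a well-chosen NP-hard problem whose binary structure aligns with edge orientations, such as NAE-3-SAT, 3-Coloring, or a restricted edge-disjoint paths problem. Binary choices (truth assignments or colors) are encoded as orientations of dedicated ``variable'' edges, and each combinatorial constraint is represented by a terminal $s \in S$ whose requirement of two arc-disjoint $r$-$s$ paths enforces the constraint. By Menger's theorem, this amounts to a cut condition on the oriented graph; careful gadget design (using parallel edges and local forcing subgraphs) should yield a bijection between satisfying solutions of the source problem and valid 2-orientations of the constructed instance.

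For the inductive step, assume the Steiner Rooted $(k-1)$-Orientation problem (with variable $t$) is NP-hard and reduce it to \kori. Given $(G, r, S)$ with parameter $k-1$, we augment $G$ by a ``padding'' gadget contributing exactly one extra arc-disjoint $r$-to-$s$ path for every $s \in S$, producing an equivalent instance with parameter $k$. A natural candidate is a new vertex $z$ attached by a single edge to $r$ and by single edges to each $s$, with the orientation $r \to z$ and $z \to s$ providing the extra path. The forward direction is straightforward: every $(k-1)$-orientation of $G$ extends to a $k$-orientation of the augmented graph. The backward direction aims to show that any $k$-orientation of the augmented graph restricts to a $(k-1)$-orientation of $G$, because at most one arc-disjoint $r$-$s$ path can traverse the gadget.

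The main obstacle is precisely this backward direction: paths in the augmented graph from $r$ to $s$ may traverse the padding gadget in unintended ways, for instance by entering $z$ via the edge from another terminal $s'$ and then reaching $s$ through a detour inside $G$. A naive single-hub design admits up to $|S|$ arc-disjoint paths through $z$ for the same terminal, breaking the intended count. To overcome this, the gadget must be designed so that the maximum number of arc-disjoint $r$-to-$s$ paths through it is exactly one: possibilities include using a separate pendant vertex $z_s$ per terminal (with edges only to $r$ and $s$), choosing edge multiplicities so that the cut-capacity of the gadget to any terminal is exactly one, or combining the gadget with direction-forcing subgraphs. Verifying the precise cut condition that guarantees the bijection between $(k-1)$-orientations of $G$ and $k$-orientations of $G'$ is the delicate technical step on which the induction hinges; alternatively, one may bypass the induction altogether by designing, for each fixed $k \geq 2$, a single direct reduction from an NP-hard source problem whose construction depends polynomially on $k$.
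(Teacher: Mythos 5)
Your high-level plan matches the paper's: establish NP-hardness for $k=2$ by reduction from a SAT variant (the paper uses Monotone NAE-3-SAT), then lift to larger $k$ via a padding gadget. You have also correctly identified the central difficulty in the lifting step --- a hub vertex adjacent to $r$ and all terminals can in principle be traversed by multiple arc-disjoint paths to the \emph{same} terminal $s$, e.g.\ via $r \to z \to s' \to \cdots \to s$, so the naive gadget does not guarantee that restricting a $k$-orientation to $G$ leaves $k-1$ paths. However, the fixes you float do not actually resolve this, and you stop short of the paper's actual mechanism. Separate pendants $z_s$ do not help: nothing prevents a $k$-orientation of $G'$ from routing several of the $k$ paths to $s$ through other pendants $z_{s'}$ (taking $r\,z_{s'}\,s'$ and then continuing through $G$), leaving fewer than $k-1$ paths in the restriction.

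The paper's solution is to bake a degree constraint into the $k=2$ reduction itself: \Cref{serdgh} proves that \twoori is NP-hard even on instances satisfying $d_G(S)+|E(G[S])|=2|S|$. Then, for $k\ge 3$, the padding gadget is a \emph{single} set $W$ of $k-2$ hubs, each adjacent to every vertex of $S+r$ (not one per terminal, and done in one shot rather than inductively from $k-1$ to $k$). In the resulting $G'$ the identity $d_{G'}(S)+|E(G'[S])|=k|S|$ holds, so the chain
\[
k|S|\;\le\;\sum_{s\in S}\lambda_{\dGp}(r,s)\;\le\;\sum_{s\in S} d_{\dGp}^{in}(s)\;\le\; d_{G'}(S)+|E(G'[S])|\;=\;k|S|
\]
is tight, forcing every edge between $W$ and $V(G)-r$ to be oriented away from $W$. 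This is exactly what rules out the unintended $r\to z\to s'\to\cdots\to s$ traversals you were worried about: once an $r$-$s$ path uses an arc of $A(\dG)$, it can never return to $W$, and since $d_{G'}(r,W)=k-2$, at least two of the $k$ arc-disjoint $r$-$s$ paths must live entirely in $\dG$. You should therefore regard the technical restriction on the $k=2$ instances not as an incidental strengthening but as the load-bearing ingredient that makes the padding step go through; without it (or a comparable structural condition), neither the single-hub nor the per-terminal-pendant gadget yields a correct backward implication.
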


Similarly, fixing the number of terminals $t$ does not make the problem tractable. The following theorem establishes NP-hardness for any fixed $t \ge 4$. Compared to the case of fixed $k$, the proof is considerably more intricate, thus we present it in multiple stages.

\begin{restatable}{thm}{tfix} \label{tfix}
    For any fixed $t \geq 4$, \tori is NP-hard.
\end{restatable}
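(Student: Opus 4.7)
The plan is to reduce a classical NP-hard problem to \fourori, since NP-hardness for all fixed $t\geq 4$ then follows by a simple padding argument. To manage the complexity of the gadget design, I would route the reduction through an intermediate problem \mthreeori — a variant of the 3-terminal Steiner Rooted Orientation problem carrying some auxiliary structural constraint (for instance, a set of pre-oriented arcs, a degree pattern at a distinguished vertex, or a lower-bound requirement on a specific cut) that makes it easier to serve as a target for a gadget-based reduction from a classical NP-hard problem.

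The first and easiest step is a padding argument: given an instance $(G,r,S,k)$ of \fourori, I would attach $t-4$ fresh terminals, each joined to $r$ by $k$ parallel undirected edges. In every valid orientation all such parallel edges must be directed from $r$ to the new terminal — this provides $k$ arc-disjoint paths trivially and does not interact with the rest of the graph — so NP-hardness of \fourori propagates to \tori for every fixed $t\geq 4$.

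The second step is a reduction from \mthreeori to \fourori. Here, the fourth terminal $s_4$ is used to ``enforce the modification'' through a purpose-built gadget: the connectivity requirement between $r$ and $s_4$ is engineered so that $k$ arc-disjoint $r$-to-$s_4$ paths exist in an orientation if and only if that orientation also respects the extra constraint defining \mthreeori. This converts a structurally modified 3-terminal instance into a pure 4-terminal instance without any auxiliary features. The key to making this work is to choose parallel-edge multiplicities in the gadget so that the ``easy'' paths to $s_4$ exhaust precisely the arc budget that would be available to violate the modification.

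The third and most delicate step is to reduce a classical NP-hard problem — a natural candidate being an integer 2-commodity flow variant in undirected graphs, or a SAT-flavored problem such as NAE-3-SAT — to \mthreeori. Since $k$ is part of the input, one can freely exploit large connectivity requirements to create capacity bottlenecks that rigidly determine orientations in gadgets: variable gadgets encode a binary orientation choice, and clause or routing gadgets translate local combinatorial constraints into connectivity requirements along three fixed root-terminal pairs. The ``modified'' feature of \mthreeori provides the extra expressive power needed to compress the encoding down to only three terminals plus a root.

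The main obstacle is this last step: with just three terminals and a root, the number of structurally independent connectivity requirements is severely limited, so correctness of the reduction hinges on a careful interaction between parallel-edge multiplicities, gadget topology, and the choice of $k$, engineered so that no spurious orientation can satisfy all the Steiner rooted $k$-arc-connectivity demands without implicitly providing a solution to the source NP-hard instance. Avoiding such ``loopholes'' — and doing so while staying within the expressive limits of only three terminals — is what forces the proof to proceed in multiple stages rather than via a single direct reduction.
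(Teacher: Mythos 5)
Your high-level roadmap matches the paper exactly: pad a \fourori instance up to $t$ terminals; reduce \mthreeori to \fourori using a fourth terminal $s^*$ whose connectivity demand rigidly enforces the pre-orientation constraint; and reduce a source NP-hard problem to \mthreeori. The paper's \mthreeori is indeed the variant with a distinguished set $Y$ whose boundary edges must be oriented away from $Y$, and the gadget for step two does exactly what you describe: each pre-oriented edge $e=yy'$ with $y\in Y$ is subdivided into $y\,a_e\,b_e\,y'$, with $r\,a_e$ and $b_e\,s^*$ edges and a demand $k'=k+d_G(Y)$ at $s^*$ that forces every $a_e b_e$ arc to point from $a_e$ to $b_e$, while extra $a_e\,s'_i$ edges keep the terminal demands balanced.

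The genuine gap is in your third step. You wave at NAE-3-SAT or a 2-commodity-flow problem as the source, but neither reduces cleanly to a problem with only a root and three terminals: in the fixed-$k$ hardness proof, NAE-3-SAT uses a separate terminal vertex for each clause occurrence, which is exactly what you cannot afford here. The missing idea is that one needs an intermediate SAT variant that is itself structured around three ``channels.'' The paper introduces \textsc{3-COL-MAX-2-SAT}: a \textsc{MAX-2-SAT} instance equipped with a proper 3-coloring of the clause-intersection graph (clauses of the same color share no variables), plus a per-color satisfaction threshold. This structure is what makes a single terminal $s_i$ suffice to audit an entire color class: because same-colored clauses are variable-disjoint, their clause gadgets are arc-disjoint, so a single $r$-$s_i$ cut can count how many of them are satisfied, and the per-color threshold $k_i$ is encoded by adding $k-(2|\mathcal{C}_i|+k_i)$ parallel $r\,s_i$ edges. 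Proving that \textsc{3-COL-MAX-2-SAT} is NP-hard (via a variable-splitting reduction from \textsc{MAX-2-SAT} that produces even cycles in the clause-intersection graph, which can always be 2-colored, leaving the original clauses for the third color) is itself a nontrivial piece of the argument. Without this tailored source problem, the ``three terminals plus a root'' bottleneck you flag as the main obstacle remains unresolved in your proposal, so the third step as written would not go through.
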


Since local connectivity requirements generalize the rooted setting, these hardness results naturally extend to \rori as well.

\subsection{Overview of Techniques}
\label{sec:techniques}

In what follows, we give a high-level overview and a roadmap for the proofs of our main results, Theorem~\ref{thm:main1}, Corollary~\ref{cor:main}, Theorem~\ref{kfix} and Theorem~\ref{tfix}, describing the key ideas and the structure of the arguments.

\subsubsection{Algorithmic Result}

The proof of \cref{thm:main1} builds on the observation that for fixed $k$ and $t$, there are only finitely many feasible instances that are minimal under edge deletions. The theorem follows by generating all such minimal instances and testing whether the given instance contains one of them as a topological minor. Proving the finiteness bound is, however, far from straightforward and requires several technical tools. Corollary~\ref{cor:main} then follows from \cref{thm:main1} by a simple construction.

\paragraph{Bounding the degrees.}

We first reduce the problem to the case where the root and the terminals have degree exactly $k$, while all remaining vertices have degree $3$. For ease of discussion, we refer to such instances as {\it 3-regular}. Achieving the degree constraints for the root and the terminals can be done in a straightforward manner by adding new vertices and parallel edges (Lemma~\ref{lem:degk}); however, ensuring 3-regularity requires a more careful construction using binary trees (Lemma~\ref{lem:deg3}).

\paragraph{Fixed topological minors.}

We provide an overview of basic tools for studying $W$-fixed topological minors in both graphs and directed graphs. We prove a characterization of an undirected or directed graph being a topological minor of another with some subset of vertices being fixed (\cref{serdzftuvbhjnik}). This allows us to connect feasible orientations to fixed topological minors by proving that a $3$-regular instance is a yes-instance precisely when it contains a minimal $3$-regular yes-instance as a fixed topological minor (Lemmas \ref{lem:orintedtop} and~\ref{lem:ortopmin}). We also establish a structural lemma showing that certain vertices must appear in every fixed minor under tight-cut conditions (\cref{lem:essentialvertex}). 

\paragraph{Feasible orientations with bounded feedback arc set.}

We next consider minimal 3-regular instances. An orientation is {\it feasible} if it is Steiner rooted $k$-arc-connected, and a graph admitting a feasible orientation is {\it minimal} if the deletion of any edge destroys all feasible orientations. We show that for any feasible orientation of a minimal 3-regular instance, if the minimum size of a feedback arc set is small (Proposition~\ref{prop:fac}), then the number of vertices in the graph can be bounded. Otherwise, the Erdős-Pósa property for directed cycles implies that we can find a large collection of pairwise vertex-disjoint directed cycles. Some of these results appeared explicitly or implicitly in the work of Langberg, Sprintson, and Bruck~\cite{langberg2006encoding}, but due to the different setting, we provide full proofs again.

\paragraph{Structure of directed cycles.}

Given a Steiner rooted $k$-arc-connected directed graph containing a large collection of pairwise vertex-disjoint directed cycles, we analyze the structure of these cycles. For each terminal $s$, we consider the family of tight $s$-cuts separating $s$ from the root, and, by applying the sunflower lemma of Erdős and Rado~\cite{erdos1960intersection} and Ramsey's theorem~\cite{Ramsey1930}, identify a subcollection of cycles that captures the structure of these cuts (Proposition~\ref{prop:order}). Using a result of Erdős and Szekeres~\cite{erdos1935combinatorial} on monotone subsequences, we then find an ordered sequence $(C_1,\ldots,C_q)$ of directed cycles such that, for every terminal $s$, either $(C_1,\ldots,C_q)$ or $(C_q,\ldots,C_1)$ forms an $s$-ordered collection, meaning that the cycles appear in a consistent nested order with respect to the corresponding tight $s$-cuts (Proposition~\ref{prop:moreterminalorder}). 

\paragraph{Inductive argument.}

To bound the number of vertices in a minimal $3$-regular instance, we analyze how directed cycles can appear in a feasible orientation of such graphs. The key idea is to study how the terminals interact through collections of vertex-disjoint cycles and the cuts intersecting those. Depending on how these cycles and cuts are arranged, we distinguish three scenarios. In the first case, certain cuts are sparse enough to allow the construction of a smaller equivalent instance, contradicting minimality (Lemma~\ref{lem:case1}). In the second case, all terminals behave in a similar way with respect to the cycle structure, which again leads to a redundancy that violates minimality (Lemma~\ref{lem:case2}). In the third case, the terminals split into two groups that relate to the cycle structure in opposite ways, and the corresponding configuration gives rise to a directed cycle that cannot occur in a minimal instance (Lemma~\ref{lem:case3}). In all cases, a contradiction arises whenever the instance is large enough, implying that the number of vertices in a minimal $3$-regular instance is bounded by a constant depending only on $k$ and $t$.

\paragraph{Putting everything together.}

Finally, we apply the bound on minimal $3$-regular instances within the framework of fixed topological minors to conclude the proof of the main result. We first show that, for fixed $k$ and $t$, all non-isomorphic minimal $3$-regular instances can be enumerated efficiently (Lemma~\ref{lem:enumtopo}). Using the result of Fomin, Lokshtanov, Panolan, Saurabh, and Zehavi~\cite{fomin2020hitting} on testing fixed topological minors, we then check for each minimal pattern whether it occurs in the given instance. Since the number and size of these patterns depend only on $k$ and $t$, this yields an algorithm with running time $f(k,t)\cdot n^{O(1)}$, completing the proof of \Cref{thm:main1}.

\subsubsection{Hardness Results}

\paragraph{Hardness for fixed $k$.}

The hardness proof for \Cref{kfix} is as follows. The main technical difficulty lies in showing NP-hardness of \twoori under an additional technical condition on the instance (Lemma~\ref{serdgh}); the general case for $k \ge 3$ then follows in a straightforward way. The proof uses a reduction from \textsc{Monotone Not-All-Equal 3-Satisfiability}. Given a formula, we construct a corresponding instance of \twoori that captures the structure of the clauses and variables. The reduction ensures that satisfying assignments of the formula correspond precisely to Steiner rooted $2$-orientations in the constructed graph.

\paragraph{Hardness for fixed $t$.}

Due to the increased difficulty, the proof of the NP-hardness of \tori for any $t \ge 4$ is divided into three parts. In the first part, we introduce a variant of the satisfiability problem that is particularly convenient for our reduction. In \textsc{3-COL-MAX-2-SAT}, the input is a \textsc{MAX-2-SAT} instance together with a 3-coloring of the clauses, such that clauses of the same color do not share variables, and for each color class, a prescribed number of clauses must be satisfied. We show that \textsc{3-COL-MAX-2-SAT} is NP-hard (Lemma~\ref{3colhard}).

Next, we study \mthreeori, a variant of \threeori in which a highly structured subset of edges is required to be preoriented. We show that \mthreeori is NP-hard (Lemma~\ref{etxdrcdzfvzugbu}). The construction again relies on variable and clause gadgets similar to those used in the proof of \Cref{serdgh}, but here we use only one terminal per color class of clauses instead of separate terminals for each clause.

Finally, we prove the hardness of \fourori by a reduction from \mthreeori. We replace the preoriented edges with undirected ones and introduce additional gadgets to enforce the desired orientations. To achieve this, we add one extra terminal and connect it to the relevant edges in a way that forces their direction. The result then implies \Cref{tfix}.

\subsection{Basic Notation}
\label{sec:notation}

We give the basic definitions and notation here; all additional terminology is defined when first needed. 

We denote the sets of \textit{nonnegative} and \textit{positive integers} by $\bZ_{\geq 0}$ and $\bZ_+$, respectively. For $i,j\in\bZ_+$ with $i\le j$, we use $[i,j]\coloneqq\{i,\ldots,j\}$, and write $[j]$ when $i=1$. Given a ground set $V$, the \emph{symmetric difference} of $X,Y\subseteq V$ is denoted by $X\triangle Y\coloneqq (X\setminus Y)\cup(Y\setminus X)$. If $Y$ consists of a single element $y$, then $X\setminus \{y\}$ and $X\cup \{y\}$ are abbreviated as $X-y$ and $X+y$, respectively. 

All graphs and directed graphs considered are loopless but may contain parallel edges. For a graph $G=(V,E)$ with vertex set $V$ and edge set $E$, we write $\delta_G(X)$ for the set of edges having exactly one end vertex in $X$. For a vertex $v\in V$, its {\it degree} is $d_G(v)=|\delta_G(v)|$. In the directed case, we write $\delta_G^{in}(X)$ and $\delta_G^{out}(X)$ for the sets of arcs entering and leaving $X$, respectively, and $d_G^{in}(X)=|\delta_G^{in}(X)|$ and $d_G^{out}(X)=|\delta_G^{out}(X)|$ for the {\it in-degree} and {\it out-degree} of $X$. The set of edges {\it induced} by $X$ is denoted by $E[X]$. For a subset $F\subseteq E$, we write $V(F)$ for the set of vertices incident with edges in $F$. For a directed path $P$ and vertices $u,v\in V(P)$ with $u$ preceding $v$ along $P$, we denote by $P[u,v]$ the {\it subpath} of $P$ from $u$ to $v$. If $P$ and $Q$ are directed paths such that the last vertex of $P$ coincides with the first vertex of $Q$, then $P\circ Q$ denotes their {\it concatenation}, that is, the directed walk obtained by first taking the arcs of $P$ and then those of $Q$. For two vertices $u,v\in V$, the {\it maximum number of pairwise arc-disjoint directed $u$-$v$ paths} is denoted by $\lambda_{G}(u,v)$. For $F\subseteq E$ and $X\subseteq V$, we write $G-F$ and $G-X$ for the graphs or directed graphs obtained by deleting $F$ or $X$, respectively. In a directed graph, a {\it feedback arc set} or a {\it feedback vertex set} is a subset of arcs or vertices that intersects every directed cycle, or equivalently, whose deletion results in an acyclic directed graph.

\subsection{Organization}
\label{sec:org}

The rest of the paper is organized as follows. Section~\ref{sec:fixkt} proves Theorem~\ref{thm:main1} following the outline from Section~\ref{sec:techniques}: Section~\ref{sec:3reg} reduces the problem to equivalent $3$-regular instances, Section~\ref{sec:ftm} provides some basic resilts on fixed topological minors, Section~\ref{sec:feasible} rules out feasible orientations with small feedback arc sets in large minimal instances, Section~\ref{sec:ordering} establishes the existence of a long ordered sequence of directed cycles with distinguished structural properties, Section~\ref{sec:bounding} uses this ordering to bound the size of minimal instances as a function of $k$ and $t$, and Section~\ref{sec:proofofmain} applies fixed topological minor testing to obtain an $f(k,t)\cdot n^{O(1)}$ algorithm and derives Corollary~\ref{cor:main}. We note that the results of Section~\ref{sec:ordering} apply to general Steiner-rooted $k$-arc-connected directed graphs and may be of independent interest. In Section~\ref{sec:hardness}, we complement our algorithmic results by proving that the problem becomes computationally hard as soon as one of the parameters is part of the input: we show in Section~\ref{sec:fixk} that the fixed-$k$ case is NP-hard for every $k\ge 2$, answering the question of Király and Lau, and in Section~\ref{sec:fixt} that the fixed-$t$ case is NP-hard for every $t\ge 4$, with the latter proof presented in several steps due to its higher complexity. Finally, we conclude the paper with a list of open problems in Section~\ref{sec:conclusion}.

\section{Algorithm Parameterized by \texorpdfstring{$k$}{k} and \texorpdfstring{$t$}{t}}
\label{sec:fixkt}

In this section, we follow the outline from Section~\ref{sec:techniques} to prove Theorem~\ref{thm:main1} and Corollary~\ref{cor:main}; the detailed arguments for each step are presented in Sections~\ref{sec:3reg}–\ref{sec:proofofmain}. Throughout, we use notation of the form $f_{\text{x.y}}(I)$ to denote a function whose value depends only on the choice of the parameters in $I$, where the subscript x.y indicates the number of the statement in which it is introduced. These functions can be computed explicitly for any fixed choice of parameters and will be used later to bound the running time of our algorithm. 

\subsection{Preprocessing the Instance}
\label{sec:3reg}

In this section, we show that it suffices to consider graphs of a particular structure. We first show that it suffices to consider graphs in which the root and the terminals have degree exactly $k$, and all other vertices have degree exactly $3$. Recall that we refer to such instances as 3-regular.

\begin{lem}\label{lem:degk}
    \ktori admits a polynomial-time reduction to instances in which the root and all terminals have degree exactly $k$.
\end{lem}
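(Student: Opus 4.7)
The plan is to attach a small gadget to the root and to each terminal that forces their degrees to become exactly $k$ in the new instance, without altering whether a Steiner rooted $k$-arc-connected orientation exists.

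First I would dispose of the trivial infeasible case: if some vertex $v\in\{r\}\cup S$ satisfies $d_G(v)<k$, then no orientation can carry $k$ arc-disjoint $r$-$v$ directed paths (since $r$ would need out-degree at least $k$ and each $s$ would need in-degree at least $k$), so the reduction outputs a canonical no-instance of the required form. Assume from now on that $d_G(v)\geq k$ for every $v\in\{r\}\cup S$.

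Next, construct $G'$ from $G$ by introducing, for each $v\in\{r\}\cup S$, a fresh vertex $v'$ and $k$ parallel edges between $v$ and $v'$. Declare $r'$ to be the new root and $S'=\{s'\mid s\in S\}$ the new terminal set; this still has size $t$. By construction, $d_{G'}(r')=k$ and $d_{G'}(s')=k$ for every $s'\in S'$, so the required degree condition holds, and the construction is clearly polynomial.

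The core step, and the one requiring the only actual argument, is to verify that $(G,r,S)$ and $(G',r',S')$ are equivalent instances. For the forward direction, any feasible orientation $\dG$ of $G$ lifts to one of $G'$ by orienting the $k$ gadget edges away from $r'$ and into each $s'$; prepending and appending these arcs to a packing of $k$ arc-disjoint $r$-$s$ paths in $\dG$ produces the desired packing of $r'$-$s'$ paths. For the reverse direction, the degree-$k$ condition at $r'$ and at each $s'$ leaves no freedom: in any feasible $\dGp$, $r'$ must have out-degree at least $k$ and each $s'$ must have in-degree at least $k$, yet both have total degree exactly $k$, so all gadget edges are forced away from $r'$ and into each $s'$. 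Restricting $\dGp$ to $E(G)$ and chopping off the gadget arcs from the $r'$-$s'$ paths then yields $k$ arc-disjoint $r$-$s$ paths. The whole argument is a simple degree-surgery; I expect the only point that needs care is to confirm that these forced orientations of the gadget edges indeed identify the two path-packings, but no genuine obstacle arises.
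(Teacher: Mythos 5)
Your proof is correct and uses the same construction and argument as the paper: add a pendant copy $v'$ of each $v\in S+r$ joined by $k$ parallel edges, shift the root and terminals to the new copies, and observe that the degree-$k$ constraint at $r'$ and at each $s'$ forces all gadget edges outward from $r'$ and into each $s'$. The only difference is your initial case split on $d_G(v)<k$, which is harmless but unnecessary — the construction already produces a valid degree-$k$ instance in that case, and the reverse direction of the equivalence shows it is correctly a no-instance.
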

\begin{proof}
    Let $(G=(V,E),S,r)$ be an instance of \ktori. We construct a new instance $(G'=(V',E'),S',r')$ by adding, for each $v\in S+r$, a new copy $v'$ together with $k$ parallel edges between $v$ and $v'$. We let $S'=\{v'\colon v\in S\}$ and $r'$ be the copy of $r$. Then, it is not difficult to check that any Steiner rooted $k$-arc-connected orientation of $G$ can be extended to a Steiner rooted $k$-arc-connected orientation of $G'$ by orienting the newly added edges from $r'$ to $r$ and from $s$ to $s'$ for each $s\in S$. Conversely, in any Steiner rooted $k$-arc-connected orientation of $G'$, these edges must be oriented this way; hence, restricting to the original graph $G$ yields a Steiner rooted $k$-arc-connected orientation of $G$.
\end{proof}

A {\it binary tree rooted at $v$} is a tree in which all vertices have degree $3$ or $1$, except for the root $v$, which has degree $2$. To prove that the degrees of non-root and non-terminal vertices can be bounded, we need the following simple observation.

\begin{lem}\label{lem:binary}
    For any $d\in\bZ_+$ with $d\ge 2$, there exists a rooted binary tree with $d$ leaves and $2d-1$ vertices.
\end{lem}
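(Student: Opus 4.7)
The plan is to prove the lemma by induction on $d$. The statement is an existence claim about a purely combinatorial object, so the cleanest route is a constructive induction that simultaneously tracks the number of leaves and the total vertex count.

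For the base case $d=2$, I would exhibit the tree explicitly: take a root $v$ together with two vertices $\ell_1, \ell_2$ and the two edges $v\ell_1$ and $v\ell_2$. Then $v$ has degree $2$, both $\ell_1$ and $\ell_2$ are leaves, and the total number of vertices is $3 = 2\cdot 2 - 1$, matching the formula.

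For the inductive step, assuming the statement for some $d \ge 2$, I would take a rooted binary tree $T$ with $d$ leaves and $2d - 1$ vertices, pick an arbitrary leaf $\ell$ of $T$, and attach two new vertices $\ell', \ell''$ to $\ell$ via new edges $\ell\ell'$ and $\ell\ell''$. The only vertex whose degree changes is $\ell$: it had degree $1$ in $T$ and now has degree $3$, while $\ell'$ and $\ell''$ are new leaves of degree $1$. The root's degree is unchanged, so it still equals $2$. Thus the new tree satisfies the degree condition of a rooted binary tree, it has $d - 1 + 2 = d + 1$ leaves, and the number of vertices is $2d - 1 + 2 = 2(d+1) - 1$, completing the induction.

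There is no real obstacle: the only thing to keep an eye on is that the construction preserves the defining degree conditions, which is immediate from the fact that the modified vertex $\ell$ transitions from a degree-$1$ to a degree-$3$ vertex and the two added vertices are leaves. As a sanity check, the vertex count is also forced by handshaking: if such a tree has $n_3$ internal vertices of degree $3$, $d$ leaves, and one degree-$2$ root, then $d + 3n_3 + 2 = 2(d + n_3 + 1 - 1)$, yielding $n_3 = d - 2$ and total vertex count $2d - 1$, consistent with the construction.
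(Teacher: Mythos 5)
Your proof is correct and follows essentially the same argument as the paper: a base case at $d=2$ realized by a path of length two (equivalently, a root with two pendant leaves), followed by an inductive step in which two new vertices are attached to an existing leaf, incrementing the leaf count by one and the vertex count by two. The handshaking sanity check is a nice extra but not needed.
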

\begin{proof}
    A path of length two is a binary tree with two leaves and three vertices, where the root is the vertex of degree two. Suppose that for some $d \ge 2$, there exists a rooted binary tree $T$ with $d$ leaves and $2d-1$ vertices. Take any leaf of $T$ and attach two new vertices to it. The resulting tree is again a binary tree, its number of leaves increases by one, and its number of vertices increases by two, yielding a tree with $d+1$ leaves and $2(d+1)-1$ vertices. By induction, the claim holds for all $d \ge 2$.
\end{proof}

Let $G=(V,E)$ be a graph and let $v\in V$ be a vertex of degree $2$. \emph{Suppressing} $v$ means deleting $v$ and replacing its two incident edges $uv$ and $vw$ with a single edge $uw$ if $u\neq w$; in this case, we call the deleted edges $uv$ and $vw$ the \emph{parents} of the new edge $uw$. When performing several edge deletions, vertex deletions, and vertex suppressions in a graph $G_1=(V_1,E_1)$ to obtain a graph $G_2=(V_2,E_2)$, we call an edge $e\in E_1$ an \emph{ancestor} of an edge $f\in E_2$ if there exists a sequence of edges $e=e_1,e_2,\ldots,e_\ell=f$ such that $e_i$ is a parent of $e_{i+1}$ for all $i\in[\ell-1]$. It is easy to see by induction that the set of ancestors of an edge $uv\in E_2$ is precisely the set of edges forming a path between $u$ and $v$ in $G_1$. We say that $f$ is a \emph{descendant} of $e$ if $e$ is an ancestor of $f$. The notion of vertex suppression extends naturally to directed graphs. For a directed graph $D=(V,A)$ and a vertex $v\in V$ with $d_D^{in}(v)=d_D^{out}(v)=1$, \emph{suppressing} $v$ means deleting $v$ and replacing its two incident arcs $uv$ and $vw$ with a single arc $uw$ if $u\neq w$. The definitions of parents, ancestors, and descendants extend analogously to directed graphs.

Using Lemma~\ref{lem:binary}, we give a simple reduction to the 3-regular case.

\begin{lem}\label{lem:deg3}
    \ktori admits a polynomial-time reduction to equivalent instances in which every non-root and non-terminal vertex has degree $3$, while the degrees of the root and the terminals are preserved.
\end{lem}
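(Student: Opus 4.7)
The plan is to combine \cref{lem:degk} with a local surgery at each high-degree non-root, non-terminal vertex, using the binary trees provided by \cref{lem:binary}. First I would apply \cref{lem:degk} to reduce to the case in which $r$ and every terminal already has degree exactly $k$; it then remains to handle each non-root, non-terminal vertex $v$ with $d_G(v)\ge 4$. For such a $v$, I would take a binary tree $T_v$ with $d_v\coloneqq d_G(v)$ leaves (which exists by \cref{lem:binary}), delete $v$, insert $T_v$, and reattach each of the $d_v$ edges originally incident to $v$ to a distinct leaf of $T_v$. In the resulting graph, the $d_v$ leaves of $T_v$ each have degree $2$ (one tree edge plus one reattached edge) and the root of $T_v$ has degree $2$ by the definition of binary tree, whereas the remaining $d_v-2$ internal vertices of $T_v$ have degree $3$.

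To bring every non-root, non-terminal vertex down to degree $3$, I would then suppress every degree-$2$ vertex inside each $T_v$: each such suppression replaces the two edges at a degree-$2$ vertex $u$ by a single edge between its two neighbors, leaving every other degree unchanged. The resulting graph $G'$ has every non-root, non-terminal vertex of degree $3$ (namely the $d_v-2$ surviving internal vertices of each $T_v$), while the degrees of $r$ and of every terminal are preserved. Since each replacement introduces at most $d_v-2\le |E(G)|$ new vertices, the overall transformation runs in polynomial time.

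For equivalence, the backward direction is the easy half: contracting each inserted substructure back to a single vertex produces $G$, and such a contraction cannot decrease the number of pairwise arc-disjoint $r$-to-$s$ paths, so any feasible orientation of $G'$ descends to one of $G$. For the forward direction, starting from a feasible orientation $\vec{G}$ of $G$, I would orient each external edge of each $T_v$ as in $\vec{G}$ and extend this orientation to the internal edges of $T_v$ by routing, along the unique tree paths between matched leaves, the matching between in-arcs and out-arcs at $v$ induced by the $k$ arc-disjoint $r$-$s$ paths of $\vec{G}$.

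The main obstacle is to show that the orientations of the tree edges produced by these routings can be chosen consistently and that the resulting orientation of $G'$ is Steiner rooted $k$-arc-connected; I expect this to be handled by a cut-based analysis combined with Menger-type rerouting of $\vec{G}$'s arc-disjoint paths, exploiting that every cut of $G'$ splitting some $T_v$ corresponds, after contracting part of the insertion, to a cut of $G$ through $v$ of size at least $k$, so that the internal tree edges together with the external edges crossing the split can always be oriented to witness the required $k$ arc-disjoint paths in $\vec{G}'$.
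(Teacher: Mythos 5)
Your gadget --- a single binary tree $T_v$ per high-degree vertex $v$, one leaf per incident edge --- cannot realize the forward direction, and the gap you flag at the end is a genuine obstruction rather than a technicality. For a fixed terminal $s$, up to $k$ of the arc-disjoint $r$-$s$ paths in a feasible orientation $\dG$ may pass through $v$, and you must route them through the gadget along pairwise arc-disjoint paths. But in a tree, any internal edge $e$ separates the leaves into two sides; if two of the $\dG$-paths enter $v$ on one side of $e$ and leave on the other, they must both traverse $e$ in the same direction, so they cannot be made arc-disjoint inside $T_v$, and swapping which entry is matched to which exit does not help. Concretely, take $k=2$, $d(v)=4$, and a binary tree with root $\rho$ of degree $2$ adjacent to $a$ and $b$, with leaves $\ell_1,\ell_2$ below $a$ and $\ell_3,\ell_4$ below $b$: if two $r$-$s$ paths enter $v$ through the edges attached to $\ell_1,\ell_2$ and leave through those attached to $\ell_3,\ell_4$, then $a\rho$ is a cut of capacity $1$ through which two units of flow must pass, and no Menger-type rerouting in the rest of the graph can repair a deficient cut that lives entirely inside the gadget.

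The paper's construction is structurally richer precisely to avoid this. For each $v$ with $d(v)\ge 3$ it attaches a \emph{separate} binary tree $T_i$, with $d(v)-1$ leaves, to each port $v_i$, and adds one dedicated edge between a leaf of $T_i$ and a leaf of $T_j$ for each unordered pair $\{i,j\}$, using distinct leaves for distinct pairs. Since for any fixed terminal $s$ the $k$ arc-disjoint paths use each port of $v$ at most once, each tree $T_i$ carries at most one unit of $s$-flow, and every in/out-port pair has its own private inter-tree edge, so the required paths can always be routed arc-disjointly inside the gadget. With this replacement your overall plan --- apply \cref{lem:degk}, replace vertices by \cref{lem:binary}-style gadgets, then delete degree-$1$ and suppress degree-$2$ vertices --- matches the paper's proof.
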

\begin{proof}
    Let $(G=(V,E),S,r)$ be an instance of \ktori\ (Figure~\ref{fig:red1}). We construct a new instance $(G'=(V',E'),S,r)$ by replacing each vertex $v \in V \setminus (S + {r})$ with a gadget as follows. Let $v_1,\ldots,v_{d(v)}$ be new vertices, each receiving one of the original edges incident to $v$. If $d(v)=2$, then we add an edge between $v_1$ and $v_2$. If $d(v)\geq 3$ then, for each $i\in[d(v)]$, we add a binary tree $T_i$ rooted at $v_i$ with $d(v)-1$ leaves. For every distinct pair $i,j\in[d(v)]$, we add an edge between a leaf of $T_i$ and a leaf of $T_j$, using different leaves for different pairs (Figure~\ref{fig:red2}). We claim that the original instance $(G=(V,E),S,r)$ has a solution if and only if the instance $(G'=(V',E'),S,r)$ constructed as above has a solution.

    \begin{figure} \centering
        \begin{subfigure}[t]{0.35\textwidth} \centering
            \begin{tikzpicture}[every circle node/.style={inner sep=2.5pt}, font=\small]
        		\clip (-2.15,-1.75) rectangle (2.15,4.05);
                \node[circle, draw, label={$r$}] (r) at (-2,1.8) {};
        		\node[circle, fill, label={$u$}] (u) at (0, 3.6) {};
        		\node[circle, fill, label={270:$v$}] (v) at (0, 0) {};
        		\node[fill, label={$s$}] (s) at (2,1.8) {};
        		\draw[thick] (r) -- (u); 
        		\draw[thick] (r) -- (v); 
        		\draw[thick] (u) -- (s);
        		\draw[thick] (v) -- (s);  
        		\draw[thick, rounded corners=4pt] (u) -- ($(u)+(-0.05, -0.15)$) -- ($(v)+(-0.05, 0.15)$) -- (v);
        		\draw[thick, rounded corners=4pt] (u) -- ($(u)+(0.05, -0.15)$) -- ($(v)+(0.05, 0.15)$) -- (v);
        	\end{tikzpicture}
            \caption{The input graph $G$.} \label{fig:red1}
        \end{subfigure}
        \begin{subfigure}[t]{0.64\textwidth} \centering
            \newcommand{\drawbinarytree}[2]{%
            	\begin{scope}[shift={#2}]
            		\node[circle, fill, label={[yshift=-0.2em]:$#1$}] (#11) at (0,0) {};
            		\node[circle, fill] (#12) at (-0.3,-0.6) {};
            		\node[circle, fill] (#13) at (-0.6, -1.2) {};
            		\node[circle, fill] (#14) at (0,-1.2) {};
            		\node[circle, fill] (#15) at (0.3,-0.6) {}; 
            		
            		\draw (#11) -- (#12);
            		\draw (#12) -- (#13);
            		\draw (#12) -- (#14);
            		\draw (#11) -- (#15);
            	\end{scope}
            }
            \newcommand{\drawgadget}[2]{%
            	\begin{scope}[shift={#2}]
            		\drawbinarytree{#1_1}{(0,0)};
            		\drawbinarytree{#1_2}{(1.5,0)};
            		\drawbinarytree{#1_3}{(3,0)};
            		\drawbinarytree{#1_4}{(4.5,0)};
            		
            		\draw[bend right=20] (#1_15) to (#1_25);
            		\draw[bend right=20] (#1_35) to (#1_45);
            		\draw[bend right=20] (#1_13) to (#1_33);
            		\draw[bend right=20] (#1_23) to (#1_43);
            		\draw[bend right=20] (#1_14) to (#1_44);
            		\draw[bend right=15] (#1_24) to (#1_34);
            	\end{scope}
            }
        
            \begin{tikzpicture}[every circle node/.style={inner sep=2.5pt}, font=\small]
            	\clip (-2.15,-1.75) rectangle (6.65,4.05);
                \drawgadget{u}{(0,3.6)};
            	\drawgadget{v}{(0,0)};
            	\node[circle, draw, label={$r$}] (r) at (-2,1.8) {};
            	\draw[thick] (r) -- (u_11); 
            	\draw[thick] (r) -- (v_11); 
            	\draw[rounded corners=8pt, thick] (u_21) -- ($(u_21)+(-0.8,0)$) -- ($(v_21)+(-0.8,0)$) -- (v_21);
            	\draw[rounded corners=8pt, thick] (u_31) -- ($(u_31)+(0.55,0)$) -- ($(v_31)+(0.55,0)$) -- (v_31);
            	\node[fill, label={$s$}] (s) at (6.5,1.8) {};
            	\draw[thick] (u_41) -- (s);
            	\draw[thick] (v_41) -- (s);
            \end{tikzpicture}
            \caption{All non-root and non-terminal vertices have degree at most $3$ in $G'$.} \label{fig:red2}
        \end{subfigure}
        \caption{An illustration of the proof of \cref{lem:deg3} with $k=2$ and $S=\{s\}$.}
    \end{figure}

    To see this, first take a Steiner rooted $k$-arc-connected orientation of $G$. For each edge $uv\in E$ with $u,v\in V\setminus(S +r)$, let $u_iv_j$ denote the corresponding edge in $G'$. If $uv$ is oriented from $u$ to $v$, we orient $u_iv_j$ from $u_i$ to $v_j$, and orient all edges of the binary trees rooted at $u_i$ and $v_j$ towards $u_i$ and away from $v_j$, respectively. If $u=r$ or $v=s\in S$, then no gadget is created for $r$ or $s$; in this case, the corresponding edge in $G'$ directly connects $r$ or $s$ to the gadget vertex at the other endpoint, or connects $r$ and $s$ to each other if $u=r$ and $v=s$, and we orient it the same way as in $G$. Now for each $s\in S$, fix $k$ arc-disjoint directed paths from $r$ to $s$ in the resulting directed graph. If $uv$ and $vw$ are consecutive arcs used by any of these paths with $u,w\in V\setminus (S+r)$, and $u_iv_j,v_pw_q$ are the corresponding edges in $G'$, then we orient the edge that connects the binary trees of $v_j$ and $v_p$ from the tree of $v_j$ to the tree of $v_p$. Finally, all remaining edges of $G'$ are oriented arbitrarily. With this orientation, the $k$ arc-disjoint directed paths fixed for a terminal $s\in S$ naturally correspond to $k$ arc-disjoint directed paths from $r$ to $s$ in $G'$.

    For the other direction, note that any Steiner rooted $k$-arc-connected orientation of $G'$ naturally induces a Steiner $k$-rooted orientation of $G$ by contracting each vertex gadget back to a single vertex.    

    Note that the modified graph has maximum degree at most $3$ by construction. However, vertices with only one neighbor can be deleted, and degree-$2$ vertices with two neighbors can be suppressed in an arbitrary order. These operations clearly do not affect the feasibility of the instance, and any solution to the resulting graph can be easily transformed back into a solution for the previous one.
\end{proof}

\subsection{Fixed Topological Minors}
\label{sec:ftm}

In this section, we provide a few basic preliminaries on fixed topological minors. Given two graphs $G_1=(V_1,E_1)$ and $G_2=(V_2,E_2)$, an {\it isomorphism} between them is a bijection $\psi\colon V_1\to V_2$ such that, for all $u,v\in V_1$, the number of edges in $E_1$ joining $u$ and $v$ equals the number of edges in $E_2$ joining $\psi(u)$ and $\psi(v)$. If $\psi(w)=w$ for every $w\in W$ with $W\subseteq V_1\cap V_2$, we call $\psi$ {\it $W$-preserving}. If there is a $W$-preserving isomorphism between $G_1$ and $G_2$, then we say that $G_1$ and $G_2$ are \emph{$W$-isomorphic}. Analogously, in the directed case we can define isomorphism, $W$-preserving isomorphism, and $W$-isomorphic directed graphs.

Given two graphs $G_1=(V_1,E_1)$ and $G_2=(V_2,E_2)$ and a set $W\subseteq V_1\cap V_2$, we say that $G_2$ is a {\it $W$-fixed topological minor of $G_1$} if there exists an injective mapping $\psi\colon V_2\to V_1$ with $\psi(w)=w$ for all $w\in W$, and a collection $\mathcal{P}=\{P_e\colon e\in E_2\}$ of pairwise internally vertex-disjoint paths in $G_1$ such that for every $e=uv\in E_2$, the path $P_e$ is a $\psi(u)\psi(v)$-path in $G_1$ that is internally disjoint from $W$. If $G_1$ and $G_2$ are directed graphs, and all paths in $\mathcal{P}$ are required to be directed, we call $G_2$ a {\it $W$-fixed directed topological minor of $G_1$}. We note that this notion is often referred to as a rooted topological minor, but we use the present terminology to avoid confusion. 

It might be confusing at first that, in the definition of topological minors, we require the paths in $\mathcal{P}$ to be internally vertex-disjoint, whereas elsewhere we use arc-disjointness. We adopt this convention because we rely on a result from \cite{fomin2020hitting}, which follows the same definition, and because \cref{serdzftuvbhjnik} is more general and natural in this form. Furthermore, whenever we use topological minors, we will always assume that the vertices of $G_1$ outside $W$ have degree at most $3$, in which case internal vertex-disjointness and edge-disjointness are equivalent.

We provide a characterization of fixed topological minors in both the undirected and directed cases.

\begin{lem}\label{serdzftuvbhjnik}
\mbox{}
    \begin{enumerate}[label=(\alph*)]\itemsep0em
        \item Let $G_1=(V_1,E_1)$ and $G_2=(V_2,E_2)$ be graphs, and let $W\subseteq V_1\cap V_2$. Then $G_2$ is a $W$-fixed topological minor of $G_1$ if and only if a graph $W$-isomorphic to $G_2$ can be obtained from $G_1$ by repeatedly deleting edges and deleting or suppressing vertices not contained in $W$. \label{it:aaaa}
        \item Let $D_1=(V_1,A_1)$ and $D_2=(V_2,A_2)$ be directed graphs, and let $W\subseteq V_1\cap V_2$. Then $D_2$ is a $W$-fixed directed topological minor of $D_1$ if and only if a directed graph $W$-isomorphic to $D_2$ can be obtained from $D_1$ by repeatedly deleting edges and deleting or suppressing vertices not contained in $W$. \label{it:bbbb}
    \end{enumerate}
\end{lem}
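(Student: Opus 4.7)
The plan is to prove both implications of \ref{it:aaaa} by explicitly matching the data of a $W$-fixed topological minor (the injection $\psi$ and the internally disjoint path system $\mathcal{P} = \{P_e : e \in E_2\}$) with the sequence of operations, and then to observe that \ref{it:bbbb} follows by the same argument with ``path'' replaced by ``directed path'' throughout.

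For the forward direction, suppose $G_2$ is a $W$-fixed topological minor of $G_1$ witnessed by $\psi$ and $\mathcal{P}$. I would perform the following reductions on $G_1$, each of which is legal since $\psi(w) = w$ for all $w \in W$ (so $W \subseteq \psi(V_2)$) and the paths in $\mathcal{P}$ are internally disjoint from $W$: first delete every vertex outside $\psi(V_2) \cup \bigcup_{e \in E_2} V(P_e)$ (none of these lie in $W$); then delete every edge that does not appear on some $P_e$; finally, for each $e \in E_2$, suppress the internal vertices of $P_e$ one by one. After the first two steps every internal vertex of a $P_e$ has degree exactly $2$ in the intermediate graph (by internal vertex-disjointness of $\mathcal{P}$), so each suppression is well-defined. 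The resulting graph has vertex set $\psi(V_2)$, and between any two branch vertices $\psi(u), \psi(v)$ the edges are in bijection with those of $E_2$ between $u$ and $v$; hence $\psi$ is a $W$-preserving isomorphism to $G_2$.

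For the reverse direction, suppose a graph $G'$ that is $W$-isomorphic to $G_2$ is obtained from $G_1$ by a sequence of such operations, and let $\psi \colon V_2 \to V_1$ be the composition of the $W$-preserving isomorphism $G_2 \to G'$ with the inclusion $V(G') \hookrightarrow V(G_1)$. For each $e \in E_{G'} = \psi(E_2)$, define $P_e$ as the subgraph of $G_1$ spanned by the ancestors of $e$; by the observation recalled in the excerpt right after the definition of ancestors, $P_e$ is a path in $G_1$ joining the endpoints of $e$, i.e.\ a $\psi(u)\psi(v)$-path for the corresponding $e = uv \in E_2$. The internal vertices of $P_e$ are precisely the vertices suppressed along the way to merge its ancestors into $e$; since only vertices outside $W$ may be suppressed, each $P_e$ is internally disjoint from $W$. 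Pairwise internal vertex-disjointness follows because at the moment any vertex $v$ is suppressed it has degree $2$, so its two incident edges belong to the same path among the $P_e$'s being built; by induction on the number of operations, the ancestor sets of distinct current edges remain vertex-disjoint except at shared endpoints.

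The main (mild) obstacle is the inductive bookkeeping in the reverse direction to guarantee that internal vertex-disjointness of the $P_e$'s is preserved under suppressions, together with the observation that neither parallel edges nor the degenerate case $u = w$ in a suppression create loops in $\mathcal{P}$ (the former is handled by treating parallel edges as distinct from the outset, the latter cannot occur for a $P_e$ since its endpoints $\psi(u) \neq \psi(v)$ by injectivity of $\psi$). For part \ref{it:bbbb}, the directed suppression rule $d_D^{in}(v) = d_D^{out}(v) = 1$ guarantees that ancestor sets of arcs form directed paths and that distinct suppressed internal vertices lie on a unique such directed path, so the same proof goes through verbatim.
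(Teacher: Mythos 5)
Your proof is correct and follows essentially the same approach as the paper: the forward direction deletes all edges off the paths $\mathcal{P}$ and then suppresses the resulting degree-$2$ internal vertices, and the reverse direction takes the ancestor paths of the surviving edges as the witnessing path system. The only difference is that you spell out the pairwise internal vertex-disjointness of the ancestor paths, which the paper leaves implicit behind its preliminary observation that ancestors of an edge form a path.
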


\begin{proof}
    First we prove \ref{it:aaaa}. For the `only if' direction, suppose that $G_2$ is a $W$-fixed topological minor of $G_1$, and let $\psi \colon V_2 \rightarrow V_1$ and $\mathcal{P} = \{P_e \colon e \in E_2\}$ be the corresponding mapping and collection of paths, respectively. After deleting all edges of $G_1$ not contained in any of the paths in $\mathcal{P}$, every vertex $v \in V_1 \setminus W$ of the obtained graph that is not contained in the image of $\psi$ has degree either $0$ or $2$. Delete the degree $0$ vertices and suppress the degree $2$ vertices in any order to obtain a graph $W$-isomorphic to $G_2$.

    For the `if' direction, suppose that $G'_1=(V'_1, E'_1)$ is a $W$-isomorphic copy of $G_2$ obtained from $G_1$ by deleting edges and deleting or suppressing vertices not contained in $W$. Let $\psi \colon V_2 \to V'_1 \subseteq V_1$ be the $W$-preserving isomorphism between $G_2$ and $G'_1$. For any $uv \in E'_1$, let $P_{uv}$ be the $uv$-path formed by the ancestors of $uv$ in $G_1$. The mapping $\psi$ together with the collection of paths $\mathcal{P} = \{P_{\psi(u)\psi(v)} \colon uv \in E_2\}$ witness that $G_2$ is a $W$-fixed topological minor of $G_1$.

    The proof of \ref{it:bbbb} is analogous.
\end{proof}

Using \cref{serdzftuvbhjnik}, we relate fixed topological minors to Steiner rooted $k$-arc-connected directed graphs and Steiner rooted $k$-arc-connected orientations in the 3-regular case. A directed graph is said to be {\it minimally} Steiner rooted $k$-arc-connected if it contains $k$ pairwise arc-disjoint directed paths from the root to every terminal vertex, but loses this property upon the deletion of any arc. We further call it {\it $3$-regular} if the in-degree of the root is $0$, the out-degree of the root is $k$, the in-degree of every terminal vertex is $k$, the out-degree of every terminal vertex is $0$, and every other vertex has the sum of its in-degree and out-degree equal to $3$. Note that the underlying undirected graph of a minimally Steiner rooted $k$-arc-connected $3$-regular directed graph is not necessarily a minimal $3$-regular instance of \ktori, since it may be possible to delete an edge of $G$ while preserving the existence of a feasible orientation. We will need the following observation.

\begin{lem} \label{lem:orintedtop}
    Let $D=(V,A)$ be a 3-regular directed graph with root $r \in V$ and set of terminals $S \subseteq V-r$. Then $D$ is Steiner rooted $k$-arc-connected if and only if there exists an $(S+r)$-fixed directed topological minor $D'$ of $D$ such that $D'$ is a minimally Steiner rooted $k$-arc-connected 3-regular directed graph.
\end{lem}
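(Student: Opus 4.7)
$\textbf{Forward direction.}$ The plan is to start from $D$, iteratively delete arcs while preserving Steiner rooted $k$-arc-connectivity to obtain a minimal sub-digraph $D^*$, and then clean $D^*$ up by deleting isolated vertices and suppressing degree-$(1,1)$ vertices lying outside $S+r$ to obtain $D'$. Because the root of $D$ has exactly $k$ out-arcs, all of which must be used to route $k$ arc-disjoint paths from $r$, and each terminal has exactly $k$ in-arcs, the degrees at $r$ and at every $s\in S$ will survive all arc deletions. Then \cref{serdzftuvbhjnik}\ref{it:bbbb} will immediately imply that $D'$ is an $(S+r)$-fixed directed topological minor of $D$; what remains is to verify that $D'$ is 3-regular and minimally Steiner rooted $k$-arc-connected.

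$\textbf{Internal-vertex degree analysis.}$ For a non-root, non-terminal vertex $v$ of $D$, one has $d_D^{in}(v)+d_D^{out}(v)=3$, and the key step will be to rule out, in the arc-minimal $D^*$, any configuration with $d^{in}(v)=0$ and $d^{out}(v)>0$ (and its symmetric counterpart): any arc leaving an internal vertex unreachable from $r$ is useless and can be deleted, contradicting the arc-minimality of $D^*$. This forces each internal vertex of $D^*$ to be isolated or to satisfy $(d^{in},d^{out})\in\{(1,1),(1,2),(2,1)\}$. After deleting isolated vertices and iteratively suppressing the $(1,1)$-vertices, I will verify that Steiner rooted $k$-arc-connectivity and arc-minimality are both preserved (an $r$-to-$s$ path through a $(1,1)$-vertex uses both of its arcs consecutively, and is translated into a path through the merged arc, which remains essential). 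The condition $u\neq w$ required for a valid suppression follows again from minimality, as parallel arcs $uv,vu$ at a $(1,1)$-vertex would be useless. Hence $D'$ is 3-regular and minimally Steiner rooted $k$-arc-connected.

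$\textbf{Backward direction.}$ Suppose $D'$ is a minimally Steiner rooted $k$-arc-connected 3-regular $(S+r)$-fixed directed topological minor of $D$, witnessed by an injection $\psi$ and a family of internally vertex-disjoint paths $\{P_e:e\in A(D')\}$. For each terminal $s\in S$, I will fix $k$ pairwise arc-disjoint directed paths $Q_1,\ldots,Q_k$ from $r$ to $s$ in $D'$ and substitute each arc $e\in A(Q_i)$ with the path $P_e$, obtaining directed walks $\widehat{Q}_1,\ldots,\widehat{Q}_k$ from $r$ to $s$ in $D$. Since every vertex of $D$ outside $S+r$ has degree at most $3$, internal vertex-disjointness of the $P_e$'s coincides with arc-disjointness, and together with the arc-disjointness of the $Q_i$'s this yields arc-disjointness of the walks $\widehat{Q}_i$. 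Extracting a directed path from each walk gives $k$ arc-disjoint directed paths from $r$ to $s$ in $D$.

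$\textbf{Main obstacle.}$ The delicate part is the degree and minimality bookkeeping in the forward direction---ruling out the ``dangling half-arc'' configurations at internal vertices of $D^*$ and verifying that suppression preserves both Steiner rooted $k$-arc-connectivity and arc-minimality. Once this is settled, \cref{serdzftuvbhjnik}\ref{it:bbbb} delivers the topological-minor conclusion immediately, and the backward direction is essentially a routine path-substitution argument.
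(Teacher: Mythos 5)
Your proof is correct and follows essentially the same approach as the paper: the paper's forward direction greedily deletes edges and deletes or suppresses vertices outside $S+r$ as long as Steiner rooted $k$-arc-connectivity survives, then invokes \cref{serdzftuvbhjnik}\ref{it:bbbb}, while its backward direction is the same path-substitution argument you give. Your version just makes explicit the degree case-analysis that justifies the paper's terse claim that the greedy output is $3$-regular (and correctly handles the $u=w$ obstruction to suppression and the preservation of arc-minimality under suppression), so it is a more detailed rendering of the same argument rather than a different one.
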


\begin{proof}
    First, suppose that $D$ is Steiner rooted $k$-arc-connected. Greedily delete edges and delete or suppress vertices not contained in $S + r$ as long as the resulting graph remains Steiner rooted $k$-arc-connected. Note that suppressing vertices never changes the property of being Steiner rooted $k$-arc-connected. The instance thus obtained is $3$-regular, and it is an $(S + r)$-fixed directed topological minor of $D$ by \Cref{serdzftuvbhjnik}\ref{it:bbbb}.

   Now assume that $D' = (V', A')$ is an $(S + r)$-fixed directed topological minor of $D$ such that $D'$ is a minimally Steiner rooted $k$-arc-connected 3-regular directed graph. Then there exists an injective mapping $\psi \colon V' \rightarrow V$ and a collection $\mathcal{P} = \{P_e \colon e \in A'\}$ of internally vertex-disjoint directed paths in $D$ such that $\psi(r) = r$, $\psi(s) = s$ for all $s \in S$, and for every $e = uv \in A'$, $P_e$ is a $\psi(u)\psi(v)$-path in $D$. Let $s \in S$; then there exist $k$ pairwise arc-disjoint directed $r$-$s$ paths $P'_1, P'_2, \ldots, P'_k$ in $D'$. For each $i \in [k]$, let $P_i$ be the concatenation of all the directed paths $P_e$ where $e$ is an arc of $P'_i$. By definition, $P_1, P_2, \ldots, P_k$ are $k$ pairwise arc-disjoint directed $r$-$s$ paths in $D$, and the lemma follows.
\end{proof}

\begin{lem}\label{lem:ortopmin}
    A $3$-regular instance $(G=(V,E),S,r)$ is a yes-instance of \kori if and only if there exists an $(S+r)$-fixed topological minor $H$ of $G$ such that $(H=(U,F),S,r)$ is a minimal $3$-regular instance.
\end{lem}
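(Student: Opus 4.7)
The plan is to prove both directions in parallel with \cref{lem:orintedtop}, using \cref{serdzftuvbhjnik}\ref{it:aaaa} to translate between the minor relation and the sequence of edge/vertex operations, and using the collection of internally disjoint paths $\mathcal{P}$ to transfer orientations between $G$ and $H$.

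\textbf{Forward direction.} Assume $(G,S,r)$ is a yes-instance and fix any feasible orientation of $G$. I would perform the following greedy reduction: as long as there exists either an edge whose deletion still leaves a yes-instance, or a vertex in $V\setminus (S+r)$ of degree at most $2$, delete such an edge, delete such a degree-$0$ or degree-$1$ vertex, or suppress such a degree-$2$ vertex (note that suppressions do not affect feasibility, since every orientation of the suppressed graph lifts back). Call the resulting instance $(H,S,r)$. By construction $H$ is still a yes-instance and no further edge can be removed while preserving feasibility, so $H$ is minimal. To see that $H$ is $3$-regular, observe that the root and each terminal must retain degree at least $k$ in any feasible instance, and by minimality the greedy process has pruned every excess edge incident to $S+r$; meanwhile, every remaining non-root, non-terminal vertex has degree at most $3$ (the input was $3$-regular and operations never increase degrees) and at least $3$ (otherwise it would have been deleted or suppressed). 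Finally, \cref{serdzftuvbhjnik}\ref{it:aaaa} certifies that $H$ is an $(S+r)$-fixed topological minor of $G$.

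\textbf{Backward direction.} Suppose $H$ is a minimal $3$-regular instance and an $(S+r)$-fixed topological minor of $G$, witnessed by an injective map $\psi\colon U\to V$ with $\psi|_{S+r}=\mathrm{id}$ and a family $\mathcal{P}=\{P_e\colon e\in F\}$ of pairwise internally vertex-disjoint paths, where $P_e$ is a $\psi(u)\psi(v)$-path in $G$ for each $e=uv\in F$. Since $H$ is a yes-instance, fix a Steiner rooted $k$-arc-connected orientation of $H$. I would lift this to $G$ as follows: for each $e=uv\in F$ oriented from $u$ to $v$ in the chosen orientation of $H$, orient every edge of $P_e$ consistently so that $P_e$ becomes a directed $\psi(u)\psi(v)$-path; orient all remaining edges of $G$ (those not lying on any $P_e$) arbitrarily. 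For each terminal $s\in S$, take $k$ arc-disjoint directed $r$-$s$ paths $Q_1,\ldots,Q_k$ in the oriented $H$, and replace each arc $e\in A(Q_i)$ by the corresponding directed path $P_e$ in $G$ to obtain walks $\widetilde Q_1,\ldots,\widetilde Q_k$ from $r=\psi(r)$ to $s=\psi(s)$ in the oriented $G$. Internal vertex-disjointness of $\mathcal{P}$ implies that the edge sets of the various $P_e$ are pairwise disjoint, so the arc-disjointness of $Q_1,\ldots,Q_k$ in $H$ transfers to arc-disjointness of $\widetilde Q_1,\ldots,\widetilde Q_k$ in $G$. Each $\widetilde Q_i$ contains a directed $r$-$s$ path, producing the required $k$ arc-disjoint $r$-$s$ paths and showing that the orientation of $G$ is feasible.

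\textbf{Main obstacle.} The backward direction is completely analogous to the directed version in \cref{lem:orintedtop}, so the only real care is needed in the forward direction to verify that the greedy reduction terminates at a genuinely $3$-regular instance, and in particular that the root and terminals end up with degree exactly $k$ rather than merely at least $k$; this is why one must also be allowed to delete excess edges incident to $S+r$ whose removal preserves feasibility, and why one must check, using the original $3$-regularity of $G$ together with the fact that edge deletions, vertex deletions, and suppressions of degree-$\le 2$ vertices never raise any degree, that the terminating instance meets the precise degree constraints in the definition.
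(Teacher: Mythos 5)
Your proof is correct and follows essentially the same two-step strategy as the paper: a greedy deletion/suppression argument combined with \cref{serdzftuvbhjnik}\ref{it:aaaa} for the forward direction, and lifting a feasible orientation of $H$ to $G$ along the paths $P_e$ for the backward direction. The concern in your "main obstacle" paragraph is in fact a non-issue: since $G$ is already $3$-regular, the root and terminals have degree exactly $k$ from the start, and no edge incident to $S+r$ can ever be deleted while preserving feasibility, so their degrees remain exactly $k$ throughout the reduction without any special handling.
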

\begin{proof}
    First, suppose that $(G=(V,E),S,r)$ is a $3$-regular yes-instance of \kori. Greedily delete edges and delete or suppress vertices not contained in $S + r$ as long as the resulting graph has a feasible orientation. Note that suppressing vertices never changes the property of having a feasible orientation. Then, the instance thus obtained is $3$-regular, and it is an $(S+r)$-fixed topological minor of $G$ by \Cref{serdzftuvbhjnik}\ref{it:aaaa}.

    Now let $(H = (U, F), S, r)$ be a minimal $3$-regular instance where $H$ is an $(S + r)$-fixed topological minor of $G$. Then there exists an injective mapping $\psi \colon U \rightarrow V$ and a collection $\mathcal{P} = \{P_e \colon e \in F\}$ of internally vertex-disjoint paths in $G$ such that $\psi(r) = r$, $\psi(s) = s$ for all $s \in S$, and for every $e = uv \in F$, $P_e$ is a $\psi(u)\psi(v)$-path in $G$. Let $\dH$ be a feasible orientation of $H$. We construct an orientation $\dG$ of $G$ as follows: for every $e = uv \in F$ oriented from $u$ to $v$ in $\dH$, orient all edges of $P_e$ from $\psi(u)$ to $\psi(v)$. All remaining edges of $G$ receive arbitrary orientations.  

    Let $s \in S$; then there exist $k$ pairwise arc-disjoint directed $r$-$s$ paths $P'_1, P'_2, \ldots, P'_k$ in $\dH$. For each $i \in [k]$, let $P_i$ be the concatenation of all the paths $P_e$ where $e$ is the underlying edge of an arc of $P'_i$. By definition, the corresponding edges of $P_i$ in $\dG$ form a directed path $\dP_i$ for each $i \in [k]$, and the paths $\dP_1, \dP_2, \ldots, \dP_k$ are $k$ pairwise arc-disjoint $r$-$s$ paths in $\dG$. Hence, $\dG$ is Steiner rooted $k$-arc-connected, and therefore $(G = (V, E), S, r)$ is a yes-instance.
\end{proof}

We continue with a lemma concerning the existence of vertices that must appear in every topological minor of a graph satisfying certain properties. Let $D=(V,A)$ be a Steiner rooted $k$-arc-connected directed graph with root $r\in V$ and terminal set $S\subseteq V-r$. We call a subset $U \subseteq V$ an \emph{$s$-cut} if $r \in U$, $s \notin U$, and it is a {\it tight $s$-cut} if $d^{out}_D(U)=k$.

\begin{lem}\label{lem:essentialvertex}
    Let $D_1 = (V_1, A_1)$ be a Steiner rooted $k$-arc-connected $3$-regular directed graph with root $r \in V_1$ and set of terminals $S \coloneqq \{s_1, s_2\} \subseteq V_1 - r$. Assume that a vertex $v \in V_1 \setminus (S + r)$ has two incident edges, $e$ and $f$, such that either both of them are incoming or both are outgoing, and $e$ leaves a tight $s_1$-cut while $f$ leaves a tight $s_2$-cut. Let $D_2 = (V_2, A_2)$ be an $(S + r)$-fixed topological minor of $D_1$, and let $\psi \colon V_2 \rightarrow V_1$ and $\mathcal{P} = \{P_e \colon e \in A_2\}$ be the corresponding mapping and collection of paths, respectively. Furthermore, assume that $D_2$ is a Steiner rooted $k$-arc-connected 3-regular directed graph. Then, $v$ is contained in the image of $\psi$.
\end{lem}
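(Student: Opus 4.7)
The plan is to argue by contradiction: assume $v \notin \psi(V_2)$ and derive that some terminal cut of $D_2$ has out-degree strictly less than $k$. First I would invoke the suppression description of $(S+r)$-fixed directed topological minors from \cref{serdzftuvbhjnik}\ref{it:bbbb}: $D_2$ arises from $D_1$ by a sequence of edge and vertex deletions together with vertex suppressions outside $S+r$. Under this description, the paths $P_e \in \mathcal{P}$ are precisely the maximal ``suppression paths'' that survive the deletions, so in particular they are pairwise arc-disjoint in $D_1$ (not merely internally vertex-disjoint); moreover, each vertex of $D_1$ outside $\psi(V_2)$ is either deleted entirely (and hence used by no path) or is suppressed, in which case it is an internal vertex of exactly one $P_g$, and $P_g$ uses exactly one incoming and one outgoing arc at that vertex.

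Next I would treat the two cases ``both incoming'' and ``both outgoing'' symmetrically. Suppose first that $e$ and $f$ are both incoming to $v$. Since $v \in V_1 \setminus (S+r)$, the $3$-regularity of $D_1$ gives $d_{D_1}^{in}(v) + d_{D_1}^{out}(v) = 3$, so $d_{D_1}^{out}(v) \le 1$. Combined with the observation from the previous paragraph, this forces at most one of the incoming arcs $e, f$ to be used by any path in $\mathcal{P}$; hence at least one of them, say $f$ after relabeling, is unused.

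Finally I would run a cut-counting argument. Let $U_2$ be a tight $s_2$-cut of $D_1$ with $f \in \delta_{D_1}^{out}(U_2)$, and set $W_2 := \{w \in V_2 : \psi(w) \in U_2\}$. Since $\psi(r) = r \in U_2$ and $\psi(s_2) = s_2 \notin U_2$, the set $W_2$ is an $s_2$-cut of $D_2$, and Steiner rooted $k$-arc-connectivity of $D_2$ forces $|\delta_{D_2}^{out}(W_2)| \ge k$. On the other hand, for every arc $e' \in \delta_{D_2}^{out}(W_2)$ the corresponding path $P_{e'}$ crosses from $U_2$ to $V_1 \setminus U_2$ in $D_1$, and so uses at least one arc of $\delta_{D_1}^{out}(U_2)$; by the arc-disjointness established above, distinct arcs of $\delta_{D_2}^{out}(W_2)$ must select distinct arcs of $\delta_{D_1}^{out}(U_2)$. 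Because $|\delta_{D_1}^{out}(U_2)| = k$ and $f$ is unused, this yields $|\delta_{D_2}^{out}(W_2)| \le k-1$, the desired contradiction. The ``both outgoing'' case is handled identically, with $v \in U_1 \cap U_2$ instead of $v \notin U_1 \cup U_2$, and the unused arc triggering the counting argument on either $U_1$ or $U_2$.

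The main technical point requiring care is the arc-disjointness of the paths $P_e$, since the definition in the excerpt only enforces internal vertex-disjointness and parallel arcs in $D_1$ could in principle allow two distinct paths to share an arc. The suppression characterization of \cref{serdzftuvbhjnik}\ref{it:bbbb} is exactly what promotes internal vertex-disjointness to arc-disjointness and makes the final counting step go through cleanly.
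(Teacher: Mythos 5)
Your proof is correct, but it takes a genuinely different route from the paper's. The paper argues by tracking an invariant through the suppression sequence: since $e$ leaves a tight $s_1$-cut and $f$ a tight $s_2$-cut, neither arc (nor any descendant) can ever be deleted without destroying Steiner rooted $k$-arc-connectivity; and since $v$ always has two incident arcs of the same orientation that are descendants of $e$ and $f$, it can never be deleted or suppressed, so $v$ must survive to the final graph. Your argument is instead a single global cut-count: you use the suppression characterization only to upgrade the paths in $\mathcal{P}$ to pairwise arc-disjoint ones and to observe that a non-image vertex either lies on no path or is internal to exactly one (using one incoming and one outgoing arc), so by $3$-regularity at most one of $e,f$ is used; then, taking the tight $s_2$-cut $U_2$ containing the unused arc $f$, the arcs of $\delta^{out}_{D_2}(\psi^{-1}(U_2))$ inject into $\delta^{out}_{D_1}(U_2)\setminus\{f\}$ via arc-disjointness, contradicting $d^{out}_{D_2}(\psi^{-1}(U_2))\geq k$. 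Both proofs ultimately rest on the same characterization (\cref{serdzftuvbhjnik}\ref{it:bbbb}), but yours replaces the step-by-step process argument with a static counting argument against a cut in $D_2$, which is shorter and arguably cleaner. One minor remark: when you relabel so that ``$f$ is unused,'' bear in mind that $e$ and $f$ are not interchangeable (each is tied to its own terminal), so the relabeling must also swap $s_1/s_2$ and $U_1/U_2$; you do acknowledge this at the end, but the phrasing in the middle paragraph could be tightened. It would also be worth stating explicitly that, when you re-derive $\mathcal{P}$ from the suppression sequence, the mapping $\psi$ is unchanged, since the conclusion is about $\psi$ rather than $\mathcal{P}$.
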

\begin{proof}
    Assume that $v$ is not in the image of $\psi$. By applying \cref{serdzftuvbhjnik}\ref{it:bbbb}, this implies that we can obtain a directed graph $D'_1=(V'_1, A'_1)$ by repeatedly deleting edges and deleting or suppressing vertices not in $(S+t)$, that is $(S+t)$-isomorphic to $D_2$ and $v \notin A'_1$.
    
    Throughout the process, after every edge deletion, vertex deletion, and vertex suppression, the current graph remains Steiner rooted $k$-arc-connected. As $e$ and $f$ leave a tight $s_1$-cut and a tight $s_2$-cut, respectively, this implies that the union of any $k$ pairwise edge-disjoint $r$-$s_1$ paths contains $e$, and the union of any $k$ pairwise edge-disjoint $r$-$s_2$ paths contains $f$. Consequently, the same holds for any descendants of $e$ and $f$; hence, we can never delete $e$, $f$, their descendants, or the vertices incident to them. 
    
    From these observations, by induction on the number of steps in the process, we can prove that after every step, there is a descendant of $e$ and a descendant of $f$ incident to $v$. Assume that this holds at some point, and perform one more edge deletion, vertex deletion, or vertex suppression. We cannot delete or suppress $v$, since it has two incident edges, a descendant of $e$ and a descendant of $f$, both of which are either incoming or outgoing. Moreover, as observed earlier, we cannot delete the descendants of $e$ and $f$ incident to $v$, nor the other vertices incident to these edges. Finally, since no other operation can destroy the property that a descendant of $e$ and a descendant of $f$ are incident to $v$, it follows that $v$ was never deleted or suppressed during the process, a contradiction. 
\end{proof}

\subsection{Feasible Orientations with Small Feedback Arc Set}
\label{sec:feasible}

The goal of this section is to prove that a minimal 3-regular instance with a sufficiently large number of vertices has no feasible orientation with a small feedback arc set. All results in this section are essentially contained in~\cite{langberg2006encoding}, either explicitly or as consequences of the arguments presented there. However, our setup is slightly different, so we include the proofs to keep the paper self-contained. 

\begin{lem}\label{lem:ess}
    Let $D=(V,A)$ be a minimally Steiner rooted $k$-arc-connected $3$-regular directed graph with root $r$ and terminal set $S\subseteq V-r$, and let $C\subseteq A$ be a directed cycle. For each $s\in S$, let $Q_s$ be an inclusionwise minimal subset of arcs that is the arc-disjoint union of $k$ directed $r$-$s$ paths. Then, 
    \begin{enumerate}[label=(\alph*)]\itemsep0em
        \item $C\subseteq \bigcup_{s\in S} Q_s$,\label{it:aaa}
        \item $C\setminus Q_s\neq\emptyset$ for all $s\in S$.\label{it:bbb}
    \end{enumerate}
\end{lem}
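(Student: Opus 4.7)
My plan is to prove the two parts separately, with part (a) following from the minimality of $D$ and part (b) from the minimality of $Q_s$; neither argument needs both hypotheses simultaneously.

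For part \ref{it:aaa}, I would argue by contradiction. Suppose there exists an arc $a \in C$ such that $a \notin Q_s$ for every $s \in S$. Then $Q_s \subseteq A - a$ for all $s \in S$, and since each $Q_s$ by definition contains $k$ pairwise arc-disjoint directed $r$-$s$ paths, the directed graph $D - a$ is still Steiner rooted $k$-arc-connected. This contradicts the assumption that $D$ is \emph{minimally} Steiner rooted $k$-arc-connected. Note that the argument does not actually use that $C$ is a cycle: every arc of $D$ must lie in some $Q_s$.

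For part \ref{it:bbb}, suppose toward contradiction that $C \subseteq Q_s$ for some $s \in S$. I would use a flow decomposition argument. View $Q_s$ as an integer flow of value $k$ from $r$ to $s$ by putting a unit of flow on each of its arcs (so that net out-flow at $r$ is $k$, net in-flow at $s$ is $k$, and flow is conserved at every other vertex, since $Q_s$ is a union of $r$-$s$ paths). Since $C$ is a directed cycle contained in $Q_s$, subtracting one unit of flow around $C$ yields a new integer flow of value $k$ from $r$ to $s$ that is zero on $E(C)$ and one on $Q_s \setminus E(C)$. By standard flow decomposition, this new flow decomposes into $k$ pairwise arc-disjoint $r$-$s$ paths (possibly together with cycles), all of whose arcs lie in $Q_s \setminus E(C)$. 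Since $E(C) \neq \emptyset$, this gives a proper subset of $Q_s$ that is the arc-disjoint union of $k$ directed $r$-$s$ paths, contradicting the inclusionwise minimality of $Q_s$.

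I do not foresee a real obstacle; the main subtlety is simply making the flow-subtraction step in part \ref{it:bbb} precise, in particular observing that after subtracting the unit circulation along $C$, the remaining integer flow still has value $k$ and has capacity one on every arc, so a decomposition into exactly $k$ arc-disjoint $r$-$s$ paths is guaranteed.
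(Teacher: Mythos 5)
Your proposal is correct and follows essentially the same route as the paper: part (a) is identical (an arc missed by every $Q_s$ could be deleted, contradicting minimality of $D$), and for part (b) the paper observes directly that $Q_s\setminus C$ is the arc-disjoint union of $k$ directed $r$-$s$ paths together with possibly some cycles, which is precisely what your flow-subtraction argument makes explicit. The only implicit point worth noting (which both you and the paper leave tacit) is that $C$ cannot pass through $r$ or any terminal because $r$ has in-degree $0$ and terminals have out-degree $0$ in the $3$-regular setting, so subtracting a unit of flow along $C$ indeed preserves the value $k$.
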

\begin{proof}
    For \ref{it:aaa}, note that if $e\in C\setminus\bigcup_{s\in S} Q_s$, then deleting $e$ preserves Steiner-rooted $k$-arc-connectivity of $D$, contradicting the minimality of $D$. For \ref{it:bbb}, observe that if $C\subseteq Q_s$, then $Q_s\setminus C$ is the arc-disjoint union of $k$ directed $r$-$s$ paths and possibly some directed cycles, contradicting minimality of $Q_s$. 
\end{proof}

We first show that any minimally Steiner rooted $k$-arc-connected $3$-regular directed graph with two terminals and a small feedback arc set has a bounded number of vertices, a result of independent interest.

\begin{prop}\label{prop:directed graph}
    For any positive integers $k$ and $b$, there exists a constant $f_{\ref{prop:directed graph}}(k,b)$ such that every minimally Steiner rooted $k$-arc-connected $3$-regular directed graph with two terminals and minimum feedback arc set of size at most $b$ has at most $f_{\ref{prop:directed graph}}(k,b)$ vertices.
\end{prop}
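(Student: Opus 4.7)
The plan is to bound the number $n_{21}$ of merge vertices (vertices of in-degree $2$ and out-degree $1$). A degree-count on a $3$-regular instance forces $n_{12}=n_{21}+k$, hence $|V|=3+k+2n_{21}$, so it suffices to bound $n_{21}$ by a function $f_{\ref{prop:directed graph}}(k,b)$ of $k$ and $b$.

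By Lemma~\ref{lem:ess}, fix inclusionwise-minimal arc sets $Q_{s_1}$ and $Q_{s_2}$, each being an arc-disjoint union of $k$ directed $r$-to-$s_i$ paths. A short minimality argument shows that each $Q_{s_i}$ is acyclic: any directed cycle inside $Q_{s_i}$ could be removed by flow decomposition, leaving a smaller valid support of a $k$-flow from $r$ to $s_i$, contradicting minimality. Combined with Lemma~\ref{lem:ess}\ref{it:aaa}, which gives $Q_{s_1}\cup Q_{s_2}=A$, this produces the partition $A=A_{12}\sqcup A_1\sqcup A_2$ with $A_{12}=Q_{s_1}\cap Q_{s_2}$ and $A_i=Q_{s_i}\setminus Q_{s_{3-i}}$. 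A local analysis, using arc-disjointness of the paths in $Q_{s_i}$ and the fact that no such path can end at a non-terminal vertex, then shows that at every non-terminal merge vertex the out-arc lies in $A_{12}$ and the two in-arcs split one each into $A_1$ and $A_2$, with the symmetric statement at every non-terminal branch vertex.

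For the acyclic base case $b=0$, I would invoke (or re-derive) the two-terminal specialization of the Langberg--Sprintson--Bruck encoding-node bound, which yields $n_{21}=O(k^3)$. The argument associates to each merge vertex a pair of tight $r$-$s_1$ and $r$-$s_2$ cuts separating it from the root, and shows via a pigeonhole/shortcutting step that too many merges sharing the same cut-signature would allow one to reroute the flows in $Q_{s_1}$ and $Q_{s_2}$ and delete an arc, contradicting the minimality of the $Q_{s_i}$.

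For the general case, let $X=V(F)$, so $|X|\le 2b$. Since every directed cycle of $D$ uses an $F$-arc whose endpoints lie in $X$, the digraph $D-X$ is acyclic. At most $2b$ merge vertices can lie in $X$; the remaining merges sit in the acyclic portion of $D$ and retain the local $A_{12}/A_1/A_2$ structure derived above. Applying the acyclic counting of the previous step to this portion, with an additional slack of $O(bk)$ accounting for the $Q_{s_i}$-paths rerouted through $X$ by the $F$-arcs, yields a bound on $n_{21}$ depending only on $k$ and $b$. The main obstacle is the acyclic step: translating the Langberg--Sprintson--Bruck argument, originally phrased in terms of minimal multicast networks, into our minimality-of-$Q_{s_i}$ framework. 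Once the acyclic bound is in place, the cyclic extension is essentially bookkeeping.
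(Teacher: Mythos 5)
Your setup matches the paper's: the degree count giving $|V|=3+k+2n_{21}$, the acyclicity of each $Q_{s_i}$ via Lemma~\ref{lem:ess}\ref{it:bbb}, the three-way partition of $A$ into $Q_{s_1}\cap Q_{s_2}$, $Q_{s_1}\setminus Q_{s_2}$, $Q_{s_2}\setminus Q_{s_1}$ (the paper's purple/red/blue coloring), and the local structure at merge and branch vertices are all the same. Where you diverge is the counting step, and that is where your proposal has a genuine gap.

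The paper does \emph{not} split into an acyclic base case and a cyclic reduction. Instead, it introduces a third family of paths: the digraph $D_0$ obtained by deleting the purple arcs and reversing the red ones decomposes into $k$ arc-disjoint $s_1$-to-$s_2$ paths. It then proves (via the auxiliary digraphs $D_1$, $D_2$, which reverse only the red or only the blue arcs) an ordering claim: if $v_1,v_2$ are consecutive on a path $P_0$ of the $D_0$-decomposition, then $v_2$ precedes $v_1$ on any common $Q_{s_1}$-path $P_1$ and $v_1$ precedes $v_2$ on any common $Q_{s_2}$-path $P_2$. Pigeonholing over $k^3$ triples $(P_0,P_1,P_2)$, a triple with $2b+2$ common vertices yields $2b+1$ directed cycles $C_j\subseteq P_2[v_j,v_{j+1}]\cup P_1[v_{j+1},v_j]$, each arc lying in at most two of them; no $b$-arc feedback set can hit them all. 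The feedback arc set hypothesis is used \emph{once}, at the end, and no acyclic subproblem is ever isolated.

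Your plan to delete $X=V(F)$ and apply an acyclic bound to $D-X$ runs into a real obstacle, not bookkeeping: $D-X$ inherits neither minimality nor the minimality of the restricted $Q_{s_i}$'s. The Langberg--Sprintson--Bruck rerouting argument you want to invoke crucially needs minimality to derive a contradiction from too many merges with the same cut signature; after deleting $X$, a merge may be ``essential'' in $D$ only because of paths and cuts that pass through $X$, so rerouting inside $D-X$ does not produce a deletable arc in $D$. The proposed ``slack of $O(bk)$'' is unjustified: the $Q_{s_i}$-paths can cross $X$ up to $\Theta(b)$ times each, fragmenting the $k$ paths into $\Theta(bk)$ segments whose interactions you would still need to control, and you have given no mechanism analogous to the paper's ordering claim to do so. In short, the first half of your argument reproduces the paper; the second half outlines a genuinely different strategy whose central step is not worked out and whose difficulty is underestimated. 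The missing ingredient is something playing the role of the paper's $D_0$ and Claim~\ref{cl:P_012}: a way to turn ``many merges'' directly into ``many nearly-disjoint directed cycles.''
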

\begin{proof}
Let $D=(V,A)$ be a minimally Steiner rooted $k$-arc-connected $3$-regular directed graph with root $r\in V$ and set of terminals $S\coloneqq\{s_1,s_2\}\subseteq V-r$ and minimum feedback arc set of size at most $b$. For $i\in[2]$, let $Q_i\subseteq A$ be a minimum-size subset of arcs that is the union of $k$ pairwise arc-disjoint directed paths from $r$ to $s_i$. Note that $Q_i$ is acyclic by Lemma~\ref{lem:ess}\ref{it:bbb}. Furthermore, $Q_i$ contains all arcs incident to $r$ and $s_i$, and for every vertex $v\in V\setminus(S+r)$, $Q_i$ includes at most one incoming and one outgoing arc of $v$. By the minimality of $D$, every vertex $v\in V\setminus(S+r)$ in fact has exactly one incoming and one outgoing arc in $Q_i$.

Color an arc $e\in A$ red if $e\in Q_1\setminus Q_2$, blue if $e\in Q_2\setminus Q_1$, and purple if $e\in Q_1\cap Q_2$. Note that every arc $e\in A$ receives a color, again by the minimality of $D$. For the same reason, for any vertex $v\in V\setminus(S+r)$, either $v$ has two incoming arcs, one red and one blue, and one outgoing purple arc, or it has two outgoing arcs, one red and one blue, and one incoming purple arc.

We define three auxiliary directed graphs as follows. Let $D_0=(V,Q_0)$ be obtained from $D$ by deleting all purple arcs and reversing all red arcs, let $D_1=(V,A_1)$ be obtained from $D$ by reversing all red arcs, and let $D_2=(V,A_2)$ be obtained from $D$ by reversing all blue arcs. Each of $D_0$, $D_1$, and $D_2$ inherits the coloring of arcs from $D$, that is, every arc retains the color its corresponding arc had in the original directed graph. We now observe some properties of these auxiliary directed graphs.

\begin{cla}\label{cl:D_12}
    $D_1$ and $D_2$ are acyclic.
\end{cla}
\begin{claimproof}
    By symmetry, it suffices to prove the claim for $D_1$. Suppose, for a contradiction, that $D_1$ contains a directed cycle $C_1\subseteq A_1$. Clearly, $V(C_1)$ cannot contain $r$, $s_1$, or $s_2$, since none of these vertices has both positive in- and out-degrees in $D_1$.
    
    Let $e=uv$ be an arc of $C_1$ that is not blue. We claim that the next arc $f=vw$ of $C_1$ must be blue. If $e$ is red, then $f$ cannot be red, as no vertex $v\in V\setminus(S+r)$ in $D$ is incident with two red arcs. It also cannot be purple, since that would imply that both a red and a purple arc leave $v$ in $D$. Hence $f$ must be blue. If $e$ is purple, then $f$ cannot be purple, as no vertex $v\in V\setminus(S+r)$ in $D$ is incident with two purple arcs, and it cannot be red either, because then both a red and a purple arc would enter $v$ in $D$. Therefore, $f$ must again be blue.
    
    Consequently, every other arc of $C_1$ is blue, while the remaining arcs are red or purple, since two consecutive blue arcs cannot appear. Let $C\subseteq A$ be the set of arcs in $D$ corresponding to those of $C_1$. Recolor the arcs of $D$ as follows: arcs in $A\setminus C$ keep their original colors, while for arcs in $C$, purple arcs become red, red arcs become purple, and blue arcs become uncolored. Let $Q'_1$ and $Q'_2$ denote the sets of arcs that are red or purple, and blue or purple, respectively, under this new coloring.
    
    By construction, for each $i\in[2]$, the set $Q'_i$ contains all arcs incident with $r$ and $s_i$, and for every vertex $v\in V\setminus(S+r)$, it contains exactly one incoming and one outgoing arc. Hence each $Q'_i$ is the union of $k$ pairwise arc-disjoint directed $r$-$s_i$ paths, possibly together with some directed cycles. Since the set of uncolored arcs is nonempty, as every other arc of $C$ is uncolored, deleting them yields a smaller directed graph that is still Steiner rooted $k$-arc-connected, contradicting the minimality of $D$.
\end{claimproof}

\begin{cla}\label{cl:D'}
    $D_0$ is the union of $k$ pairwise arc-disjoint directed paths from $s_1$ to $s_2$, and its only isolated vertex is $r$.
\end{cla}
\begin{claimproof}
    Clearly, $d_{D_0}^{out}(s_1)=k$, $d_{D_0}^{in}(s_1)=0$, $d_{D_0}^{out}(s_2)=0$, $d_{D_0}^{in}(s_2)=k$, and $d_{D_0}^{in}(r)=d_{D_0}^{out}(r)=0$. Hence, it suffices to show that $D_0$ is acyclic and that for every $v\in V\setminus(S+r)$, we have $d_{D_0}^{in}(v)=d_{D_0}^{out}(v)=1$.

    If $C$ is a directed cycle in $D_0$, then the corresponding arcs also form a directed cycle in $D_1$, contradicting \cref{cl:D_12}. Therefore, since every vertex $v\in V\setminus(S+r)$ has either two incoming arcs, one red and one blue, or two outgoing arcs, one red and one blue in $D$, the claim follows.
\end{claimproof}

Decompose $Q_0$, $Q_1$, and $Q_2$ into $k$ pairwise arc-disjoint directed paths from $s_1$ to $s_2$, from $r$ to $s_1$, and from $r$ to $s_2$, respectively. We call a path $P$ a \emph{$Q_i$-path} if it is one of these $k$ arc-disjoint paths in the decomposition of $Q_i$. Let $P_i$ denote an arbitrary $Q_i$-path for each $i\in\{0,1,2\}$.

\begin{cla}\label{cl:P_012}
   Assume that $v_1,v_2\in V\setminus(S+r)$ are consecutive vertices on $P_0$ in this order. Suppose further that both $v_1$ and $v_2$ also lie on the paths $P_1$ and $P_2$. Then the following hold:
    \begin{enumerate}[label=(\alph*)]\itemsep0em
        \item $v_2$ precedes $v_1$ on $P_1$,\label{it:aa}
        \item $v_1$ precedes $v_2$ on $P_2$.\label{it:bb}
    \end{enumerate}
\end{cla}
\begin{claimproof}
    First, we prove \ref{it:aa}. Assume, for contradiction, that $v_1$ and $v_2$ appear in this order on $P_1$. Then the arcs of $P_1[v_1,v_2]$, together with the reversed arcs of $P_0[v_1,v_2]$, form a directed closed walk that contains a directed cycle, implying the existence of a directed cycle in $D_2$, which contradicts \cref{cl:D_12}.
    
    The proof of \ref{it:bb} is analogous. Assume, for contradiction, that $v_2$ and $v_1$ appear in this order on $P_2$. Then the arcs of $P_2[v_2,v_1]$, together with the arcs of $P_0[v_2,v_1]$, form a directed closed walk that contains a directed cycle, implying the existence of a directed cycle in $D_1$, again contradicting \cref{cl:D_12}.
\end{claimproof}

    We are now ready to prove the proposition. Set $f_{\ref{prop:directed graph}}(k,b)\coloneqq(2b+1)\cdot k^3+3$. Assume, for contradiction, that $|V\setminus(S+r)|>(2b+1)\cdot k^3$. Note that every vertex $v\in V\setminus(S+r)$ lies on a $Q_i$-path for each $i\in\{0,1,2\}$. Since there are $k^3$ possible triples $(P_0,P_1,P_2)$ with $P_i$ being a $Q_i$-path for $i\in\{0,1,2\}$, there must exist a triple of paths $(P_0,P_1,P_2)$ that share at least $2b+2$ common vertices in $V\setminus(S+r)$. Let $v_1,\ldots,v_{2b+2}$ denote these vertices, appearing on $P_0$ in this order. By \cref{cl:P_012}, the vertices $v_1,\ldots,v_{2b+2}$ appear in the same order on $P_2$ and in the reverse order on $P_1$.
    Since $P_2[v_j, v_{j+1}] \circ P_1[v_{j+1}, v_j]$ forms a directed closed walk, it contains a directed cycle $C_j$ in $P_2[v_j, v_{j+1}] \cup P_1[v_{j+1}, v_j]$ for each $j \in [2b+1]$. The subpaths $P_2[v_j, v_{j+1}]$ are pairwise arc-disjoint for all $j \in [2b+1]$, and similarly, the subpaths $P_1[v_{j+1}, v_j]$ are also pairwise arc-disjoint for all $j \in [2b+1]$. Hence, no arc $e \in A$ belongs to more than two directed cycles $C_j$. Consequently, no subset $A' \subseteq A$ of arcs with $|A'| = b$ can intersect every $C_j$ for $j \in [2b+1]$, contradicting the assumption that $D$ admits a feedback arc set of size at most $b$.
\end{proof}

With the help of Proposition~\ref{prop:directed graph}, we now prove the main result of this section.

\begin{prop}\label{prop:fac}
    For any positive integers  $b$, $k$, and $t$ with $t\geq 2$, there exists a constant $f_{\ref{prop:fac}}(b,k,t)$ such that the following holds. If $(G=(V,E),S,r)$ is a minimal 3-regular instance of \ktori with $|V| \geq f_{\ref{prop:fac}}(b,k,t)$, then no feasible orientation of $G$ has a feedback arc set of size at most $b$.
\end{prop}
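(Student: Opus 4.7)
My plan is to reduce the $t$-terminal statement to the two-terminal case handled by Proposition~\ref{prop:directed graph}, by a pairwise decomposition over the $\binom{t}{2}$ pairs of terminals. So, let $(G,S,r)$ be a minimal $3$-regular instance with $|S|=t$, and let $\dG$ be a feasible orientation with feedback arc set of size at most $b$; the goal is to bound $|V|$.

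First I would observe that $\dG$ is itself \emph{minimally} Steiner rooted $k$-arc-connected: if some arc $e$ of $\dG$ could be deleted while preserving Steiner rooted $k$-arc-connectivity, then $\dG-e$ would still be a feasible orientation of $G$ minus the underlying edge, contradicting the minimality of $G$. For each $s\in S$, fix an inclusionwise minimal arc set $Q_s\subseteq A(\dG)$ that decomposes into $k$ arc-disjoint directed $r$-$s$ paths and set $\phi(e)=\{s\in S:e\in Q_s\}$; by \cref{lem:ess}\ref{it:aaa}, $\phi(e)\neq\emptyset$ for every arc. The central structural observation is that for each $v\in V\setminus(S+r)$ with, say, $(d_{in}^{\dG}(v),d_{out}^{\dG}(v))=(1,2)$ and arcs $e_{in},e_{out}^1,e_{out}^2$, every $Q_s$-path through $v$ enters via $e_{in}$ and must exit through exactly one of $e_{out}^1,e_{out}^2$ (since $v$ has a unique in-arc and $Q_s$ is arc-disjoint). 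This yields the disjoint union $\phi(e_{in})=\phi(e_{out}^1)\sqcup\phi(e_{out}^2)$, and symmetrically in the $(2,1)$ case. In particular $|\phi(e_{in})|\ge 2$, so one can pick $s_i\in\phi(e_{out}^1)$ and $s_j\in\phi(e_{out}^2)$, making all three arcs at $v$ lie in $Q_{s_i}\cup Q_{s_j}$.

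Next, for each pair $(s_i,s_j)\in\binom{S}{2}$ I would construct a minimally Steiner rooted $k$-arc-connected $3$-regular sub-directed graph $\tilde{\dG}^{ij}\subseteq\dG$ with terminal set $\{s_i,s_j\}$, by greedily deleting arcs inessential for the two-terminal problem and then suppressing $1$-in-$1$-out vertices and removing isolated ones. Since $\tilde{\dG}^{ij}$ satisfies the hypotheses of Proposition~\ref{prop:directed graph} with feedback arc set at most $b$, we get $|V(\tilde{\dG}^{ij})|\le f_{\ref{prop:directed graph}}(k,b)$. Using the structural observation, every $v\in V\setminus(S+r)$ is captured by the pair $(s_i,s_j)$ chosen as above, provided the minimization that produces $\tilde{\dG}^{ij}$ is performed so as to retain the three arcs incident to $v$. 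Summing over the $\binom{t}{2}$ pairs gives $|V\setminus(S+r)|\le \binom{t}{2}\cdot f_{\ref{prop:directed graph}}(k,b)$, so setting $f_{\ref{prop:fac}}(b,k,t)\coloneqq (t+2)+\binom{t}{2}\cdot f_{\ref{prop:directed graph}}(k,b)$ works.

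The main obstacle will be to justify that the three arcs at $v$ can indeed be kept in some minimization $\tilde{\dG}^{ij}$ for the chosen pair. The worry is that the shared arc $e_{in}(v)\in Q_{s_i}\cap Q_{s_j}$ might be deletable from the initial sub-directed graph $Q_{s_i}\cup Q_{s_j}$ via an alternative routing that bypasses $v$, which would then cause $v$ to be suppressed. I expect to handle this either by showing that the three arcs at $v$ are mandatory in any minimal SRKA sub-directed graph of $\dG$ for $\{s_i,s_j\}$ (leveraging that the disjoint-union structure $\phi(e_{in})=\phi(e_{out}^1)\sqcup\phi(e_{out}^2)$ forces both the $s_i$- and the $s_j$-flow through $v$), or by refining the greedy construction so that arcs incident to each candidate vertex are considered for deletion last. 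In either case the bound $\binom{t}{2}\cdot f_{\ref{prop:directed graph}}(k,b)$ should survive.
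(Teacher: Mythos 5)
Your high-level plan---decomposing over the $\binom{t}{2}$ terminal pairs and reducing each to Proposition~\ref{prop:directed graph}---is exactly the paper's strategy, and your observations that $\dG$ is minimally Steiner rooted $k$-arc-connected and that $\phi(e_{in})=\phi(e_{out}^1)\sqcup\phi(e_{out}^2)$ are both correct. However, the gap you flag at the end is genuine, and neither proposed fix closes it as stated. Your pair assignment depends on the \emph{fixed} choice of $Q_{s_i}$ and $Q_{s_j}$, and $e_{out}^1\in Q_{s_i}$ only records that this particular $k$-flow from $r$ to $s_i$ passes through $v$; it does not rule out an alternative $k$-flow from $r$ to $s_i$ that bypasses $v$ entirely. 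So ``the disjoint-union structure forces both the $s_i$- and the $s_j$-flow through $v$'' is false as a claim about all flows, and $v$ may well be suppressible when you minimize for $\{s_i,s_j\}$; reordering the greedy deletions (your fix (b)) cannot help if some arc at $v$ is genuinely deletable regardless of order.

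The paper repairs this by selecting the pair via \emph{tight cuts} rather than via a fixed flow decomposition. Because $\dG$ is minimally SRKA, each of the two same-direction arcs at $v$, say $e$ and $f$, leaves a tight $s$-cut for some terminal $s$; this is strictly stronger than $e\in Q_s$ since it forces $e$ into \emph{every} collection of $k$ arc-disjoint $r$-$s$ paths, and flow conservation at a $3$-regular vertex forces the two resulting terminals $s_1,s_2$ to be distinct. Tightness persists under the arc deletions and vertex suppressions used in the greedy minimization for $\{s_1,s_2\}$, and this invariant is exactly what shows $v$ can never be deleted or suppressed---this is the content of Lemma~\ref{lem:essentialvertex}, which the paper proves by induction on the minimization process and then invokes. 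If you replace your $\phi$-based selection with the tight-cut-based one and supply the Lemma~\ref{lem:essentialvertex} argument, your counting bound $\binom{t}{2}\cdot f_{\ref{prop:directed graph}}(k,b)+O(t)$ goes through and matches the paper's.
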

\begin{proof}
    Let $f_{\ref{prop:fac}}(b,k,t)\coloneqq(f_{\ref{prop:directed graph}}(k,b)-3) \cdot \binom{t}{2}+t+1$. Take a minimal $3$-regular instance on at least $f_{\ref{prop:fac}}(b,k,t)$ vertices, and assume that $\dG$ is a feasible orientation of the underlying graph with a feedback arc set of size at most $b$. For any two terminals $s,s'\in S$, there exists a minimally Steiner rooted $k$-arc-connected 3-regular directed graph $D_{s,s'}=(V_{s,s'}, A_{s,s'})$ with respect to the set of terminals $\{s,s'\}$ that is an $(\{r,s,s'\})$-fixed directed topological minor of $\dG$ by \cref{lem:orintedtop}; let $\psi_{s,s'} \colon V_{s,s'}\to V$ denote the corresponding mapping. Since $\dG$ admits a feedback arc set of size at most $b$, the same holds for each $D_{s,s'}$. Hence, by \cref{prop:directed graph}, we have $|V_{s,s'}|\le f_{\ref{prop:directed graph}}(k,b)$ for every $s,s'\in S$. Consequently, there exists a vertex $v\in V$ that is not contained in the image of any of the maps $\psi_{s,s'}$.
    
    In $\dG$, the vertex $v$ has either two incoming and one outgoing arc, or vice versa. We consider the case $d_{\dG}^{in}(v)=2$ and $d_{\dG}^{out}(v)=1$; the other case is analogous. Let $e$ and $f$ be the two edges entering $v$. Since $G$ is minimal, $e$ leaves a tight $s_1$-cut and $f$ leaves a tight $s_2$-cut for some terminals $s_1, s_2 \in S$ in $\dG$. Clearly, $s_1\neq s_2$, since $e$ and $f$ cannot both belong to the union of $k$ pairwise arc-disjoint $r$-$s$ paths. However, then \cref{lem:essentialvertex} implies that $v$ is in the image of $\psi_{s_1, s_2,}$, yielding a contradiction.
\end{proof}

\subsection{Ordering of Directed Cycles}
\label{sec:ordering}

Let $D=(V,A)$ be a Steiner rooted $k$-arc-connected directed graph with root $r\in V$ and terminal set $S\subseteq V-r$. We say that a cut $U$ \emph{properly intersects} a directed cycle $C$ if $V(C)\cap U \neq \emptyset$ and $V(C) \setminus U \neq \emptyset$. In particular, this implies that at least one arc of $C$ leaves the set $U$. Then, we call $C$ \emph{$s$-essential} for some $s\in S$ if there exists a tight $s$-cut properly intersecting $C$. We call a sequence of directed cycles $(C_1,\ldots,C_q)$ \emph{$s$-ordered} if there exist tight $s$-cuts $U_1,\ldots,U_q$ such that $V(C_i)\subseteq U_j$ for all $i<j$, $V(C_i)\cap U_j=\emptyset$ for all $i>j$, and $U_i$ properly intersects $V(C_i)$  for every $i$. We say that the cuts $U_1, \ldots , U_q$ \emph{witness} that $(C_1, \ldots , C_q)$ form an $s$-ordered sequence of directed cycles. A set of directed cycles $\cC$ is \emph{$s$-orderable}, if there is an ordering of the members of $\cC$ in which they form an $s$-ordered collection.

The proof of the following lemma relies on standard submodularity arguments.

\begin{lem}\label{lem:intuni}
    Let $D=(V,A)$ be a Steiner rooted $k$-arc-connected directed graph with root $r\in V$ and terminal set $S\subseteq V-r$.
    \begin{enumerate}[label=(\alph*)]\itemsep0em
        \item Let $s\in S$ and $\cU\subseteq 2^V$ be a family of tight $s$-cuts. Then, both $\bigcap_{U\in\cU} U$ and $\bigcup_{U\in\cU} U$ are tight $s$-cuts. \label{it:a}
        \item Let $S'\subseteq S$ and, for $s\in S'$, let $U_s$ be a tight $s$-cut. Suppose that there exists $s_0\in S'$ such that $d^{out}_D(U_s\cup U_{s_0})\ge k$ for all $s\in S'-s_0$. Then $\bigcap_{s\in S'} U_s$ is a tight $s$-cut for all $s\in S'$. \label{it:b}
    \end{enumerate}
\end{lem}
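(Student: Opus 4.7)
The plan is to exploit submodularity of the out-degree function together with the lower bound $d_D^{out}(U) \geq k$ that Steiner rooted $k$-arc-connectivity enforces on every $s$-cut.

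For part (a), the starting point is the two-set case. If $U_1, U_2$ are tight $s$-cuts, then both $U_1 \cap U_2$ and $U_1 \cup U_2$ contain $r$ and avoid $s$, hence are $s$-cuts with out-degree at least $k$. Submodularity of $d_D^{out}$ then gives
\[
2k \;=\; d_D^{out}(U_1) + d_D^{out}(U_2) \;\geq\; d_D^{out}(U_1 \cap U_2) + d_D^{out}(U_1 \cup U_2) \;\geq\; 2k,
\]
and equality throughout forces both the intersection and the union to be tight. The general case follows by a straightforward induction on $|\mathcal{U}|$, which is legal since $V$, and hence $2^V$, is finite.

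For part (b), the subtlety is that $U_s \cup U_{s_0}$ may contain $s$ or $s_0$, so it need not be a cut at all, and hence Steiner rooted $k$-arc-connectivity does not directly lower-bound its out-degree. The extra hypothesis $d_D^{out}(U_s \cup U_{s_0}) \geq k$ provides precisely what is needed. The plan is then to fix any $s \in S' - s_0$ and apply submodularity to $U_s$ and $U_{s_0}$: since $U_s \cap U_{s_0}$ is both an $s$-cut and an $s_0$-cut, its out-degree is at least $k$, and the same bookkeeping as in the proof of (a) forces $d_D^{out}(U_s \cap U_{s_0}) = k$. Consequently every $U_s \cap U_{s_0}$ with $s \in S' - s_0$ is a tight $s_0$-cut.

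Next I would apply part (a) to the family $\{U_s \cap U_{s_0} : s \in S' - s_0\} \cup \{U_{s_0}\}$ of tight $s_0$-cuts, whose overall intersection is exactly $\bigcap_{s \in S'} U_s$. By (a), this intersection is itself a tight $s_0$-cut, so $d_D^{out}\bigl(\bigcap_{s \in S'} U_s\bigr) = k$. Finally, for any target $s \in S'$, the intersection contains $r$ but excludes $s$, so it is an $s$-cut; together with the equality of out-degrees just established, this makes it a tight $s$-cut, as required. The main conceptual hurdle is recognising that the extra hypothesis is tailored precisely so that the pairwise submodularity step yields tightness of $U_s \cap U_{s_0}$; once this is in hand, everything else reduces to clean applications of (a).
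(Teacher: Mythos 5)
Your proof is correct and follows essentially the same route as the paper: prove the two-set intersection/union claim by submodularity and extend by induction for part (a), then in part (b) use the hypothesis $d^{out}_D(U_s\cup U_{s_0})\ge k$ together with submodularity to show each $U_s\cap U_{s_0}$ is a tight $s_0$-cut and invoke part (a). The only cosmetic differences are that the paper's induction in (a) carries both $\bigcap$ and $\bigcup$ simultaneously rather than treating the two-set case separately, and the paper dispenses with the case $|S'|=1$ explicitly instead of absorbing it by including $U_{s_0}$ in the family.
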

\begin{proof}
    First we prove \ref{it:a} by induction on $|\cU|$. The statement is trivial when $|\cU|=1$, so assume that $|\cU|\ge 2$. Take an arbitrary set $U_0\in\cU$, and define $U_\cap\coloneqq\bigcap_{U\in\cU\setminus\{U_0\}}U$ and $U_\cup\coloneqq\bigcup_{U\in\cU\setminus\{U_0\}}U$. By induction, both $U_\cap$ and $U_\cup$ are tight $s$-cuts. Let $X\in\{U_\cap,U_\cup\}$. Since both $U_0 \cap X$ and $U_0 \cup X$ are $s$-cuts and $D$ is Steiner rooted $k$-arc-connected, we have $d^{out}_D(U_0 \cap X) \ge k$ and $d^{out}_D(U_0 \cup X) \ge k$. Thus, by submodularity of the out-degree function, we get
    \begin{align*}
        k+k
        &=d^{out}_D(U_0)+d^{out}_D(X)\\
        &\geq d^{out}_D(U_0\cap X)+d^{out}_D(U_0\cup X)\\
        &\geq k+k.
    \end{align*}
    This implies that $d^{out}_D(U_0\cap X)=d^{out}_D(U_0\cup X)=k$, yielding \ref{it:a}.

    Now we prove \ref{it:b}. The statement is trivial when $|S'|=1$, so assume that $|S'|\geq 2$. By assumption, $d^{out}_D(U_s\cup U_{s_0})\geq k$ for every $s\in S'-s_0$. Furthermore, since $r\in U_s\cap U_{s_0}\subseteq V\setminus\{s,s_0\}$ and $D$ is Steiner rooted $k$-arc-connected, we have $d^{out}_D(U_s\cap U_{s_0})\geq k$. Thus, we get 
    \begin{align*}
        k+k
        &=d^{out}_D(U_s)+d^{out}_D(U_{s_0})\\
        &\geq d^{out}_D(U_s\cap U_{s_0})+d^{out}_D(U_s\cup U_{s_0})\\
        &\geq k+k.
    \end{align*}
    This implies that $d^{out}_D(U_s\cap U_{s_0})=k$. Now set $W_s\coloneqq U_s\cap U_{s_0}$ for each $s\in S'-s_0$. By the above, each $W_s$ is a tight $s_0$-cut, and therefore part \ref{it:a} implies that $\bigcap_{s\in S'-s_0} W_s=\bigcap_{s\in S'} U_s$ is a tight $s_0$-cut. As the intersection contains $r$ but no terminal from $S'$, the lemma follows.
\end{proof}

Our argument relies on some fundamental results from extremal combinatorics, the first one being Ramsey's theorem~\cite{Ramsey1930}.

\begin{thm}[Ramsey]\label{thm:ramsey}
    For any positive integers $p$ and $q$, there exists a constant $R(p,q)$ such that for any coloring of the edges of the complete graph $K_{R(p,q)}$ with $q$ colors, there exists a complete subgraph $K_p$ with all edges having the same color.
\end{thm}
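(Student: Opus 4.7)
The overall plan is to prove Ramsey's theorem by a two-stage induction: first establish the two-color case ($q=2$), then bootstrap it to arbitrary $q$ by a clustering argument on the color set. For the two-color case, I would prove the sharper off-diagonal statement that for any positive integers $s,t$, there exists an integer $R(s,t)$ such that every $2$-coloring of the edges of $K_{R(s,t)}$ with red and blue contains either a red $K_s$ or a blue $K_t$. I would proceed by induction on $s+t$. The base cases $R(2,t)=t$ and $R(s,2)=s$ are immediate, since either all edges are, say, red (giving a red $K_t$) or some edge is blue (giving a blue $K_2$). For the inductive step, I would set $N \coloneqq R(s-1,t)+R(s,t-1)$, fix an arbitrary vertex $v$ of $K_N$, and partition the remaining $N-1$ vertices into the red-neighborhood $R_v$ and the blue-neighborhood $B_v$ of $v$. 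Pigeonhole gives either $|R_v| \geq R(s-1,t)$ or $|B_v| \geq R(s,t-1)$; in the first case, applying the inductive hypothesis to $R_v$ yields either a blue $K_t$ or a red $K_{s-1}$ to which $v$ can be adjoined, producing a red $K_s$, and the second case is symmetric.

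For the multi-color case, I would use induction on $q$, taking the two-color statement as the base. Assume the theorem holds for $q-1 \geq 2$ colors with some Ramsey number $R_{q-1}(p)$ guaranteeing a monochromatic $K_p$ in any $(q-1)$-coloring. Given a $q$-coloring of a complete graph, I would merge colors $1,\ldots,q-1$ into a single ``super-color'' and keep color $q$ as the other, obtaining an auxiliary $2$-coloring. Choosing the starting graph to have at least $R_2(R_{q-1}(p),p)$ vertices, where $R_2$ denotes the two-color Ramsey number, the two-color case yields either a $K_p$ whose edges all have color $q$ (already a monochromatic $K_p$) or a copy of $K_{R_{q-1}(p)}$ whose edges use only colors $1,\ldots,q-1$; the inductive hypothesis then extracts a monochromatic $K_p$ inside this clique.

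The main obstacle is conceptual rather than technical. In the two-color step, one has to pick the recursive parameter so that the pigeonhole splits cleanly, which forces the bound $R(s,t) \leq R(s-1,t)+R(s,t-1)$; in the multi-color step, the pitfall is circularity, and I would avoid it by fully resolving the two-color case before inducting on $q$. The resulting bounds grow astronomically---roughly iterated exponentials in $p$ and $q$---but the theorem only asserts the \emph{existence} of a finite $R(p,q)$, which the construction plainly delivers.
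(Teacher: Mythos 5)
The paper does not prove this theorem; it simply cites Ramsey's 1930 result and uses it as a black box in the proof of Proposition~\ref{prop:order}. Your proof is the standard, correct argument: the off-diagonal two-color bound $R(s,t)\le R(s-1,t)+R(s,t-1)$ established by a pigeonhole split of the neighborhood of one vertex, followed by the color-merging induction $R_q(p)\le R_2\bigl(R_{q-1}(p),p\bigr)$ to handle $q$ colors. One tiny slip in exposition: when justifying the base cases you wrote ``all red gives a red $K_t$,'' which conflates the $R(2,t)$ and $R(s,2)$ cases (for $R(2,t)=t$ the all-monochromatic outcome is a blue $K_t$, for $R(s,2)=s$ it is a red $K_s$), but the underlying reasoning is obviously sound and this does not affect the argument. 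Since the paper provides no proof to compare against, there is nothing further to reconcile; your proposal fills the gap correctly and by the canonical route.
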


A set family $\cF$ is called a {\it sunflower} if there exists a set $P$, called the \emph{kernel} of the sunflower, such that $F_1 \cap F_2=P$ for any two distinct sets $F_1, F_2 \in \cF$. The following result is due to Erdős and Rado~\cite{erdos1960intersection}.

\begin{thm}[Erdős-Rado]\label{thm:sunflower}
    For any positive integers $p$ and $q$, there exists a constant $f_{\ref{thm:sunflower}}(p,q)$ such that any family of sets $\cF$ with $|\cF| \geq f_{\ref{thm:sunflower}}(p,q)$ and $|F| \leq p$ for all $F \in \cF$ contains a sunflower of size $q$.
\end{thm}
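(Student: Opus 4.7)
The plan is to prove the statement by induction on the maximum set size $p$, establishing the explicit bound $f_{\ref{thm:sunflower}}(p,q) = p!\,(q-1)^p + 1$.

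In the base case $p = 1$, every member of $\cF$ is a singleton, and any $q$ distinct singletons are pairwise disjoint, so they form a sunflower of size $q$ with empty kernel. This matches $f_{\ref{thm:sunflower}}(1,q) = q$.

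For the inductive step, I would assume the bound for sets of size at most $p - 1$ and consider a family $\cF$ of sets of size at most $p$ with $|\cF| \ge f_{\ref{thm:sunflower}}(p, q)$. The key move is to take a \emph{maximal} subfamily $\mathcal{G} \subseteq \cF$ of pairwise disjoint sets. If $|\mathcal{G}| \ge q$, then $\mathcal{G}$ is itself a sunflower of size $q$ with empty kernel and we are done. Otherwise $|\mathcal{G}| \le q - 1$, so the union $X \coloneqq \bigcup_{G \in \mathcal{G}} G$ satisfies $|X| \le p(q-1)$, and by maximality of $\mathcal{G}$ every $F \in \cF$ meets $X$. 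Double-counting incidences in $\{(F, x) : F \in \cF,\ x \in F \cap X\}$ yields some $x \in X$ contained in at least $\lceil |\cF| / |X| \rceil \ge f_{\ref{thm:sunflower}}(p-1, q)$ members of $\cF$. Setting $\cF' \coloneqq \{F \setminus \{x\} : F \in \cF,\ x \in F\}$ produces a family of that many distinct sets of size at most $p - 1$, since the map $F \mapsto F \setminus \{x\}$ is injective on sets containing $x$. The inductive hypothesis then yields a sunflower $F'_1, \ldots, F'_q$ in $\cF'$ with some kernel $P$, and reinserting $x$ into each member yields the sunflower $\{F'_i \cup \{x\}\}_{i=1}^{q}$ in $\cF$ with kernel $P \cup \{x\}$, closing the induction.

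There is no serious obstacle — this is the classical two-case analysis of Erdős and Rado, and the arithmetic closes cleanly through the recurrence $f_{\ref{thm:sunflower}}(p, q) \le p(q-1) \cdot f_{\ref{thm:sunflower}}(p-1, q) + 1$, which telescopes into the stated factorial bound. The only point requiring care is verifying that the pigeonhole step actually produces the required $f_{\ref{thm:sunflower}}(p-1,q)$ sets rather than one fewer, which is why the ``$+1$'' is included in the bound. Since the resulting constant is an explicit and computable function of $p$ and $q$, it is more than sufficient for the algorithmic applications later in the paper.
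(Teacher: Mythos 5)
The paper states this as a result of Erdős and Rado~\cite{erdos1960intersection} and does not prove it, so there is no in-paper argument to compare against; your reconstruction is the classical two-case Erdős--Rado argument and is correct. Either a maximal pairwise-disjoint subfamily has size $\ge q$ and is already a sunflower with empty kernel, or its union $X$ has $|X|\le p(q-1)$ and meets every member of $\cF$, so some $x\in X$ lies in at least $\lceil|\cF|/|X|\rceil\ge (p-1)!(q-1)^{p-1}+1=f_{\ref{thm:sunflower}}(p-1,q)$ members; deleting $x$ (which is injective on sets containing $x$), applying the inductive hypothesis, and reinserting $x$ closes the induction.

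One small bookkeeping slip in your final paragraph: the recurrence you write, $f_{\ref{thm:sunflower}}(p,q)\le p(q-1)\cdot f_{\ref{thm:sunflower}}(p-1,q)+1$, does \emph{not} telescope to $p!(q-1)^p+1$ — unrolling it produces extra lower-order terms. What your pigeonhole step actually establishes is the slightly sharper $f_{\ref{thm:sunflower}}(p,q)\le p(q-1)\bigl(f_{\ref{thm:sunflower}}(p-1,q)-1\bigr)+1$, and this recurrence does telescope exactly to $p!(q-1)^p+1$. Since you also verify the bound directly (computing $\lceil(p!(q-1)^p+1)/(p(q-1))\rceil\ge(p-1)!(q-1)^{p-1}+1$), the proof stands; only the recurrence as displayed should be corrected.
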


We are ready to prove one of the main results of this section.

\begin{prop} \label{prop:order}
    For any positive integers $q$ and $k$, there exists a constant $f_{\ref{prop:order}}(q,k)$ such that the following holds. Let $D=(V,A)$ be a Steiner rooted $k$-arc-connected directed graph with root $r$ and single terminal $s$. Let $\cC$ be a family of vertex-disjoint $s$-essential directed cycles in $D$ with $|\mathcal C|\ge f_{\ref{prop:order}}(q,k)$. Then, there exists $C_1,\ldots,C_q\in\cC$ such that $(C_1,\ldots,C_q)$ forms an $s$-ordered sequence of directed cycles.
\end{prop}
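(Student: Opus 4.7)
The plan is to pick a tight $s$-cut $U_C$ properly intersecting $C$ for each $C \in \cC$ (which exists by $s$-essentiality), and then apply Ramsey's theorem to extract a subfamily of $\cC$ with a highly uniform pairwise structure. For each ordered pair of distinct cycles $(C,C')$, the set $V(C')$ stands in exactly one of three relations to $U_C$: contained in $U_C$, disjoint from $U_C$, or properly intersected by $U_C$. Hence each unordered pair of cycles falls into one of at most six type-combinations, and a first application of \cref{thm:ramsey} with six colors yields a large monochromatic subfamily $\cC_0 \subseteq \cC$. The goal is to show that only the asymmetric type (contained, disjoint) can survive once $|\cC_0|$ is sufficiently large in terms of $k$, and that this surviving type directly provides the desired $s$-ordered sequence.

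The decisive observation is that any tight $s$-cut $U$ has exactly $k$ outgoing arcs, and each pairwise vertex-disjoint cycle properly intersected by $U$ must use at least one of these, with different cycles using distinct arcs; consequently, $U$ properly intersects at most $k$ cycles of $\cC$. Combined with \cref{lem:intuni}\ref{it:a}, which ensures that intersections and unions of tight $s$-cuts remain tight $s$-cuts, this eliminates all five forbidden types. In the (properly intersecting, properly intersecting) case the single cut $U_C$ already properly intersects every element of $\cC_0$; in the (contained, contained) case the intersection $\bigcap_{C \in \cC_0} U_C$ does; and in the (disjoint, disjoint) case the union $\bigcup_{C \in \cC_0} U_C$ does. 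For the two mixed types involving ``properly intersecting'', one orients each pair along the X-direction to obtain a tournament on $\cC_0$ and applies Ramsey's theorem (or Erdős–Moser) to extract a transitive subtournament, along which a single chosen cut at the end of the chain properly intersects every preceding cycle. Each of the five forbidden types therefore forces $|\cC_0|$ to be bounded by a function of $k$ alone.

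Provided $|\cC_0|$ exceeds this bound, only the asymmetric type remains: for each pair exactly one cycle is contained in the other's cut and the other is disjoint from the first's cut. Orienting $C \to C'$ whenever $V(C') \subseteq U_C$ (equivalently $V(C) \cap U_{C'} = \emptyset$) yields a tournament on $\cC_0$, and a final Ramsey step produces a transitive subtournament on $q$ cycles $D_1 \to D_2 \to \cdots \to D_q$. Reversing the linear order (taking $C_j = D_{q+1-j}$) and using $U_{C_j}$ itself as the witnessing cut at position $j$ then yields all three conditions of $s$-orderedness: $V(C_i) \subseteq U_{C_j}$ for $i<j$, $V(C_i) \cap U_{C_j} = \emptyset$ for $i>j$, and $U_{C_j}$ properly intersects $C_j$ by construction. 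The composed Ramsey numbers supply an explicit value of $f_{\ref{prop:order}}(q,k)$. The main obstacle is the case analysis for the five forbidden types, and in particular the observation that \cref{lem:intuni}\ref{it:a} allows one to aggregate many cuts into a single tight cut that simultaneously witnesses proper intersection with every cycle of $\cC_0$, thereby violating the hard bound of $k$ outgoing arcs.
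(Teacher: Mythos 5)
Your proof is correct, but it takes a genuinely different route from the paper's. You pick one witnessing tight cut $U_C$ per cycle and apply \cref{thm:ramsey} directly with six colors (the six unordered multisets of relations in $\{$contained, disjoint, properly-intersecting$\}$), then dispose of five of the six color classes using the ``at most $k$'' bound together with \cref{lem:intuni}\ref{it:a} (for the symmetric classes) and a transitive-subtournament argument à la Erd\H{o}s--Moser (for the mixed classes and for extracting the final chain from the surviving asymmetric class). The paper instead starts from the cuts rather than the cycles: it first invokes the sunflower lemma (\cref{thm:sunflower}) on the bounded-size sets $\cC_U$ of cycles properly intersected by each tight cut $U$, producing a family of cuts each with a ``private'' cycle such that no other cut in the family properly intersects it. This pre-processing eliminates the ``properly-intersecting'' cross-relations by construction, so only four Ramsey colors remain and no tournament argument is needed --- the monochromatic clique in the blue or green case is directly the desired ordered sequence. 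The two proofs are of comparable length and both rest on the same two pillars (a tight $s$-cut properly intersects at most $k$ vertex-disjoint cycles; tight $s$-cuts are closed under union and intersection). Your version has the advantage of avoiding the sunflower lemma entirely and yields a cleaner, purely Ramsey-theoretic bound; the paper's version avoids the tournament detour and keeps the case analysis at four colors. One small caution in your write-up: the parenthetical ``equivalently'' when orienting the tournament in the $\{I,D\}$ case is not a logical equivalence in general --- it is only valid because, within an $\{I,D\}$-monochromatic clique, exactly one of $\rho(C,C')=I$ and $\rho(C',C)=I$ holds for each pair --- but the resulting tournament is well-defined and the argument goes through as stated.
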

\begin{proof}
    Let $f_{\ref{prop:order}}(q,k)\coloneqq k\cdot f_{\ref{thm:sunflower}}(k,1+R(\max\{q,k+1\},4))$. For ease of discussion, let $m\coloneqq \max\{q,k+1\}$ and $c\coloneqq R(m,4)$. Note that if an $s$-cut $U$ properly intersects $V(C)$ for a directed cycle $C$, then at least one edge of $C$ leaves $U$. Hence a tight $s$-cut $U$ can properly intersect at most $k$ cycles in $\cC$.
    
    For a tight $s$-cut $U$, let $\cC_U\subseteq\cC$ be the set of cycles in $\cC$ that $U$ properly intersects. We know that $|\cC_U|\le k$ for all tight $s$-cuts $U$, and for every directed cycle $C\in\cC$ there exists an $s$-cut $U$ with $C\in\cC_U$. Therefore, there are at least $f_{\ref{prop:order}}(n,k)/k$ distinct subsets of $\cC$ of the form $\cC_U$ for some tight $s$-cut $U$. By \cref{thm:sunflower}, there exists a set $\cU$ of $c+1$ tight $s$-cuts such that the sets $\cC_U$ for $U\in\cU$ are distinct and form a sunflower; let $\cP\subseteq\cC$ be its kernel. There is at most one $s$-cut $U_P\in\cU$ with $\cC_{U_P}=\cP$, so by removing one cut, we may assume that $|\cU|=c$ and $\cC_U\neq \cP$ for all $U\in\cU$. Consequently, for each $U\in\cU$, there exists a directed cycle $C_U\in\cC_U$ with $C_U\notin \cP$. Hence $U$ properly intersects $V(C_U)$, and no other $U'\in\cU$ properly intersects $V(C_U)$.
    
    Let $\cU=\{U_1,\ldots,U_c\}$. We now color the edges of the complete graph $K_{[c]}$ with four colors: red, blue, green, and yellow. For $1\le i<j\le c$, define the color of edge ${i,j}$ as follows.
    \begin{enumerate}\itemsep0em
        \item If $U_i \cap V(C_{U_j})=\emptyset$ and $U_j \cap V(C_{U_i})=\emptyset$, the edge is red.
        \item If $V(C_{U_j}) \subseteq U_i$ and $U_j \cap V(C_{U_i})=\emptyset$, the edge is blue.
        \item If $U_i \cap V(C_{U_j})=\emptyset$ and $V(C_{U_i}) \subseteq U_j$, the edge is green.
        \item If $V(C_{U_j}) \subseteq U_i$ and $V(C_{U_i}) \subseteq U_j$, the edge is yellow.
    \end{enumerate}
    By \cref{thm:ramsey}, there exists $I\subseteq[c]$ with $|I|=m$ such that all edges between pairs of elements in $I$ have the same color. 

    If the color is red, let $U\coloneqq\bigcup_{i\in I}U_i$, and if it is yellow, let $U\coloneqq\bigcap_{i\in I}U_i$. In both cases, \cref{lem:intuni}\ref{it:a} implies that $U$ is a tight cut. For each $i\in I$, $U_i$ properly intersects $V(C_{U_i})$, while for all $j\ne i$, the sets $U_j$ are either disjoint from or contain $V(C_{U_i})$, depending on the color. Hence $U$ properly intersects every $V(C_{U_i})$, contradicting that a tight cut cannot properly intersect $k+1$ directed cycles from $\cC$.
    
    If the color is blue or green, let $I=\{i_1,\ldots,i_m\}$ with $i_1<\ldots<i_m$. Then the cycles $(C_{U_{i_m}},\ldots,C_{U_{i_1}})$ form an $s$-ordered collection witnessed by $U_{i_m},\ldots,U_{i_1}$ in the blue case, and $(C_{U_{i_1}},\ldots,C_{U_{i_m}})$ form one witnessed by $U_{i_1},\ldots,U_{i_m}$ in the green case, completing the proof.
\end{proof}

Our next goal is to generalize \cref{prop:order} to directed graphs with multiple terminals. Specifically, we show that if $\cC$ is a sufficiently large collection of directed cycles that are essential for all terminals, then there exist $C_1,\ldots,C_q\in\cC$ such that for every terminal $s$, either $(C_1,\ldots,C_q)$ or $(C_q,\ldots,C_1)$ is $s$-ordered. To prove this, we will use the Erdős-Szekeres theorem~\cite{erdos1935combinatorial}.

\begin{thm}[Erdős-Szekeres] \label{thm:erdosszekeres}
    For any positive integer $q$, there exists a constant $f_{\ref{thm:erdosszekeres}}(q)$ such that every sequence of real numbers $c_1,\ldots,c_{f_{\ref{thm:erdosszekeres}}(q)}$ contains an increasing or decreasing subsequence of length $q$.
\end{thm}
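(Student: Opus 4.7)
The plan is to prove this classical result via the standard pigeonhole argument attributed to Seidenberg. I would set $f_{\ref{thm:erdosszekeres}}(q) \coloneqq (q-1)^2 + 1$, and associate to every index $i \in [(q-1)^2+1]$ a pair $(a_i, b_i) \in \bZ_+ \times \bZ_+$, where $a_i$ is the length of a longest strictly increasing subsequence of $c_1, \ldots, c_i$ ending at $c_i$, and $b_i$ is the length of a longest strictly decreasing such subsequence. Both quantities are at least $1$ by taking the singleton $c_i$.

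The main step is to show that the pairs $(a_i, b_i)$ are pairwise distinct. For $i < j$, if $c_i < c_j$ then appending $c_j$ to a longest increasing subsequence ending at $c_i$ yields $a_j \geq a_i + 1$; symmetrically, if $c_i > c_j$ then $b_j \geq b_i + 1$. The degenerate case $c_i = c_j$ can be handled by a consistent tie-breaking perturbation, replacing each $c_\ell$ by $c_\ell + \varepsilon \ell$ for a sufficiently small $\varepsilon > 0$; this preserves both the strict ordering on distinct values and the lengths of monotone subsequences, so we may assume the $c_i$ are pairwise distinct. Hence the pairs $(a_i, b_i)$ for $i \in [(q-1)^2+1]$ are all distinct.

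To conclude, suppose for contradiction that no strictly increasing or strictly decreasing subsequence of length $q$ exists. Then $1 \leq a_i, b_i \leq q-1$ for every $i$, so the pairs $(a_i, b_i)$ take at most $(q-1)^2$ values, contradicting injectivity once the sequence has length $(q-1)^2 + 1$. Therefore some $a_i$ or $b_i$ is at least $q$, producing the desired monotone subsequence. I anticipate no substantive obstacle here, as this is a textbook pigeonhole argument; the only minor technicality is the handling of ties, which is resolved by the perturbation described above.
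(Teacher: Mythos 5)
The paper does not prove \cref{thm:erdosszekeres}; it simply cites it as the classical Erdős--Szekeres theorem~\cite{erdos1935combinatorial} and uses it as a black box in the proof of \cref{cor:erdosszekeres}. Your proof is the standard Seidenberg pigeonhole argument and is correct, giving the sharp constant $f_{\ref{thm:erdosszekeres}}(q) = (q-1)^2 + 1$ (any $f_{\ref{thm:erdosszekeres}}(q) \ge (q-1)^2+1$ suffices for the paper's purposes, since it only needs the existence of some computable constant). One small imprecision worth noting: the perturbation $c_\ell \mapsto c_\ell + \varepsilon\ell$ does not literally preserve the lengths of all strictly monotone subsequences (it can create new strictly increasing ones out of constant runs), but this only helps --- every strictly increasing (resp.\ strictly decreasing) subsequence of the perturbed sequence is non-decreasing (resp.\ strictly decreasing) in the original, which is all the weaker statement of the theorem requires. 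Alternatively, one can avoid the perturbation entirely by letting $a_i$ track longest non-decreasing and $b_i$ track longest strictly decreasing subsequences ending at $c_i$, which makes the pairs $(a_i,b_i)$ distinct with no case analysis on ties.
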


Instead of \cref{thm:erdosszekeres}, we will use the following simple corollary.

\begin{cor}\label{cor:erdosszekeres}
    For any positive integers $p$ and $q$, there exists a constant $f_{\ref{cor:erdosszekeres}}(p,q)$ such that the following holds. Let $A$ be a set of size $f_{\ref{cor:erdosszekeres}}(p,q)$, and let there be $p$ total orderings of $A$. Then, it is always possible to choose a subset $\{a_1,\ldots,a_q\}\subseteq A$ such that, in each of the $p$ orderings, the elements appear either in the order $(a_1,\ldots,a_q)$ or in the reverse order $(a_q,\ldots,a_1)$.
\end{cor}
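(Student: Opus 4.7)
My plan is to prove the corollary by induction on $p$, using \cref{thm:erdosszekeres} at each step to reduce the number of orderings by one. The base case $p=1$ holds trivially with $f_{\ref{cor:erdosszekeres}}(1,q)=q$: any $q$-element subset is listed by the unique ordering in some sequence $(a_1,\ldots,a_q)$, which satisfies the condition.

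For the inductive step, I would set $f_{\ref{cor:erdosszekeres}}(p,q) := f_{\ref{thm:erdosszekeres}}(f_{\ref{cor:erdosszekeres}}(p-1,q))$. Given orderings $\pi_1,\ldots,\pi_p$ of $A$ with $|A|=f_{\ref{cor:erdosszekeres}}(p,q)$, enumerate the elements of $A$ according to $\pi_1$ as $b_1,\ldots,b_{|A|}$, and consider the sequence of positions $\pi_p(b_1),\ldots,\pi_p(b_{|A|})$ in the $p$-th ordering. By \cref{thm:erdosszekeres}, this sequence of length $f_{\ref{thm:erdosszekeres}}(f_{\ref{cor:erdosszekeres}}(p-1,q))$ contains a monotone subsequence of length $f_{\ref{cor:erdosszekeres}}(p-1,q)$; let $A' \subseteq A$ be the set of corresponding elements. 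By construction, $\pi_p$ restricted to $A'$ is either identical to $\pi_1$ restricted to $A'$ (in the increasing case) or exactly its reverse (in the decreasing case).

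Next, I would apply the inductive hypothesis to $A'$ together with the $p-1$ orderings $\pi_1,\ldots,\pi_{p-1}$, obtaining a subset $\{a_1,\ldots,a_q\} \subseteq A'$ such that each of $\pi_1,\ldots,\pi_{p-1}$ lists it either as $(a_1,\ldots,a_q)$ or as $(a_q,\ldots,a_1)$. Since $\pi_p$ agrees with $\pi_1$ up to reversal on all of $A'$, and in particular on $\{a_1,\ldots,a_q\}$, the ordering $\pi_p$ also lists the chosen subset as $(a_1,\ldots,a_q)$ or as $(a_q,\ldots,a_1)$, completing the induction.

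The argument is essentially mechanical once the reduction is set up properly, so I do not anticipate any significant obstacle. The only point requiring care is how the property for the ``dropped'' ordering $\pi_p$ is inherited: by extracting $A'$ via \cref{thm:erdosszekeres} applied to $\pi_p$ viewed as a sequence indexed by $\pi_1$, the behaviour of $\pi_p$ on $A'$ is tied to that of $\pi_1$, so once $\pi_1$ is handled by the inductive call there is no separate bookkeeping needed for $\pi_p$.
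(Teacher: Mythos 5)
Your proof is correct and follows the same basic strategy as the paper: induction on $p$, peeling off one ordering per step via Erdős--Szekeres. The only (inessential) difference is the order of the two steps — you apply \cref{thm:erdosszekeres} to the full set first (tying $\pi_p$ to $\pi_1$) and then invoke the inductive hypothesis on the resulting $A'$, whereas the paper first invokes the inductive hypothesis on $\prec_1,\ldots,\prec_{p-1}$ to extract a large monotone subset and then applies \cref{thm:erdosszekeres} with respect to $\prec_p$; the recursions $f_{\ref{thm:erdosszekeres}}(f_{\ref{cor:erdosszekeres}}(p-1,q))$ and $f_{\ref{cor:erdosszekeres}}(p-1,f_{\ref{thm:erdosszekeres}}(q))$ yield the same bound.
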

\begin{proof}
    The corollary is trivial for $p=1$, with $f_{\ref{cor:erdosszekeres}}(1,q)\coloneqq q$. We proceed by induction on $p$ and claim that $f_{\ref{cor:erdosszekeres}}(p,q)\coloneqq f_{\ref{cor:erdosszekeres}}(p-1,f_{\ref{thm:erdosszekeres}}(q))$ is sufficient. Assume we are given a set $A$ with $|A|=f_{\ref{cor:erdosszekeres}}(p,q)$ and total orders $\prec_1,\ldots,\prec_p$ on $A$. By the induction hypothesis, we can find a subset $\{a_1,\ldots,a_{f_{\ref{thm:erdosszekeres}}(q)}\}\subseteq A$ that is either increasing or decreasing with respect to all of $\prec_1,\ldots,\prec_{p-1}$. For each $i$ with $1\le i\le f_{\ref{thm:erdosszekeres}}(q)$, define $c_i\coloneqq |\{a_j\colon 1\le j\le f_{\ref{thm:erdosszekeres}}(q),a_j \prec_p a_i\}|$. By \cref{thm:erdosszekeres}, there exists a subsequence $c_{i_1},\ldots,c_{i_q}$ that is increasing or decreasing. Consequently, the corresponding subset $\{a_{i_1},\ldots,a_{i_q}\}$ satisfies the conditions of the corollary.
\end{proof}

Now we are ready to generalize \cref{prop:order} to more terminals. 

\begin{prop} \label{prop:moreterminalorder}
    For any positive integers $q$, $k$, and $t$, there exists a constant $f_{\ref{prop:moreterminalorder}}(q,k,t)$ such that the following holds. Let $D=(V,A)$ be a Steiner rooted $k$-arc-connected directed graph with root $r$ and a set $S\subseteq V-r$ of $t$ terminals. Let $\cC$ be a family of vertex-disjoint directed cycles in $D$ such that $|\cC|\ge f_{\ref{prop:moreterminalorder}}(q,k,t)$ and each $C\in\cC$ is $s$-essential for every $s\in S$. Then, there exist cycles $C_1,\ldots,C_q\in\cC$ such that for every $s\in S$, either $(C_1,\ldots,C_q)$ or $(C_q,\ldots,C_1)$ forms an $s$-ordered collection of cycles.
\end{prop}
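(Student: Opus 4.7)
The plan is to reduce to Proposition~\ref{prop:order} and Corollary~\ref{cor:erdosszekeres} by an iterative extraction argument, one terminal at a time, while maintaining the $s_j$-ordered property for all previously handled terminals as a restriction of orderings to subsequences.

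The key preliminary observation is that the $s$-ordered property is inherited by subsequences: if $(C_1, \ldots, C_N)$ is $s$-ordered with witnesses $U_1, \ldots, U_N$, and $i_1 < \cdots < i_r$, then $(C_{i_1}, \ldots, C_{i_r})$ is $s$-ordered with witnesses $U_{i_1}, \ldots, U_{i_r}$. Indeed, $a < b$ implies $i_a < i_b$, so $V(C_{i_a}) \subseteq U_{i_b}$ follows from the original property; the disjointness case is symmetric; and proper intersection is inherited directly. Note that reversing a subsequence does not in general preserve $s$-orderedness, which is exactly why the conclusion of the proposition allows the reverse orientation.

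Enumerate $S = \{s_1, \ldots, s_t\}$. Set $N_t \coloneqq f_{\ref{cor:erdosszekeres}}(t, q)$ and define recursively $N_{i-1} \coloneqq f_{\ref{prop:order}}(N_i, k)$ for $i = t, t-1, \ldots, 1$, and set $f_{\ref{prop:moreterminalorder}}(q, k, t) \coloneqq N_0$. Starting from $\cC_0 \coloneqq \cC$ of size at least $N_0$, I iteratively build subfamilies $\cC_i \subseteq \cC_{i-1}$ equipped with linear orderings $\pi_1, \ldots, \pi_i$, maintaining the invariant that $\cC_i$, ordered by $\pi_j$, is $s_j$-ordered for every $j \le i$. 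At step $i$, the graph $D$ is Steiner rooted $k$-arc-connected with respect to $\{s_i\}$ (since it is with respect to all of $S$), and each cycle in $\cC_{i-1}$ is $s_i$-essential; hence Proposition~\ref{prop:order} applied to the family $\cC_{i-1}$ of size $\ge N_{i-1} = f_{\ref{prop:order}}(N_i, k)$ returns an $s_i$-ordered sequence of $N_i$ cycles, which I declare to be $\cC_i$ equipped with ordering $\pi_i$. Since $\cC_i \subseteq \cC_{i-1}$, for every $j < i$ the ordering $\pi_j$ restricts to an ordering on $\cC_i$ that, by the subsequence observation above, remains $s_j$-ordered, preserving the invariant.

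After $t$ iterations, the set $\cC_t$ has $|\cC_t| = N_t = f_{\ref{cor:erdosszekeres}}(t, q)$ and carries $t$ orderings $\pi_1, \ldots, \pi_t$ for which $\cC_t$ is $s_j$-ordered under $\pi_j$. Applying Corollary~\ref{cor:erdosszekeres} with $p = t$ yields a subset $\{C_1, \ldots, C_q\} \subseteq \cC_t$ such that for every $j \in [t]$, the elements appear under $\pi_j$ either in the order $(C_1, \ldots, C_q)$ or in the reverse order $(C_q, \ldots, C_1)$. In the first case, $(C_1, \ldots, C_q)$ is a subsequence of the $s_j$-ordered sequence $\pi_j$ on $\cC_t$, hence $s_j$-ordered; in the second, $(C_q, \ldots, C_1)$ is such a subsequence and is $s_j$-ordered. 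This yields the desired conclusion for every terminal simultaneously. The only mild technical point to verify carefully is the subsequence inheritance lemma stated above; once that is in place, the argument is a clean peeling of terminals followed by a single application of the Erdős-Szekeres-type corollary.
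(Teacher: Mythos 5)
Your proof is correct and takes essentially the same approach as the paper: both peel off terminals one at a time via Proposition~\ref{prop:order} using the same recursion $N_{i-1}=f_{\ref{prop:order}}(N_i,k)$ starting from $N_t=f_{\ref{cor:erdosszekeres}}(t,q)$, then finish with Corollary~\ref{cor:erdosszekeres}. Your explicit "subsequence inheritance" observation for $s$-orderedness is exactly the step the paper leaves implicit when it asserts that $\cC_i$ remains $s_\ell$-orderable for $\ell<i$, and it is a welcome clarification.
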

\begin{proof}
    Let $g_t(q,k,t)\coloneqq f_{\ref{cor:erdosszekeres}}(t,q)$, and for $i\in[t]$ in a decreasing order, define $g_{i-1}(q,k,t)\coloneqq f_{\ref{prop:order}}(g_i(q,k,t),k)$. Set $f_{\ref{prop:moreterminalorder}}(q,k,t)\coloneqq g_0(q,k,t)$.
    
    Let $S=\{s_1,\ldots,s_t\}$ be an arbitrary ordering of the terminals. For $0\leq i\leq t$, we construct a family $\cC_i\subseteq\cC$ with $|\cC_i|=g_i(q,k,t)$ such that $\cC_i$ is $s_\ell$-orderable for all $\ell\in[i]$. Define $\cC_0\coloneqq\cC$. Assume that $\cC_{i-1}$ has been defined for some $i\in[t]$. Applying \cref{prop:order} to $\cC_{i-1}$ with terminal $s_i$ yields a subfamily $\cC_i\subseteq\cC_{i-1}$ that is $s_\ell$-orderable for all $\ell\in[i]$. Moreover, by the definition of $g_{i-1}(q,k,t)$, we have $|\cC_i|=g_i(q,k,t)$.

    Since $\cC_t$ is $s_\ell$-orderable for all $\ell\in[t]$, each terminal defines an ordering on $\cC_t$, in which the cycles are $s_\ell$-ordered. Finally, as $g_t(q,k,t)=f_{\ref{cor:erdosszekeres}}(t,q)$, applying \cref{cor:erdosszekeres} to $\cC_t$ with these $t$ orderings yields an appropriate collection of $q$ cycles.
\end{proof}

\subsection{Bounding the Number of Vertices}
\label{sec:bounding}

For $k,t\in\bZ_+$, let $N_{k,t}$ denote the maximum number of vertices in a minimal 3-regular instance of \ktori. At this point, it is not even clear that $N_{k,t}$ is finite; our next goal is to show that it is. As an illustration, we first verify this when $k=1$.

\begin{lem}\label{lem:easy}
$N_{1,t}=2t$ for all $t\in\bZ_+$. 
\end{lem}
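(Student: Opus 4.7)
\medskip
\noindent\emph{Proof plan.}
The plan is to show the upper bound $N_{1,t}\le 2t$ by a structural argument and the matching lower bound $N_{1,t}\ge 2t$ by an explicit construction.

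First I would observe that, when $k=1$, an orientation of $G$ is feasible if and only if every terminal is reachable from $r$, which happens precisely when $r$ and all terminals lie in the same connected component of $G$ (one direction is trivial, and the other follows by orienting an arbitrary spanning tree of that component away from $r$ and orienting all remaining edges arbitrarily). Consequently, an instance is feasible if and only if $r$ and $S$ lie in a common component of $G$, and minimality translates into the following condition: for every edge $e\in E$, in $G-e$ there is a terminal disconnected from $r$.

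Using this equivalence, I would argue that $G$ must be a tree containing $r\cup S$. Any edge lying in a component that does not contain $r$ could be deleted without affecting feasibility, so by minimality $G$ is connected. If $G$ contained a cycle, then removing any edge of the cycle would still leave $G$ connected, hence feasible, again contradicting minimality. Therefore $G$ is a tree, so $|E|=|V|-1$. Combining this with the degree constraints of a 3-regular instance with $k=1$ (the root and each terminal have degree $1$, every other vertex has degree $3$) gives $1+t+3(n-t-1)=2(n-1)$, which simplifies to $n=2t$. This yields the upper bound and in fact shows that every minimal $3$-regular instance has exactly $2t$ vertices.

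For the matching lower bound, I would use \cref{lem:binary} to obtain a rooted binary tree $T$ with $t$ leaves and $2t-1$ vertices (for $t=1$ one takes the trivial instance consisting of a single edge between $r$ and the sole terminal). Attaching a new vertex $r$ by a single edge to the degree-$2$ root of $T$ yields a tree $G$ on $2t$ vertices in which $r$ has degree $1$, the $t$ leaves have degree $1$ (to be designated as terminals), and all remaining vertices have degree $3$. Since $G$ is a tree, every edge is a bridge separating $r$ from at least one leaf, so $G$ is a minimal $3$-regular instance. I do not anticipate a genuine obstacle here; the only care needed is in the small-case verification $t=1$, where \cref{lem:binary} does not apply and the construction must be handled separately.
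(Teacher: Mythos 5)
Your proof is correct and follows essentially the same route as the paper: both arguments establish that a minimal $3$-regular instance with $k=1$ is a tree whose leaves are exactly $S+r$ and then count degrees via the handshake lemma and $|E|=|V|-1$ to get $|V|=2t$. You flesh out the "straightforward to check" step (connectivity and acyclicity from minimality) and supply an explicit construction via Lemma~\ref{lem:binary} for the matching lower bound, which the paper leaves implicit; both of these are fine and do not change the nature of the argument.
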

\begin{proof}
It is straightforward to check that for $k=1$, any minimal 3-regular instance $G=(V,E)$ is a tree in which all internal vertices have degree 3, and the leaves are exactly the root and the terminals. By the Handshaking lemma and $|E|=|V|-1$, we get $2\cdot (|V|-1)=3\cdot (|V|-|S|-1)+|S|+1$, hence $|V|=2\cdot |S|$. 
\end{proof}

Next, we show that a feasible orientation of a minimal 3-regular instance cannot contain a large set with in-degree at most $k-1$. The proof is highly technical, and readers may prefer to focus on the statement on a first reading. Nevertheless, this lemma plays a crucial role in the overall argument.

\begin{lem}\label{lem:case1}
    Let $(G=(V,E),S,r)$ be a minimal $3$-regular instance of \ktori\ with $k \ge 2$, and let $\dG$ be a feasible orientation. Then, for any subset $X \subseteq V$ with $r\in X$ and $\ell \coloneqq d^{out}_{\dG}(X) \le k-1$, we have $|V\setminus X|\leq N_{\ell,t}$.
\end{lem}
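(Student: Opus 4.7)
The strategy is to extract from $V\setminus X$ a surrogate minimal $3$-regular instance of \ltori\ with $t$ terminals, whose vertex count is therefore at most $N_{\ell,t}$ by definition. As a first observation, since $d^{out}_{\dG}(X)=\ell<k$, no terminal of $S$ can lie in $V\setminus X$: otherwise $X$ would be an $s$-cut with fewer than $k$ leaving arcs, contradicting Steiner rooted $k$-arc-connectivity of $\dG$. Hence $S+r\subseteq X$, and in particular every vertex of $V\setminus X$ has degree exactly $3$ in $G$.

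Next, I build a surrogate instance $(\hat G,\hat S,\hat r)$ on vertex set $(V\setminus X)\cup\{\hat r\}\cup\{\hat s_1,\ldots,\hat s_t\}$. The edges of $\hat G$ consist of all edges of $G$ inside $V\setminus X$, one edge from $\hat r$ to $y$ for each $\dG$-arc $(x,y)$ with $x\in X$ and $y\in V\setminus X$ (giving $\hat r$ exactly $\ell$ incident edges), and for each terminal $s_i\in S$ a carefully chosen collection of $\ell$ edges between $V\setminus X$ and $\hat s_i$. Concretely, for each $i$ fix an inclusionwise minimal union $Q_i$ of $k$ arc-disjoint directed $r$-$s_i$ paths in $\dG$; the number of arcs of $Q_i$ entering $V\setminus X$ equals the number leaving it, and both are at most $\ell$. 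Use the tails of these leaving arcs (padded to exactly $\ell$ with parallel copies if needed) as the neighbors of $\hat s_i$ in $\hat G$. Orienting edges inside $V\setminus X$ as in $\dG$, the $\hat r$-edges outward, and the $\hat s_i$-edges inward then yields a Steiner rooted $\ell$-arc-connected orientation of $\hat G$, since each $Q_i$ restricted to $V\setminus X$ gives $\ell$ arc-disjoint traces from the $\hat r$-entry arcs to the $\hat s_i$-exit arcs.

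Applying Lemmas~\ref{lem:degk} and~\ref{lem:deg3} together with \cref{lem:ortopmin}, I extract from $\hat G$ a minimal $3$-regular yes-instance $\hat G^{*}$ of \ltori\ as an $(\hat S+\hat r)$-fixed topological minor of $\hat G$; by definition $|V(\hat G^{*})|\le N_{\ell,t}$. The key minimality transfer is that any vertex of $V\setminus X$ not used by $\hat G^{*}$ can be ``spliced out'' of $\hat G$, and the corresponding edge deletions and vertex suppressions lift back to $G$ while preserving a feasible Steiner rooted $k$-arc-connected orientation (the inherited orientation from $\dG$ with the spliced portion replaced by $\hat G^{*}$). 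If $|V\setminus X|>N_{\ell,t}$, such a splicing produces a proper subgraph of $G$ still admitting a feasible orientation, contradicting minimality of $(G,S,r)$. Hence $|V\setminus X|\le N_{\ell,t}$.

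The main obstacle is the construction in the second paragraph: choosing the edges incident to each $\hat s_i$ and their multiplicities so that $\hat G$ is simultaneously a valid $3$-regular \ltori-instance admitting a Steiner rooted $\ell$-arc-connected orientation and so that the topological-minor reduction of $\hat G$ can be lifted back to a genuine shrinking of $G$. In particular, one must handle the case where different terminals' arc-disjoint path systems $Q_i$ share arcs of $\dG$ across the cut, and make sure the padding of $\hat s_i$'s degree does not introduce slack that could be removed without affecting $G$.
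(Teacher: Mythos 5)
Your high-level plan — replace $X$ by a small gadget, get a surrogate \ltori\ instance on $V\setminus X$, and invoke the definition of $N_{\ell,t}$ — is the right idea and matches the paper's strategy. However, your execution has several genuine gaps that you partly acknowledge but do not close, and the paper's construction is specifically engineered to avoid them.

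First, a degree obstruction you do not handle: the path systems $Q_i$ for different terminals may share arcs crossing the boundary of $X$. If an arc $uv$ with $u\in V\setminus X$, $v\in X$ lies in both $Q_1$ and $Q_2$, your rule assigns $u$ an edge to $\hat s_1$ \emph{and} to $\hat s_2$ while $u$ loses only the single edge $uv$, so $d_{\hat G}(u)=4$ and $\hat G$ is no longer $3$-regular. You then invoke Lemmas~\ref{lem:degk} and~\ref{lem:deg3} to repair this, but doing so inserts auxiliary vertices that break the direct correspondence between $|V\setminus X|$ and $|V(\hat G^*)|$, and you never recover a clean inequality $|V\setminus X|\le N_{\ell,t}$ from $|V(\hat G^*)|\le N_{\ell,t}$. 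The paper sidesteps this entirely: for each boundary arc it introduces a dedicated chain of auxiliary vertices $w^e_s,w^f_s$ that ``fan out'' to exactly the relevant terminals, so that the root has out-degree $\ell$, each terminal has in-degree $\ell$ (the identity $|Q_s\cap\delta^{out}(X)|=|Q_s\cap\delta^{in}(X)|$ makes the bookkeeping close), and every other vertex of the surrogate has degree $3$ by construction, with $V\setminus X$ embedded intact.

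Second, and more seriously, your padding with parallel copies to force degree $\ell$ at each $\hat s_i$ introduces exactly the kind of slack you worry about. For the argument to work you need (as the paper proves in its Claim \ref{cl:fixed}) that no edge of the surrogate outside the copy of $E[V\setminus X]$ can be deleted while preserving a Steiner rooted $\ell$-arc-connected orientation, and moreover that these edges are forced to carry the same orientation. That is precisely what makes the surrogate minimal, which in turn is what yields $|V\setminus X|\le|U|\le N_{\ell,t}$ directly. With padded parallel edges this forcing fails. Relatedly, your final ``splice out and lift back to $G$'' step is the heart of the matter and is not a small detail: to contradict minimality of $G$ you must exhibit a single edge of $G$ whose deletion leaves a feasible orientation, and you need to show how a topological-minor reduction of $\hat G$ translates into such an edge deletion. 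The paper does this via Claim \ref{cl:iwurfb}, which shows that the restriction of any certificate $Q'_s$ to $\dA[V\setminus X]$ gives a one-to-one matching between the boundary arcs of $Q_s$, and hence that deleting an edge inside $V\setminus X$ from $H$ corresponds to deleting the same edge from $G$. Without the forced-boundary structure there is no such correspondence, and your proof does not supply one.

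In short: your proposal identifies the correct target object but lacks the key device — the $w^e_s/w^f_s$ fan-out gadgets that simultaneously achieve exact $3$-regularity, rigidity of the boundary, and a one-to-one lifting of deletions back to $G$ — and your padded construction cannot be made to play this role.
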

\begin{proof}    
    The main idea of the proof can be summarized as follows. Starting from $\dG$, we construct a smaller directed graph $\dH$ that preserves the subgraph of $\dG$ induced by $V\setminus X$. The construction uses a contraction-like operation on the vertices in $X$, refined to ensure 3-regularity. We then show that the underlying undirected graph $H$ defines a minimal 3-regular instance of the \ltori problem. Since in this case the number of vertices in $H$ is at most $N_{\ell,t}$ and $V\setminus X$ is contained in the vertex set of $H$, the claimed upper bound follows.

    Since $\dG=(V,\dE)$ is Steiner rooted $k$-arc-connected and $d^{out}_{\dG}(X)\le k-1$, we have $S\subseteq X$. We use the convention that for an edge $e\in E$, its directed counterpart is denoted by $\de$, and vice versa. For $s\in S$, let $Q_s$ be the union of $k$ arc-disjoint directed $r$-$s$ paths in $\dG$. By the minimality of $G$ and the feasibility of $\dG$, $\dG$ is minimally Steiner rooted $k$-arc-connected, so $\bigcup_{s\in S} Q_s=\dE$. 
    
    We now construct a new graph $H=(U,A)$ by first defining its directed version $\dH=(U,\dA)$ as follows. Add $r$, $S$, and $V\setminus X$ to $U$, together with all arcs in $\dE[V\setminus X]$. Fix an arbitrary ordering $S=\{s_1,\ldots,s_t\}$. For each $\de=uv\in\delta^{out}_{\dG}(X)$, let $\{s\in S \colon \de\notin Q_s\}=\{s_{i_1},\ldots,s_{i_p}\}$ with $i_1<\ldots<i_p$. Add new vertices $w^{e}_{s_{i_1}},\ldots,w^e_{s_{i_p}}$ to $U$, the directed $r$-$v$ path $P_e=\{rw^e_{s_{i_1}},w^e_{s_{i_1}}w^e_{s_{i_2}},\ldots,w^e_{s_{i_{p-1}}}w^e_{s_{i_p}},w^e_{s_{i_p}}v\}$, and arcs $w^e_{s_{i_j}}s_{i_j}$ for each $j\in[p]$. Observe that if $\{s\in S \colon \de\notin Q_s\}=\emptyset$, then a single arc from $r$ to $v$ is added. For each $\df=uv\in\delta^{in}_{\dG}(X)$, let $\{s\in S \colon f\in Q_s\}=\{s_{i_1},\ldots,s_{i_p}\}$ with $i_1<\ldots<i_p$; note that this is nonempty for all $\df\in\delta^{in}_{\dG}(X)$. Add new vertices $w^f_{s_{i_1}},\ldots,w^f_{s_{i_{p-1}}}$ to $U$, the directed $u$-$s_{i_p}$ path $P_f=\{uw^f_{s_{i_1}},w^f_{s_{i_1}}w^f_{s_{i_2}},\ldots,w^f_{s_{i_{p-1}}}s_{i_p}\}$, and arcs $w^f_{s_{i_j}}s_{i_j}$ for each $j\in[p-1]$. For ease of discussion, we set $w^f_{s_{i_p}}\coloneqq s_{i_p}$. 
    Let us emphasize the subtle difference between how arcs in $\delta^{out}_{\dG}(X)$ and $\delta^{in}_{\dG}(X)$ are treated: in the former case we consider terminals $s$ with $\de\notin Q_s$, while in the latter case, we consider those with $\df\in Q_s$. 
    
    Let $\dH=(U,\dA)$ be the directed graph obtained, and let $H=(U,A)$ be its underlying undirected graph. By construction, $r$ has out-degree $\ell$ and in-degree $0$, each terminal in $S$ has out-degree $0$ and in-degree $\ell$, and, by the minimality of $\dG$, every vertex in $U\setminus(S+r)$ has either one incoming and two outgoing arcs, or two incoming and one outgoing arc.

    \begin{cla}\label{cl:oneless}
        $\dH$ is Steiner rooted $\ell$-arc-connected.
    \end{cla}
    \begin{proof}
        For each terminal $s\in S$, we construct $\ell$ pairwise arc-disjoint directed $r$-$s$ paths in $\dH$ using two types of paths: one for each arc in $\delta^{out}_{\dG}(X)\cap Q_s$, and one for each arc in $\delta^{out}_{\dG}(X)\setminus Q_s$. 
        
        First, consider an arc $\de=uv\in\delta^{out}_{\dG}(X)\cap Q_s$, and fix a decomposition of $Q_s$ into $k$ pairwise arc-disjoint directed $r$-$s$ paths. Among these, exactly one contains $\de$; denote it by $P$. Let $\df=xy\in\delta^{in}_{\dG}(X)$ be the first arc of $P$ after $\de$ that enters $X$ again; since $S\subseteq X$, such an arc exists, and it may happen that $v=x$. Then $P_e[r,v]\circ P[v,x]\circ P_f[x,w_s^f]\circ\{w_s^f s\}$  is a directed $r$-$s$ path in $D$, where $P[v,x]$ and $\{w_s^f s\}$ are interpreted as empty if $v=x$ and if $w_s^f=s$, respectively. Repeating this construction for every arc in $\delta^{out}_{\dG}(X)\cap Q_s$ yields $|\delta^{out}_{\dG}(X)\cap Q_s|$ pairwise arc-disjoint directed $r$-$s$ paths.

        Now consider an arc $\de=uv\in\delta^{out}_{\dG}(X)\setminus Q_s$. Then $P_e[r,w_s^e]\circ\{w_s^e s\}$ is a directed $r$-$s$ path in $D$, which is arc-disjoint from all paths constructed in the previous step. Doing this for every arc in $\delta^{out}_{\dG}(X)\setminus Q_s$ gives another $|\delta^{out}_{\dG}(X)\setminus Q_s|$ such paths. 
        
        Altogether, we have $|\delta^{out}_{\dG}(X)\cap Q_s|+|\delta^{out}_{\dG}(X)\setminus Q_s|=|\delta^{out}_{\dG}(X)|=\ell$ pairwise arc-disjoint directed $r$-$s$ paths, concluding the proof of the claim.
    \end{proof}

    Our goal is to show that $(H=(U,A),S,r)$ is a minimal $3$-regular instance of \ltori. For that, we first show that edges of $H$ not induced by $V\setminus X$ cannot be deleted.

    \begin{cla}\label{cl:fixed}
        If $g\in A$ is such that $H'\coloneqq H-g$ admits a Steiner rooted $\ell$-arc-connected orientation $\dHp$ then $g\in A[V\setminus X]$. Furthermore, every edge in $A\setminus A[V\setminus X]$ has the same orientation in $\dHp$ as in $\dH$.
    \end{cla}
    \begin{proof}
        Since the root and the terminals have degree $\ell$ in $H$, no edge incident to them can be deleted. Moreover, such edges are necessarily oriented away from the root and toward the terminals in $\dHp$, respectively.

        Now take an arc $\df=uv\in\delta^{in}_{\dG}(X)$, and let $\{s\in S \colon f\in Q_s\}=\{s_{i_1},\ldots,s_{i_p}\}$ with $i_1<\ldots<i_p$. Consider the underlying edges of $P_f=\{uw^f_{s_{i_1}},w^f_{s_{i_1}}w^f_{s_{i_2}},\ldots,w^f_{s_{i_{p-1}}}s_{i_p}\}$. Since the arcs $\{w^f_{s_{i_j}}s_{i_j}\colon j\in[p-1]\}\cup\{w^f_{s_{i_{p-1}}}s_{i_p}\}$ are present in $\dHp$ and required for ensuring Steiner rooted $\ell$-arc-connectivity, the same must hold for the remaining arcs of $P_f$, namely $\{uw^f_{s_{i_1}}\}\cup\{w^f_{s_{i_j}}w^f_{s_{i_{j+1}}}\colon j\in[p-2]\}$. Hence all underlying edges of $P_f$ are present and oriented in $\dHp$ as in $\dH$.
        
        Finally, consider an arc $\de=uv\in\delta^{out}_{\dG}(X)$, and let $\{s\in S \colon \de\notin Q_s\}=\{s_{i_1},\ldots,s_{i_p}\}$ with $i_1<\ldots<i_p$. Consider the underlying edges of $P_e=\{rw^e_{s_{i_1}},w^e_{s_{i_1}}w^e_{s_{i_2}},\ldots,w^e_{s_{i_{p-1}}}w^e_{s_{i_p}},w^e_{s_{i_p}}v\}$. Suppose that there exists an edge $g$ among them that is either missing or oriented in $\dHp$ in the opposite direction than in $\dH$. Let $\dg$ be the first such arc on $P_e$ when traversing from $r$ to $v$, and let $w^e_{s_{i_q}}$ denote its tail. Then the set $\{r\}\cup\{w^e_{s_{i_j}}\colon j\in[q]\}\cup\{s_{i_j}\colon j\in[q]\}$ has out-degree $\ell-1$ in $\dHp$, contradicting the fact that $\dHp$ is Steiner rooted $\ell$-arc-connected. Therefore, all underlying edges of $P_e$ must be present and oriented in $\dHp$ as in $\dH$.
    \end{proof}

    A useful corollary of Claim~\ref{cl:fixed} is the following.

    \begin{cla}\label{cl:iwurfb}
        Let $g\in A[V\setminus X]$ be such that $H'\coloneqq H-g$ admits a Steiner rooted $\ell$-arc-connected orientation $\dHp$. For any $s\in S$, let $Q'_s$ be the union of $\ell$ arc-disjoint directed $r$-$s$ paths in $\dHp$. Then $Q'_s\cap \dA[V\setminus X]$ consists of pairwise arc-disjoint paths that provide a one-to-one correspondence between the heads of arcs in $\delta^{out}_{\dG}(X)\cap Q_s$ and the tails of arcs in $\delta^{in}_{\dG}(X)\cap Q_s$: each starts at the head of a unique $\de\in\delta^{out}_{\dG}(X)\cap Q_s$ and ends at the tail of a unique $\df\in\delta^{in}_{\dG}(X)\cap Q_s$.
    \end{cla}
    \begin{proof}
        The statement immediately follows from Claim~\ref{cl:fixed}.
    \end{proof}

    We are now ready to prove the minimality of $H$.
    
    \begin{cla}\label{cl:min}
        $(H=(U,A),S,r)$ is a minimal $3$-regular instance of \ltori.
    \end{cla}
    \begin{proof}
        Suppose indirectly that $g\in A$ is such that $H'\coloneqq H-g$ admits a Steiner rooted $\ell$-arc-connected orientation $\dHp$. By Claim~\ref{cl:fixed}, $g\in A[V\setminus X]$. Let $\dGp$ denote the orientation of $G-g$ where edges in $E[V\setminus X]-g$ are oriented as in $\dHp$, and all remaining edges are oriented as in $\dG$. We claim that the directed graph $\dG'$ thus obtained is Steiner rooted $k$-arc-connected, contradicting the minimality of the original instance.

        To see this, take an arbitrary $s\in S$ and let $Q'_s$ be the union of $\ell$ arc-disjoint directed $r$-$s$ paths in $\dHp$. Then, by Claim~\ref{cl:iwurfb}, the set $(Q_s\setminus\dE[V\setminus X])\cup (Q'_s\cap\dA[V\setminus X])$ is the union of $k$ pairwise arc-disjoint directed $r$-$s$ paths in $\dGp$, possibly together with some directed cycles.  
    \end{proof}

    By Claim~\ref{cl:min} and by construction, we get $|V\setminus X|\leq |U|\leq N_{\ell,t}$, concluding the proof of the lemma.
\end{proof}

Finally, we give two technical lemmas, focusing on different scenarios

\begin{lem}\label{lem:case2}
    Let $(G=(V,E),S,r)$ be a minimal 3-regular instance of \ktori, and let $\dG=(V,\dE)$ be a feasible orientation. Let $X\subseteq V$ be $s$-tight for every $s\in S'$, where $S'\subseteq S$ is non-empty. Then $X$ contains no directed cycle that is essential only for terminals in $S'$.
\end{lem}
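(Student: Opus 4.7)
My plan is to prove the lemma by contradiction: suppose $C \subseteq X$ is a directed cycle essential only for terminals in $S'$, and aim to produce a feasible orientation of $G - e$ for some arc $e \in C$, contradicting the minimality of $G$. A useful preliminary observation is that for every arc $e = (u,v) \in C$, any tight $s$-cut that is entered or left by $e$ must properly intersect $V(C) \subseteq X$, since exactly one of $u$ and $v$ lies in the cut while both belong to $V(C)$; the essentiality hypothesis then forces $s \in S'$. Moreover, since $X$ is itself tight for every $s \in S'$, Lemma~\ref{lem:intuni}\ref{it:a} lets us intersect with $X$ and assume every relevant tight cut is contained in $X$.

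The core construction is as follows: for $e = (u,v) \in C$, I delete $e$ from $\dG$ and reverse the directed path $C - e$ (which runs from $v$ back to $u$), producing an orientation $\dGp$ of $G - e$. Using the cycle identity $|C \cap \delta^{out}_{\dG}(U)| = |C \cap \delta^{in}_{\dG}(U)|$, a direct calculation yields
\[
    d^{out}_{\dGp}(U) \;=\; d^{out}_{\dG}(U) - \mathbbm{1}[e \text{ enters } U]
\]
for every $U \subseteq V$. Hence $\dGp$ is feasible precisely when $e$ enters no tight $s$-cut for any $s \in S$.

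The main obstacle is that in any minimally Steiner rooted $k$-arc-connected orientation such as $\dG$, every arc of every directed cycle enters some tight cut: reversing the entire cycle $C$ in $\dG$ preserves the out-degree of every vertex set, so the resulting orientation is another feasible orientation of $G$, and applying minimality to it shows that every $e \in C$ does enter some tight cut in $\dG$. Thus the naive construction above does not succeed for any individual $e$, and I will have to combine the reversal of $C - e$ with additional compensating arc reversals inside $\dG[X]$. To choose these reversals I plan to exploit the lattice structure of tight cuts from Lemma~\ref{lem:intuni} together with the fact that all problematic tight cuts lie inside $X$ and are tight for terminals in $S'$: these terminals share the same $k$ exit arcs $\delta^{out}(X)$, so their flows inside $X$ admit joint reroutings that can absorb the deletion of $e$, while the essentiality hypothesis guarantees that any such internal modification leaves the flows to terminals outside $S'$ untouched. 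Carrying out this augmented modification for an appropriately chosen $e \in C$ yields a feasible orientation of $G - e$, contradicting the minimality of $G$.
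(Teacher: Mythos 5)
Your first two paragraphs are correct and your diagnosis of the obstacle is accurate: the identity $d^{out}_{\dGp}(U) = d^{out}_{\dG}(U) - \mathbbm{1}[e\text{ enters } U]$ does hold, and reversing the whole cycle $C$ (which preserves all out-degrees and hence feasibility) combined with minimality indeed shows that every arc of $C$ enters some tight cut, so deleting a single arc and reversing $C-e$ can never be feasible on its own. The gap is in the final paragraph: you announce that you will combine this with ``additional compensating arc reversals inside $\dG[X]$'' chosen via the lattice structure of tight cuts, but you never specify which arcs to reverse, never verify that the resulting orientation of $G-e$ has out-degree at least $k$ on every root-to-terminal cut, and never pin down how the essentiality hypothesis enters this modification quantitatively. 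The decisive step is stated as an intention rather than executed, so the argument is incomplete.

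For comparison, the paper's proof takes a different, flow-packing route that avoids reversals entirely. For each $s\in S'$ it takes a minimum-size arc set $Q_s$ that is the union of $k$ arc-disjoint $r$-$s$ paths; since $X$ is $s$-tight for every $s\in S'$, each of the $k$ paths crosses $\delta^{out}_{\dG}(X)$ exactly once, so the restrictions $Q_s\cap\dE[X]$ may all be taken equal to one fixed set $Q$. Minimality of $\dG$ gives, for every $e\in C$, a tight $s'$-cut $U$ with $e\in\delta^{out}_{\dG}(U)$; since both endpoints of $e$ lie on $C$, $U$ properly intersects $V(C)$, which by the essentiality hypothesis forces $s'\in S'$; intersecting $U$ with $X$ via Lemma~\ref{lem:intuni}\ref{it:a} gives a tight $s'$-cut inside $X$ whose out-boundary (being exactly $k$ arcs crossed once by each of the $k$ paths) is contained in $Q_{s'}$, hence $e\in Q$. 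Therefore $C\subseteq Q\subseteq Q_{s'}$, contradicting Lemma~\ref{lem:ess}\ref{it:bbb}. You would likely have to rediscover most of this structure to make the reversal approach go through, and it is cleaner to argue directly through the path packings.
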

\begin{proof}
    Suppose for a contradiction that $X$ contains a directed cycle $C$ that is essential only for terminals in $S'$. For each $s\in S'$, let $Q_s$ be a minimum-size set of arcs that is the union of $k$ arc-disjoint directed paths from $r$ to $s$. Since $X$ is $s$-tight for every $s\in S'$, we may assume that the set $Q\coloneqq Q_s\cap \dE[X]$ is the same for all $s\in S'$. By Lemma~\ref{lem:ess}\ref{it:aaa} and the indirect assumption, we get $C\subseteq Q$, contradicting Lemma~\ref{lem:ess}\ref{it:bbb}.
\end{proof}

\begin{lem}\label{lem:case3}
    Let $(G=(V,E),S,r)$ be a minimal 3-regular instance of \ktori, and let $\dG=(V,\dE)$ be a feasible orientation. Let $X_1\subseteq V$ be $s$-tight for every $s\in S_1$ and $X_2\subseteq V$ be $s$-tight for every $s\in S_2$, where $S_1,S_2\subseteq S$ are disjoint and non-empty. Then $X_1\cap X_2$ contains no directed cycle that is essential only for terminals in $S_1\cup S_2$.
\end{lem}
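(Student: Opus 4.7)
My plan is to upgrade $X_1 \cap X_2$ to a tight $s$-cut for every $s \in S_1 \cup S_2$ via an uncrossing argument and then apply Lemma~\ref{lem:case2}. Submodularity of the out-degree function gives $d^{out}_{\dG}(X_1 \cap X_2) + d^{out}_{\dG}(X_1 \cup X_2) \le d^{out}_{\dG}(X_1) + d^{out}_{\dG}(X_2) = 2k$. The natural way to exploit this is Lemma~\ref{lem:intuni}\ref{it:b} with $S' = S_1 \cup S_2$, $U_s = X_1$ for $s \in S_1$, $U_s = X_2$ for $s \in S_2$, and any $s_0 \in S_1$; the hypothesis of that lemma reduces to $d^{out}_{\dG}(X_1 \cup X_2) \ge k$.

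If $d^{out}_{\dG}(X_1 \cup X_2) \ge k$, the conclusion of Lemma~\ref{lem:intuni}\ref{it:b} gives $\bigcap_{s \in S'} U_s = X_1 \cap X_2$ is a tight $s$-cut for every $s \in S_1 \cup S_2$, and Lemma~\ref{lem:case2} applied with $X = X_1 \cap X_2$ and this $S'$ immediately contradicts the existence of a cycle in $\dE[X_1 \cap X_2]$ essential only for $S_1 \cup S_2$. Otherwise $d^{out}_{\dG}(X_1 \cup X_2) < k$, which forces $S \subseteq X_1 \cup X_2$, since otherwise $X_1 \cup X_2$ would separate a terminal and have out-degree at least $k$. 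In this second case I mimic the proof of Lemma~\ref{lem:case2} with both tight cuts simultaneously: for each $s \in S_1$ the tightness of $X_1$ lets me choose a minimum arc-disjoint decomposition $Q_s$ so that $Q^{(1)} := Q_s \cap \dE[X_1]$ is a common fixed set, and analogously $Q^{(2)} := Q_s \cap \dE[X_2]$ for $s \in S_2$; for $s \notin S_1 \cup S_2$ the non-essentiality of $C$ allows me to arrange $Q_s \cap C = \emptyset$, exactly as in Lemma~\ref{lem:case2}. Then Lemma~\ref{lem:ess}\ref{it:aaa} yields $C \subseteq Q^{(1)} \cup Q^{(2)}$ and Lemma~\ref{lem:ess}\ref{it:bbb} gives $C \not\subseteq Q^{(1)}$ and $C \not\subseteq Q^{(2)}$, so $C$ contains arcs in both $Q^{(1)} \setminus Q^{(2)}$ and $Q^{(2)} \setminus Q^{(1)}$. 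The contradiction is then extracted from a local flow-balance analysis at transition vertices of $C$, using the 3-regularity of $\dG$ and the acyclicity of $Q^{(1)}$ and $Q^{(2)}$.

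The main technical obstacle I expect is this local analysis in the second case. A direct transition at a vertex $v \in V(C)$ — where the incoming arc of $C$ is in $Q^{(1)} \setminus Q^{(2)}$ and the outgoing one is in $Q^{(2)} \setminus Q^{(1)}$ — is ruled out quickly: regardless of whether $v$ has in-degree one or two in $\dG$, the third arc at $v$ cannot simultaneously serve as the required out-arc for the $Q^{(1)}$-path entering via the first arc and the required in-arc for the $Q^{(2)}$-path exiting via the second. Compound transitions that pass through intervening arcs in $Q^{(1)} \cap Q^{(2)}$ are admissible at a single vertex, and excluding them will require tracking the global detours that the $Q^{(1)}$- and $Q^{(2)}$-paths must take around each maximal run of exclusive arcs on $C$.
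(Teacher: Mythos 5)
Your Case 1 (uncrossing $X_1$ with $X_2$ via Lemma~\ref{lem:intuni}\ref{it:b} under the hypothesis $d^{out}_{\dG}(X_1\cup X_2)\ge k$, then invoking Lemma~\ref{lem:case2} with $X=X_1\cap X_2$ and $S'=S_1\cup S_2$) is correct, and it is a genuine shortcut that the paper does not take: the paper's proof handles both situations uniformly and never observes that $X_1\cap X_2$ can be made tight when the union has large enough out-degree.

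The problem is Case 2, and while you correctly flag it as the hard part, the plan you give there cannot be completed as described. You correctly reach the position $C\subseteq Q^{(1)}\cup Q^{(2)}$ with $C\not\subseteq Q^{(1)}$ and $C\not\subseteq Q^{(2)}$, and you correctly note that $3$-regularity forbids a direct red-to-blue or blue-to-red transition at any vertex of $C$, so the maximal exclusive (red/blue) runs of $C$ alternate with purple runs. But that configuration is not internally inconsistent inside $\dG$: a cycle with alternating purple and exclusive intervals can genuinely occur in a feasible orientation of a minimal instance, so no ``local flow-balance'' or ``tracking of global detours'' of the $Q^{(1)}$- and $Q^{(2)}$-paths can manufacture a contradiction from $\dG$ alone. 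The contradiction has to come from minimality of the \emph{undirected} instance $G$, i.e.\ from producing a feasible orientation of $G$ minus an edge. The paper does exactly this: it fixes a purple arc $e$ on $C$ (which the alternation guarantees exists), reverses all of $C$ to obtain a new orientation $\dGp$, which is still feasible because reversing a cycle leaves every $d^{out}(U)$ unchanged; it then recolors $C'$, builds new flows $Q'_s$ for $s\in S_1\cup S_2$ by keeping the off-$X$ parts of $Q_s$ and swapping the roles of the red and blue arcs on the reversed cycle, and checks that the reversal $e'$ of $e$ lies in none of these new flows. Since tight $s$-cuts are preserved under the reversal of $C$ and $C$ is inessential for $s\notin S_1\cup S_2$, $e'$ lies in no tight $s$-cut for those $s$ either, so deleting $e'$ from $\dGp$ leaves the graph Steiner rooted $k$-arc-connected — contradicting minimality of the instance. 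This cycle-reversal-and-delete step is the missing idea in your Case 2; without it, your argument does not close.
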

\begin{proof}
    Let $X \coloneqq X_1 \cap X_2$. Suppose for a contradiction that $X$ contains a directed cycle $C$ that is essential only for terminals in $S_1\cup S_2$. For each $s\in S_1\cup S_2$, let $Q_s$ be a minimum-size set of arcs that is the union of $k$ arc-disjoint directed paths from $r$ to $s$. For $i\in[2]$, since $X_i$ is $s$-tight for every $s\in S_i$, we may assume that the set $Q_s\cap \dE [X]$ is the same for all $s\in S_i$ -- let $Q_i$ denote this set. By the indirect assumption and Lemma~\ref{lem:ess}\ref{it:aaa}, $C\subseteq Q_1\cup Q_2$, while Lemma~\ref{lem:ess}\ref{it:bbb} yields $C\setminus Q_i\neq\emptyset$ for $i\in[2]$. 
    
    We color each arc $e\in C$ red if $e\in Q_1\setminus Q_2$, blue if $e\in Q_2\setminus Q_1$, and purple if $e\in Q_1\cap Q_2$. By the above observations, every arc of $C$ is colored. By the $3$-regularity of the instance, no vertex in $V(C)$ may have an incoming red arc and an outgoing blue arc, nor an incoming blue arc and an outgoing red arc. Hence, along $C$, the maximal monochromatic intervals alternate so that every other interval is purple, and the remaining ones are red or blue. In particular, there exists at least one purple arc $e\in C$.

    Let $\dG' = (V, \dE')$ be the orientation of $G$ that differs from $\dG$ exactly on the edges of $C$, and let $C'$ denote the directed cycle in $\dG'$ consisting of the same underlying edges as $C$. Note that for any $X \subseteq V$, we have $d_{\dG}^{\text{out}}(X) = d_{\dG'}^{\text{out}}(X)$; hence, since $\dG$ is feasible, $\dG'$ is also feasible. Color an edge $f \in C'$ red, blue, or leave it uncolored if the corresponding edge in $C$ is blue, red, or purple, respectively.  Let $Q'_1$ and $Q'_2$ denote the sets of arcs that are red and blue, respectively, in $C'$. Let $R'_s \subseteq \dE'$ be the set of edges in $\dG'$ corresponding to the edges of $Q_s \setminus Q_i$ for all $s \in S_i$, $i \in [2]$. Define $Q'_s \coloneqq R'_s \cup Q'_i$ in $\dG'$ for all $s \in S_i$, $i \in [2]$.
 
    By construction, each $Q'_s$ is the union of $k$ arc-disjoint directed $r$–$s$ paths in $\dG'$, possibly together with some directed cycles. Let $e' \in C'$ be the arc corresponding to $e$. Observe that no $Q'_s$ contains $e'$ for any $s \in S_1 \cup S_2$. Furthermore, since $C$ is essential only for terminals in $S_1 \cup S_2$ in $\dG$, the arc $e'$ is contained in no tight $s$-cut for any $s \in S \setminus (S_1 \cup S_2)$ in $\dG'$. Therefore, deleting $e'$ from $\dG'$ results in a Steiner rooted $k$-arc-connected directed graph, contradicting the minimality of the instance.
\end{proof}

We further need the following result of Reed, Robertson, Seymour, and Thomas~\cite{reed1996packing} that establishes the so-called Erdős-Pósa property of directed cycles.

\begin{thm}[Reed, Robertson, Seymour, Thomas]\label{thm:ep}
    For every positive integer $\gamma$, there exists a constant $f_{\ref{thm:ep}}(\gamma)$ such that if the minimum feedback vertex set of a directed graph has size greater than $f_{\ref{thm:ep}}(\gamma)$, then the graph contains $\gamma$ pairwise vertex-disjoint directed cycles.
\end{thm}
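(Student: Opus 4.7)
The plan is to proceed by induction on $\gamma$, extracting a single directed cycle $C$ from $D$ and applying the inductive hypothesis to $D - V(C)$. For $\gamma = 1$ the statement is immediate, since any digraph with minimum feedback vertex set of size at least $1$ contains some directed cycle. For the inductive step, given a digraph $D$ whose minimum feedback vertex set is larger than $f_{\ref{thm:ep}}(\gamma)$, the goal is to locate a cycle $C$ so that deletion of $V(C)$ leaves a digraph whose minimum feedback vertex set is still larger than $f_{\ref{thm:ep}}(\gamma - 1)$; once we have this, induction yields $\gamma - 1$ pairwise vertex-disjoint cycles in $D - V(C)$, and adding $C$ gives the required packing.

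First I would reduce to the strongly connected case. A feedback vertex set of $D$ is exactly the union of feedback vertex sets of its strongly connected components, and the strongly connected components are vertex-disjoint, so if $D$ had many nontrivial strongly connected components we could either inductively pack a cycle in each or pass to one component that inherits the large feedback-vertex-set bound. Thus I can assume $D$ itself is strongly connected. The next natural attempt is to pick $C$ to be a \emph{shortest} directed cycle and hope that $|V(C)|$ is bounded by a function of $\gamma$, giving a uniform budget per step of the induction.

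The main obstacle is that directed cycles can be arbitrarily long: the shortest cycle in $D$ may already use a super-polynomial number of vertices, so deleting it might destroy all the remaining cyclic structure. To get around this one needs a structural dichotomy: either $D$ admits a short cycle $C$, in which case we remove few vertices and recurse, or $D$ has large directed girth but a very rich cyclic structure, in which case one should identify a "bramble-like" obstruction — a family of strongly connected subgraphs that pairwise touch — and route $C$ through only a bounded number of hub vertices of this obstruction, so that the global cyclic complexity survives. This is precisely the deep structural content of Reed, Robertson, Seymour, and Thomas: forbidding $\gamma$ vertex-disjoint cycles forces the existence of a small feedback vertex set, via an intricate recursion on the cyclic complexity of a strongly connected witness.

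The hard part, by a wide margin, will be this structural extraction step, where one has to guarantee that the chosen cycle $C$ preserves enough feedback-vertex-set complexity in $D - V(C)$ for the induction to continue. The recursion introduces a super-polynomial blow-up in $\gamma$, but the precise growth of $f_{\ref{thm:ep}}$ is inessential for our application, where only its finiteness is used. I do not expect to improve on the original argument in a short sketch; the plan above is the outline of that argument, and the genuine difficulty lies entirely in the structural lemma rather than in the inductive bookkeeping around it.
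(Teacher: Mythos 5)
You are attempting to reprove a deep theorem that the paper does not prove at all: it is imported verbatim from Reed, Robertson, Seymour, and Thomas~\cite{reed1996packing} and used as a black box in the proof of Proposition~\ref{prop:bound}. So there is no ``paper proof'' to compare against, and the right move here is simply to cite the result.

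As for your sketch itself, the reduction to the strongly connected case and the inductive bookkeeping are fine, but the actual content of the theorem --- what you call the ``structural extraction step'' --- is left entirely unproven, and by your own admission you do not intend to supply it. This is not a minor gap: guaranteeing that one can always find a cycle $C$ whose deletion leaves a large feedback vertex set (or, dually, that the absence of $\gamma$ disjoint cycles forces a small transversal) \emph{is} the theorem. The naive ``remove a shortest cycle and recurse'' plan does fail, as you note, but the dichotomy you gesture at (short cycle vs.\ large girth with bramble-like obstruction) is only a caricature of the real argument, which in~\cite{reed1996packing} runs through a fractional relaxation and a min-max analysis rather than a direct ``find a safe cycle to delete'' induction. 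In short, your proposal reduces the theorem to a lemma at least as hard as the theorem and then stops; what you have written is an honest acknowledgment of the difficulty, not a proof, and for the purposes of this paper a citation is both sufficient and expected.
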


Putting everything together, we now obtain the key structural result of this section: a bound on the size of minimal $3$-regular instances. This bound will play a central role in our approach, as it enables the application of existing algorithms for detecting topological minors.

\begin{prop}\label{prop:bound}
     For any positive integers $k$ and $t$, there exists a constant $f_{\ref{prop:bound}}(k,t)$ satisfying $N_{k,t}\leq f_{\ref{prop:bound}}(k,t)$, that is, every minimal 3-regular instance of \ktori has at most $f_{\ref{prop:bound}}(k,t)$ vertices.
\end{prop}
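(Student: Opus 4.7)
I would prove \cref{prop:bound} by induction on $k$, the base case $k=1$ being immediate from \cref{lem:easy} which gives $N_{1,t}=2t$. For the inductive step, fix $k\ge 2$, assume $N_{k-1,t}$ is finite, and let $(G=(V,E),S,r)$ be a minimal $3$-regular instance of \ktori with feasible orientation $\dG$. The plan is to pick parameters $q>2N_{k-1,t}$, $\gamma\ge 2^t\cdot f_{\ref{prop:moreterminalorder}}(q,k,t)$, and $b$ large enough that $b/\max(k,3)>f_{\ref{thm:ep}}(\gamma)$; setting $f_{\ref{prop:bound}}(k,t)\coloneqq f_{\ref{prop:fac}}(b,k,t)$, the goal is to derive a contradiction from the assumption $|V|>f_{\ref{prop:bound}}(k,t)$.

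First, \cref{prop:fac} rules out a feedback arc set of size at most $b$ in $\dG$; since $\dG$ has maximum degree at most $\max(k,3)$, its minimum feedback vertex set exceeds $b/\max(k,3)$, and \cref{thm:ep} produces $\gamma$ pairwise vertex-disjoint directed cycles. By minimality of $(G,S,r)$, $\dG$ is minimally Steiner rooted $k$-arc-connected, so every arc lies in some tight $s$-cut and hence every directed cycle $C$ has non-empty essentiality pattern $S_C\coloneqq\{s\in S\colon C\text{ is $s$-essential}\}$. Pigeonholing over the $2^t$ possible patterns and applying \cref{prop:moreterminalorder} to the resulting sub-collection (with the common pattern $S'\subseteq S$ as its terminal set) extracts $q$ cycles $C_1,\ldots,C_q$ together with a partition $S'=S'_+\sqcup S'_-$ such that for $s\in S'_+$ the sequence $(C_1,\ldots,C_q)$ is $s$-ordered with witnessing tight cuts $U_1^s,\ldots,U_q^s$, and for $s\in S'_-$ the reversed sequence is $s$-ordered with witnesses $\tilde U_1^s,\ldots,\tilde U_q^s$ (indexed so that $\tilde U_i^s$ properly intersects $V(C_i)$ and contains $V(C_m)$ exactly for $m>i$).

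The argument then branches into two cases. In Case A (WLOG $S'_+=S'$), I would fix $i\coloneqq\lceil q/2\rceil$ and an arbitrary $s_0\in S'$ and dichotomise on whether $d^{out}_{\dG}(U_i^s\cup U_i^{s_0})\ge k$ for every $s\in S'-s_0$. If so, \cref{lem:intuni}\ref{it:b} makes $X\coloneqq\bigcap_{s\in S'}U_i^s$ a tight $s$-cut for every $s\in S'$, and since $V(C_1)\subseteq X$ with $S_{C_1}=S'$, this contradicts \cref{lem:case2}. If not, a specific $X'\coloneqq U_i^s\cup U_i^{s_0}$ has $d^{out}(X')\le k-1$, which forces $S\subseteq X'$ by Steiner rooted $k$-arc-connectivity, so \cref{lem:case1} yields $|V\setminus X'|\le N_{k-1,t}$; however $V\setminus X'\supseteq\bigcup_{j>i}V(C_j)$ gives $|V\setminus X'|\ge 2\lfloor q/2\rfloor>N_{k-1,t}$, a contradiction. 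Case B (both $S'_+,S'_-$ non-empty) runs the analogous dichotomy with $i\coloneqq\lceil 2q/3\rceil$, $j\coloneqq\lfloor q/3\rfloor$, and arbitrary $s_0\in S'_+$, $t_0\in S'_-$, applying \cref{lem:intuni}\ref{it:b} to both $\{U_i^s\}_{s\in S'_+}$ and $\{\tilde U_j^s\}_{s\in S'_-}$: if both succeed then the cycles $C_m$ with $j<m<i$ lying inside $X_1\cap X_2$ violate \cref{lem:case3}, and if either fails then \cref{lem:case1} bounds the complement of the bad union by $N_{k-1,t}$, in contradiction with the at least $2\lfloor q/3\rfloor-2$ cycle-vertices lying outside it.

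The main obstacle, I expect, is the careful choice of the indices $i$ (and $j$) so that both outcomes of the \cref{lem:intuni}\ref{it:b} dichotomy lead to contradictions: one needs $i$ far enough from $1$ and $q$ that the intersection $X$ (or $X_1\cap X_2$) already contains an entire forbidden cycle, yet also far enough that the complement $V\setminus X'$ in the failure branch retains enough cycle-vertices to exceed the inductive bound $N_{k-1,t}$ coming from \cref{lem:case1}. The middle-index choices $i\approx q/2$ in Case A and $(i,j)\approx(2q/3,q/3)$ in Case B are precisely what make the two requirements compatible, and this balancing is what drives the recursive definition of $f_{\ref{prop:bound}}(k,t)$ in terms of $f_{\ref{prop:bound}}(k-1,t)$.
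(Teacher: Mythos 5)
Your proposal follows essentially the same route as the paper's proof: induction on $k$ via \cref{lem:easy}, ruling out small feedback arc sets via \cref{prop:fac}, invoking Erd\H{o}s--P\'osa and pigeonholing on essentiality patterns, extracting an $s$-ordered/reverse-ordered family via \cref{prop:moreterminalorder}, and then a three-way case split using \cref{lem:intuni}\ref{it:b} together with \cref{lem:case1,lem:case2,lem:case3}. The only discrepancy is quantitative: with the symmetric thirds choice $(i,j)\approx(\lceil 2q/3\rceil,\lfloor q/3\rfloor)$, the failure branch of Case B leaves only about $\lfloor q/3\rfloor-1$ cycles outside the union, so $q>2N_{k-1,t}$ is too small (e.g.\ $N_{k-1,t}=2$, $q=5$ gives zero cycles); you would need roughly $q>3N_{k-1,t}+3$, or the paper's asymmetric choice of indices $N+1$ and $N+3$ with $q=2N+3$ cycles, a constant adjustment you already flag as needing care.
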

\begin{proof}
    We prove the statement by induction on $k$. When $k=1$, the proposition follows from Lemma~\ref{lem:easy}, thus we may assume that $k\ge 2$. For ease of discussion, let $N \coloneqq f_{\ref{prop:bound}}(k-1,t)+1$, and define $f_{\ref{prop:bound}}(k,t)\coloneqq f_{\ref{prop:fac}}(f_{\ref{thm:ep}}(2^t\cdot f_{\ref{prop:moreterminalorder}}(2N+3,k,t)),k,t)$. 
    
    Let $(G=(V,E),S,r)$ be a minimal 3-regular instance of \ktori on at least $f_{\ref{prop:bound}}(k,t)$ vertices, and take an arbitrary feasible orientation $\dG$ of $G$. By Proposition~\ref{prop:fac}, the minimum size of a feedback arc set is greater than $f_{\ref{thm:ep}}(2^t\cdot f_{\ref{prop:moreterminalorder}}(2N+3,k,t))$. Since $G$ is a minimal $2$-regular instance, every non-root and non-terminal vertex has either one incoming and two outgoing arcs, or two incoming and one outgoing arcs in $\dG$. In such a directed graph, the minimum size of a feedback vertex set is equal to that of a feedback arc set, since deleting a vertex has the same effect as deleting its unique incoming arc if its in-degree is $1$, or its unique outgoing arc otherwise. Consequently, the minimum feedback vertex set in $\dG$ is also greater than $f_{\ref{thm:ep}}(2^t\cdot f_{\ref{prop:moreterminalorder}}(2N+3,k,t))$. Therefore, by Theorem~\ref{thm:ep}, there exist $2^t\cdot f_{\ref{prop:moreterminalorder}}(2N+3,k,t)$ pairwise vertex-disjoint directed cycles in $\dG$. Each of these directed cycles is $s$-essential for at least one terminal $s\in S$. Thus there exists a set $\emptyset\neq S_\cC\subseteq S$ and a collection $\cC$ of at least $f_{\ref{prop:moreterminalorder}}(2N+3,k,t)$ directed cycles such that every $C\in\cC$ is $s$-essential exactly for the terminals in $S_\cC$. By Proposition~\ref{prop:moreterminalorder}, there exist $C_1,\ldots,C_{2N+3}\in\cC$ such that either $(C_1,\ldots,C_{2N+3})$ or $(C_{2N+3},\ldots,C_1)$ forms an $s$-ordered collection of cycles for all $s\in S_\cC$. Let $S_1$ denote the terminals in $S_\cC$ for which the former holds, and let $S_2$ denote those for which the latter holds. For each $s\in S_\cC$, let $U^s_1,\ldots,U^s_{2N+3}$ be the witness tight $s$-cuts, where $U^s_i$ properly intersects $C_i$.
    \medskip

    \noindent \textbf{Case 1.} There exist $s,s'\in S_1$ such that $d^{out}_{\dG}(U^s_{N+3}\cup U^{s'}_{N+3})\leq k-1$, or $s,s'\in S_2$ such that $d^{out}_{\dG}(U^s_{N+1}\cup U^{s'}_{N+1})\leq k-1$.
    \smallskip
    
    \noindent Assume that $s,s'\in S_1$; the case $s,s'\in S_2$ is symmetric. Let $X\coloneqq U^{s}_{N+3}\cup U^{s'}_{N+3}$ and set $\ell\coloneqq d^{out}_{\dG}(X)$. Since $V(C_i)\subseteq V\setminus X$ for all $i\in[N+4,2N+3]$ and the cycles are vertex-disjoint, we have $|V\setminus X|\ge N>f_{\ref{prop:bound}}(k-1,t)\geq f_{\ref{prop:bound}}(\ell,t)\geq N_{\ell,t}$. On the other hand, $d^{out}_{\dG}(X)\le k-1$ and the feasibility of the orientation together imply $S\subseteq X$. By Lemma~\ref{lem:case1}, this gives $|V\setminus X|\le N_{\ell,t}$, a contradiction.
    \medskip
    
    \noindent \textbf{Case 2.} Either $S_1=S_\cC$ and $d^{out}_{\dG}(U^s_{N+3}\cup U^{s'}_{N+3})\ge k$ for all $s,s'\in S_\cC$, or $S_2=S_\cC$ and $d^{out}_{\dG}(U^s_{N+1}\cup U^{s'}_{N+1})\ge k$ for all $s,s'\in S_\cC$.
    \smallskip
    
    \noindent Assume that $S_1=S_\cC$; the case $S_2=S_\cC$ is symmetric. Lemma~\ref{lem:intuni}\ref{it:b} implies that $X\coloneqq\bigcap_{s\in S_\cC}U^s_{N+3}$ is $s$-tight for every $s\in\cS_\cC$. By Lemma~\ref{lem:case2}, $X$ contains no directed cycle that is essential only for terminals in $S_\cC$. However, $V(C_i)\subseteq X$ for all $i\in[N+2]$, and these cycles are essential exactly for $S_\cC$, yielding a contradiction.
    \medskip
    
    \noindent \textbf{Case 3.} None of $S_1$ and $S_2$ is empty, $d^{out}_{\dG}(U^s_{N+3}\cup U^{s'}_{N+3})\ge k$ for all $s,s'\in S_1$, and $d^{out}_{\dG}(U^s_{N+1}\cup U^{s'}_{N+1})\ge k$ for all $s,s'\in S_2$.
    \smallskip
    
    \noindent Lemma~\ref{lem:intuni}\ref{it:b} implies that $X_1\coloneqq\bigcap_{s\in S_1}U^s_{N+3}$ is $s$-tight for every $s\in S_1$, and $X_2\coloneqq\bigcap_{s\in S_2}U^s_{N+1}$ is $s$-tight for every $s\in S_2$. By Lemma~\ref{lem:case3}, $X_1\cap X_2$ contains no cycle that is essential only for terminals in $S_\cC$. However, $V(C_{N+2})\subseteq X_1\cap X_2$, and $C_{N+2}$ is essential exactly for $S_\cC$, giving a contradiction.\\

    Let us mention that in Cases 2 and 3, if $|S_1|=1$ or $|S_2|=1$, there are no pairs of distinct terminals the given set, so the condition in question holds trivially. Since we arrived to a contradiction in all three cases, the number of vertices of $G$ is less than $f_{\ref{prop:bound}}(k,t)$, finishing the proof.
\end{proof}

\subsection{Proof of Theorem~\ref{thm:main1}}
\label{sec:proofofmain}

In this section, we derive \Cref{thm:main1} from \Cref{prop:bound}. We first introduce some additional notation. Given two graphs $G_1=(V_1,E_1)$ and $G_2=(V_2,E_2)$, an {\it isomorphism} between them is a bijection $\psi\colon V_1\to V_2$ such that, for all $u,v\in V_1$, the number of edges in $E_1$ joining $u$ and $v$ equals the number of edges in $E_2$ joining $\psi(u)$ and $\psi(v)$. If $\psi(s)=s$ for every $s\in S$ with $S\subseteq V_1\cap V_2$, we call $\psi$ {\it $S$-preserving}. Two instances $(G_1=(V_1,E_1),S,r)$ and $(G_2=(V_2,E_2),S,r)$ of \ktori are said to be {\it isomorphic} if there exists an $(S+r)$-preserving isomorphism between $G_1$ and $G_2$.

Using these definitions, we first show that \Cref{prop:bound} enables the efficient enumeration of a collection of \ktori instances, sufficient for the purposes of our algorithm.

\begin{lem}\label{lem:enumtopo}
    For any positive integers $k$ and $t$, there exists a constant $f_{\ref{lem:enumtopo}}(k,t)$ such that \ktori has at most $f_{\ref{lem:enumtopo}}(k,t)$ non-isomorphic minimal $3$-regular instances. Moreover, all non-isomorphic minimal $3$-regular instances can be enumerated in constant time for fixed $k$ and $t$. 
\end{lem}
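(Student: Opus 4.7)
The plan is to obtain both claims as an essentially free consequence of the vertex bound in \Cref{prop:bound}. Set $N \coloneqq f_{\ref{prop:bound}}(k,t)$, so that every minimal $3$-regular instance of \ktori has at most $N$ vertices. The task then reduces to bounding the number of, and enumerating, labelled multigraphs of bounded size that satisfy the degree and parallel-edge constraints of a $3$-regular instance.

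First I would bound the edge count. In any $3$-regular instance the root has degree $k$, each of the $t$ terminals has degree $k$, and each of the remaining at most $N-t-1$ vertices has degree $3$; hence the number of edges is at most $\tfrac{1}{2}\bigl(k(t+1)+3(N-t-1)\bigr)$. Combining this with the paper's standing assumption of at most $2k$ parallel edges between any pair of vertices, the number of labelled multigraphs on a fixed vertex set $V$ with $|V|\le N$ with designated root $r$ and prescribed terminal set $S$ of size $t$ that have the right degree sequence is bounded by $(2k+1)^{\binom{N}{2}}$, a function of $k$ and $t$ only.

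Next I would enumerate. For each $n \in [t+1, N]$, fix a vertex set $V_n = \{r, s_1, \dots, s_t, v_1, \dots, v_{n-t-1}\}$ and iterate over all multigraphs on $V_n$ satisfying the degree and multiplicity constraints above. For each candidate $G$, check in constant time (since $n$ and edge multiplicities depend only on $k,t$): (i) whether $(G, S, r)$ is a yes-instance of \ktori, by trying all $O(2^{|E|})$ orientations and verifying Steiner rooted $k$-arc-connectivity; and (ii) whether it is minimal, by confirming that for each edge $e$ of $G$ the instance $(G-e, S, r)$ is a no-instance. Retain only those passing both tests. Finally, group the retained instances by $(S+r)$-preserving isomorphism, which is again testable in constant time for fixed $k,t$ (e.g.\ by trying all bijections of $V_n\setminus(S+r)$), and output one representative per class. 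This gives an enumeration of all non-isomorphic minimal $3$-regular instances in constant time, and the number of them is bounded by some $f_{\ref{lem:enumtopo}}(k,t)$.

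There is essentially no hard step here: the whole lemma is a bookkeeping consequence of \Cref{prop:bound} once one observes that $3$-regularity plus the parallel-edge cap makes the ambient multigraph space finite in terms of $k$ and $t$. The only mild care needed is to remember that the isomorphism defined in this section fixes $r$ and every $s\in S$ pointwise, so the deduplication step must use $(S+r)$-preserving bijections rather than arbitrary ones.
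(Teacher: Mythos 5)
Your proof is correct and follows essentially the same approach as the paper's: bound the number of vertices via \Cref{prop:bound}, observe the resulting multigraph space is finite as a function of $k$ and $t$, then enumerate by brute force and deduplicate under $(S+r)$-preserving isomorphism. You simply supply a bit more quantitative bookkeeping (explicit edge-count and multiplicity bounds) than the paper chooses to.
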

\begin{proof}
    By \Cref{prop:bound}, every minimal $3$-regular instance has at most $f_{\ref{prop:bound}}(k,t)$ vertices. For fixed $k$ and $t$, there are only finitely many graphs on at most $f_{\ref{prop:bound}}(k,t)$ vertices in which the root and the terminals have degree $k$, and all other vertices have degree $3$. Among these, only finitely many are minimal instances. Let $f_{\ref{lem:enumtopo}}(k,t)$ denote this finite number. Hence, \ktori admits at most $f_{\ref{lem:enumtopo}}(k,t)$ non-isomorphic minimal $3$-regular instances.
    
    To enumerate all such instances, we generate all graphs on at most $f_{\ref{prop:bound}}(k,t)$ vertices, with the root $r$ and the terminals $S$ being fixed, satisfying the required degree pattern. For each such graph, we can determine in constant time whether it admits a feasible orientation and whether it is minimal, that is, whether the deletion of any edge destroys all feasible orientations. Since all graphs have constant size, $(S+r)$-preserving isomorphism can also be tested in constant time. Consequently, the overall enumeration takes constant time for fixed $k$ and $t$.
 \end{proof}

To apply the theory of fixed topological minors, we rely on the following result, which follows from \cite[Theorem~10]{fomin2020hitting}.

\begin{thm}[Fomin, Lokshtanov, Panolan, Saurabh, Zehavi] \label{thm:fixtopmin}
    Let $G=(V,E)$ and $H=(U,F)$ be graphs, and let $W\subseteq V\cap U$. Then, it can be decided in time $f_{\ref{thm:fixtopmin}}(|U|)\cdot |V|^{O(1)}$ whether $H$ is a $W$-fixed topological minor of $G$.
\end{thm}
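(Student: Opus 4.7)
The plan is to view the result as the ``rooted'' version of classical topological minor containment, whose unrooted counterpart admits an FPT algorithm in $|V(H)|$ by the celebrated theorem of Grohe, Kawarabayashi, Marx and Wollan (STOC 2011). I would reduce the $W$-fixed variant to the unrooted one and invoke that theorem as a black box.

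For the reduction, enumerate $W=\{w_1,\dots,w_p\}$ and, for each $i$, choose a connected \emph{signature gadget} $D_i$ on $O(|U|)$ vertices such that the family $\{D_i\}_{i=1}^p$ is pairwise distinguishable under topological containment. A concrete candidate is $D_i=K_{n_i}$ with the $n_i$ distinct and sufficiently large. Form $G^{\ast}$ from $G$ and $H^{\ast}$ from $H$ by identifying, for every $i$, one designated vertex of $D_i$ with $w_i$ in both copies. Invoking the Grohe-Kawarabayashi-Marx-Wollan algorithm on $(G^{\ast},H^{\ast})$ takes time $g(|V(H^{\ast})|)\cdot |V(G^{\ast})|^{O(1)}$, and since $|V(H^{\ast})|$ depends only on $|U|$, this yields exactly the bound $f_{\ref{thm:fixtopmin}}(|U|)\cdot |V|^{O(1)}$. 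The forward direction of correctness is immediate: any $W$-fixed topological embedding of $H$ into $G$ extends to one of $H^{\ast}$ into $G^{\ast}$ by the identity on the gadgets. In the converse direction, one argues that in any topological embedding of $H^{\ast}$ into $G^{\ast}$ the copy of each $D_i$ in $H^{\ast}$ must be mapped to the attached copy of $D_i$ in $G^{\ast}$, which forces $w_i\mapsto w_i$ and hence restricts to a $W$-fixed embedding of $H$ into $G$.

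The main obstacle is making that last step rigorous, i.e.\ ensuring that no ``impostor'' copies of the gadgets $D_i$ lurk elsewhere in $G^{\ast}$. Since the original $G$ can contain arbitrary dense substructures, one cannot naively take $D_i$ to be a clique and hope that no other clique of the same size appears. A workable fix is to preprocess $G$ first---for example, by replacing each original vertex by a small rigid local structure, or by subdividing each original edge many times in combination with attaching degree-blocking gadgets---so that the non-gadget part of $G^{\ast}$ provably contains no large clique as a topological minor, while the $W$-fixed relation with an analogously preprocessed $H$ is preserved. If this combinatorial step proves too delicate, an alternative is to modify the Grohe-Kawarabayashi-Marx-Wollan algorithm internally: its dynamic programming on bounded-treewidth instances readily accommodates the constraint ``$w$ must map to $w$'' for $w\in W$, and its irrelevant-vertex step on high-treewidth instances can be adapted to avoid the $|W|\le|U|$ root vertices, which is feasible because $|W|$ is part of the parameter while the irrelevant vertex is chosen from a very large set. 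Either route passes through the technical heart of topological minor testing, which is where the real difficulty lies.
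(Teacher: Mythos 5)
The paper does not actually prove this statement: it is imported as a known result, following from Theorem~10 of Fomin, Lokshtanov, Panolan, Saurabh and Zehavi, whose algorithm directly handles the rooted (fixed-vertex) version of topological minor testing. So your argument has to stand on its own, and as written it does not: the converse direction of your reduction is exactly the part you leave open, and the concrete gadget you propose cannot work as stated. Cliques of distinct sizes are \emph{not} pairwise distinguishable under topological containment---$K_{n_i}$ is a topological minor of $K_{n_j}$ whenever $n_i\le n_j$---so nothing prevents the gadget of $w_i$ in $H^{\ast}$ from embedding into the gadget of some other $w_j$ in $G^{\ast}$. Worse, since $G$ is arbitrary, $K_{n_i}$ may embed topologically into $G$ itself (the ``impostor'' problem you name), and even when the gadget does land on its intended copy, the vertices of a clique are interchangeable, so the designated vertex need not map to $w_i$; rescuing this via the cut-vertex structure at $w_i$ needs an explicit argument and is problematic when $d_H(w_i)\le 1$.

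Your two proposed repairs are precisely the missing content, not a proof. The preprocessing route is delicate because subdivision changes topological-minor containment (for instance $C_4$ is a topological minor of a subdivided triangle but not of the triangle), so ``subdivide and attach degree-blocking gadgets'' must be specified and shown to preserve the $W$-fixed relation between the modified $G$ and $H$; you give no such argument, and it is unclear it can be made to work for arbitrary $G$. The second route---wiring the constraint $w\mapsto w$ into the dynamic program and adapting the irrelevant-vertex step of the Grohe--Kawarabayashi--Marx--Wollan machinery---amounts to reproving the rooted containment theorem, which, in your own words, ``is where the real difficulty lies.'' That rooted theorem is exactly the cited result of Fomin et al., so the honest options are to cite it (as the paper does) or to carry out the full adaptation; the proposal does neither, and hence has a genuine gap at its core.
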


We are now ready to prove Theorem~\ref{thm:main1}, which we restate here for convenience.

\mainthm*

\begin{proof}
Let $(G=(V,E),S,r)$ be an instance of \ktori. We first apply Lemmas~\ref{lem:degk}, and~\ref{lem:deg3} to obtain an equivalent $3$-regular instance $(G'=(V',E'),S,r)$. Next, using \Cref{lem:enumtopo}, we compute the collection $\cH$ of all non-isomorphic minimal $3$-regular instances for the given parameters $k$ and $t$, where $|\cH|\le f_{\ref{lem:enumtopo}}(k,t)$. For each $(H=(U,F),S,r)\in\cH$, we apply \Cref{thm:fixtopmin} to test whether $G'$ contains $H$ as an $(S+r)$-fixed topological minor. If such an $H$ is found, we conclude that $(G=(V,E),S,r)$ is a yes-instance and terminate the algorithm. Otherwise, after all instances in $\cH$ have been tested, we report that $(G=(V,E),S,r)$ is a no-instance.

The correctness of the algorithm follows from the equivalence of the instances $(G=(V,E),S,r)$ and $(G'=(V',E'),S,r)$ together with Lemma~\ref{lem:ortopmin}. For the running time, let $n\coloneqq |V|$. First note that $G'$ can be constructed in time $n^{O(1)}$ due to Lemmas~\ref{lem:degk}, and~\ref{lem:deg3}. By \Cref{lem:enumtopo}, the collection $\cH$ can be computed in constant time when $k$ and $t$ are fixed. Furthermore, by \Cref{thm:fixtopmin}, for every $(H=(U,F),S,r)\in \cH$, since $|U|\le f_{\ref{prop:bound}}(k,t)$ by \Cref{prop:bound}, we can test in time $f_{\ref{thm:fixtopmin}}(f_{\ref{prop:bound}}(k,t))\cdot |V'|^{O(1)}$ whether $H$ is an $(S+r)$-fixed topological minor of $G'$. As there are at most $f_{\ref{lem:enumtopo}}(k,t)$ instances in $\cH$, the total running time of the algorithm is $f(k,t)\cdot n^{O(1)}$.
\end{proof}

\begin{rem}
    As a consequence of Proposition~\ref{prop:fac}, any minimal Steiner rooted $k$-arc-connected $3$-regular directed graph with $t$ terminals and a feedback arc set of size at most $b$ has a bounded number of vertices, where the bound depends only on $b$, $k$, and $t$. At the same time, Proposition~\ref{prop:bound} states that any minimal $3$-regular instance of \ktori has a bounded number of vertices depending only on $k$ and $t$. This naturally raises the question of whether the latter result extends to the directed setting, that is, whether every minimal Steiner rooted $k$-arc-connected $3$-regular directed graph with $t$ terminals has a bounded number of vertices depending only on $k$ and $t$, independently of the size of its feedback arc set. However, this is not the case even for two terminals; see~\cite[Figure~9]{langberg2006encoding} for an example.
\end{rem}

Using a simple construction, we get an analogous result for $R$-orientations as a corollary.

\maincor*

\begin{proof}
    Consider an instance of \roria, that is, let $G=(V,E)$ be an undirected graph, $R\colon V\times V\to\bZ_{\geq 0}$ a requirement function, and $\alpha\coloneqq\sum_{(u,v)\in V\times V}R(u,v)$. Let $\cP=\{(u,v)\in V\times V\colon R(u,v)>0\}$. We construct a new graph by extending $G$ as follows. For each ordered pair $(u,v)\in\cP$, we add vertices $r_{(u,v)}$ and $s_{(u,v)}$, connect them to $u$ and $v$ with $R(u,v)$ parallel edges, respectively, and connect $r_{(u,v)}$ to every $s_{(x,y)}$ with $R(u,v)$ parallel edges whenever $(x,y)\in\cP$ and $(x,y)\neq(u,v)$. Finally, we introduce a new vertex $r$ and connect it to every $r_{(u,v)}$ with $R(u,v)$ parallel edges for each $(u,v)\in\cP$.  
    
    Let $G'=(V',E')$ denote the graph constructed above, and set $S\coloneqq\{s_{(u,v)}\colon (u,v)\in\cP\}$. It is straightforward to verify that $(G,R)$ is a yes-instance of \roria if and only if $(G',S,r)$ is a yes-instance of {\sc Steiner Rooted $\alpha$-orientation with $|\cP|$ Terminals}. By Theorem~\ref{thm:main1}, the latter can be solved in time $f(\alpha,|\cP|)\cdot n^{O(1)}$. Since $|\cP|\le\alpha$, this yields an algorithm with running time $g(\alpha)\cdot n^{O(1)}$.
    \end{proof}

\section{Hardness Results}
\label{sec:hardness}

The previous section presented an algorithm for the case when both $k$ and $t$ are fixed. In this section, we show that as soon as one of these parameters becomes part of the input, the problem becomes computationally hard. In \Cref{sec:fixk}, we prove that \kori is NP-hard for any fixed $k \ge 2$. In \Cref{sec:fixt}, we establish that \tori is NP-hard for any fixed $t \ge 4$; since the proof for fixed $t$ is considerably more intricate than for fixed $k$, it is presented in multiple stages. For the sake of presentation clarity, throughout this section we refer to the vertex and edge sets of a graph $G$ as $V(G)$ and $E(G)$, respectively. Similarly, for a directed graph $D$, we use $V(D)$ and $A(D)$ for its vertex set and arc set, respectively.

\subsection{Hardness for Fixed \texorpdfstring{$k$}{k} with \texorpdfstring{$t$}{t} as Part of the Input}
\label{sec:fixk}

The goal of this section is to prove \Cref{kfix}. For our reduction, we need to consider the problem \textsc{Monotone Not-All-Equal 3-Satisfiability} (\textsc{MNAE-3-SAT}). An instance of \textsc{MNAE-3-SAT} consists of a set of binary variables $X$ and a set of clauses $\mathcal{C}$ such that every $C \in \mathcal{C}$ contains exactly 3 nonnegated variables from $X$. An assignment $\phi\colon X\rightarrow \{\true,\false\}$ is called {\it satisfying} if for every $C \in \mathcal{C}$, there exist two variables $x,x'\in C$ with $\phi(x)\neq \phi(x')$. The problem is to decide whether there exists a satisfying assignment. We use the following well-known result~\cite{schaefer1978complexity}.

\begin{prop}[Schaefer]\label{vfcds}
    \textsc{MNAE-3-SAT} is NP-hard.
\end{prop}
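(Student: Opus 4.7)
The statement is classical and follows from Schaefer's dichotomy theorem, but a direct proof proceeds by a two-step reduction from \textsc{3-SAT} through \textsc{NAE-3-SAT} to \textsc{MNAE-3-SAT}. My plan is the following.

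For the first step, given a \textsc{3-SAT} instance with clauses $C_i = (\ell_{i,1} \lor \ell_{i,2} \lor \ell_{i,3})$, I introduce a single global anchor variable $s$ and one fresh auxiliary variable $y_i$ per clause, and replace $C_i$ with the two \textsc{NAE-3-SAT} clauses $(\ell_{i,1}, \ell_{i,2}, y_i)$ and $(\bar y_i, \ell_{i,3}, s)$. Since NAE-satisfaction is invariant under simultaneous complementation of all variables, I may assume $s = \false$ in any satisfying assignment. A straightforward case analysis on the value of $y_i$ then shows that the original clause $C_i$ is satisfied if and only if both replacement triples are NAE-satisfied, yielding correctness.

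For the second step, I eliminate negations by variable doubling. Replace each variable $x$ by two positive copies $x^+$ and $x^-$, rewrite every literal $x$ as $x^+$ and every $\bar x$ as $x^-$, and then enforce $x^+ \ne x^-$ by a small monotone gadget. My gadget introduces three fresh variables $a_x, b_x, c_x$ together with the four monotone clauses $(a_x, b_x, c_x)$, $(x^+, x^-, a_x)$, $(x^+, x^-, b_x)$, and $(x^+, x^-, c_x)$. If $x^+ = x^-$, then each of the three latter triples forces the corresponding auxiliary variable to the opposite common value, making $(a_x, b_x, c_x)$ monochromatic and violating NAE. Conversely, if $x^+ \ne x^-$, then every triple $(x^+, x^-, \cdot)$ is automatically NAE, and any NAE-consistent choice such as $a_x = \true$, $b_x = c_x = \false$ satisfies $(a_x, b_x, c_x)$ as well.

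The main obstacle is precisely this second step: naive variable splitting leaves the inequality $x^+ \ne x^-$ unenforced, and no single monotone NAE-triple $(x^+, x^-, z)$ can express strict inequality because NAE is symmetric under complementation. The four-clause gadget above turns this symmetry against itself by propagating any alleged equality into a forbidden monochromatic triple on $(a_x, b_x, c_x)$. Once the gadget is in place, composing the two reductions yields a polynomial-time mapping from \textsc{3-SAT} to \textsc{MNAE-3-SAT} that preserves satisfiability, thereby establishing the proposition.
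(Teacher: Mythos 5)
Your proof is correct, but it differs from the paper in that the paper gives no proof at all: Proposition~\ref{vfcds} is stated as a black-box citation to Schaefer~\cite{schaefer1978complexity}, relying on the fact that NP-hardness of monotone NAE-3-SAT is a well-known consequence of Schaefer's dichotomy theorem. What you provide instead is the self-contained textbook chain $\textsc{3-SAT} \to \textsc{NAE-3-SAT} \to \textsc{MNAE-3-SAT}$, and both steps check out. In step one, the per-clause replacement $(\ell_{i,1},\ell_{i,2},y_i)$, $(\bar y_i,\ell_{i,3},s)$ together with the complementation symmetry that lets you pin $s=\false$ is the standard Schaefer-style gadget, and the case analysis is sound: if all three original literals are false then the first triple forces $y_i=\true$ and the second becomes monochromatic, while any satisfying assignment extends to a consistent $y_i$. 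In step two, your four-clause monotone gadget $(a_x,b_x,c_x)$, $(x^+,x^-,a_x)$, $(x^+,x^-,b_x)$, $(x^+,x^-,c_x)$ correctly forces $x^+\ne x^-$, because equality of $x^+,x^-$ propagates the complementary value to all three auxiliaries, and you correctly identify why a single triple $(x^+,x^-,z)$ cannot do the job (NAE is invariant under global complementation, so it cannot distinguish $x^+=x^-$ from $x^+\ne x^-$ with one clause). One small stylistic remark: all constructed triples consist of three pairwise distinct variables, which matches the paper's set-based clause convention, so there is no hidden issue with repeated literals. The trade-off is the obvious one: the paper's citation is terse and leans on a deep classification theorem, whereas your reduction is elementary, explicit, and verifiable in a few lines, at the cost of introducing $O(1)$ fresh variables and clauses per original clause and per original variable.
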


The main technical challenge of the proof lies in the following result, which resolves \Cref{kfix} for $k=2$, even under a mild additional technical assumption.

\begin{lem}\label{serdgh}
    \twoori is NP-hard, even when restricted to instances $(G,S,r)$ satisfying $d_G(S)+|E(G[S])|=2|S|$.
\end{lem}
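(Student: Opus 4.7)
The plan is to reduce from \textsc{MNAE-3-SAT}, which is NP-hard by \Cref{vfcds}. Given a monotone NAE-3-SAT formula $\varphi$ with variables $X$ and clauses $\cC$, I will construct an instance $(G,S,r)$ of \twoori such that $G$ admits a Steiner rooted $2$-arc-connected orientation if and only if $\varphi$ has an NAE-satisfying assignment. The instance will consist of a single root $r$, a variable gadget for each $x \in X$, a clause gadget for each $C \in \cC$, and wiring edges connecting the two families. The side condition $d_G(S)+|E(G[S])|=2|S|$ is precisely the statement that every terminal has degree exactly $2$ and no two terminals are adjacent, since $d_G(S) \geq 2|S|$ always holds when every $s \in S$ requires in-degree $2$; to respect it, every terminal in the construction will be attached via two edges whose other endpoints are distinct non-terminals.

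For the variable gadget for $x$, I envision a short cycle attached to $r$ and to two outgoing ``wires'' $T_x$ and $F_x$, together with a degree-$2$ terminal $s_x$ sitting on the cycle. The attachment is tuned so that the $2$ arc-disjoint $r$-to-$s_x$ paths required by $s_x$ must traverse the cycle in one of two consistent directions, yielding exactly two feasible local orientations, one encoding $x=\true$ and one encoding $x=\false$; in the first case the wire $T_x$ is oriented away from the gadget and $F_x$ is oriented toward it, in the second case the roles are reversed. For a clause $C=(x_1,x_2,x_3)$, the clause gadget carries a single degree-$2$ terminal $s_C$ whose two neighbors are connected, through small pieces of wiring, to the three wires entering $C$ in such a way that the two arcs incident to $s_C$ can both be oriented into $s_C$, giving $2$ arc-disjoint $r$-to-$s_C$ paths, precisely when the three incoming wires are not all of the same type, which is exactly the NAE condition on $(x_1,x_2,x_3)$. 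A careful counting of degrees ensures the side condition.

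For correctness, the forward direction starts from an NAE-satisfying assignment $\phi$, orients each variable cycle according to $\phi(x)$, propagates the orientations along the wires, and orients each clause gadget using the NAE guarantee that each clause has both a true and a false literal; two arc-disjoint $r$-to-$s$ paths for every terminal $s$ then follow from the gadget design. For the converse, any Steiner rooted $2$-arc-connected orientation of $G$ determines a truth assignment from the cyclic direction in each variable gadget, and the existence of $2$ arc-disjoint $r$-to-$s_C$ paths in such an orientation translates directly into the NAE condition on $C$, so the assignment obtained is automatically NAE-satisfying.

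The main obstacle is to enforce all three properties simultaneously: $2$-arc-connectivity is a rigid global condition, the NAE constraint must be encoded without admitting spurious feasible orientations, and every terminal must have degree exactly $2$ with no terminal neighbor. The last requirement in particular rules out the most compact gadgets, in which a terminal would receive three edges or be adjacent to another terminal, so one is forced into slightly larger gadgets with auxiliary vertices, and the local cut structure of each gadget must then be checked by hand. Verifying that every intended orientation satisfies every $r$-$s$ cut with the right multiplicity, and that no unintended orientation does, is where the bulk of the technical work will lie.
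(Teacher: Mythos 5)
Your high-level strategy matches the paper's (a reduction from \textsc{MNAE-3-SAT}), but the proposal contains two concrete problems, one of which is a genuine misunderstanding of the side condition and the other a gap in the gadget design.

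First, your reading of $d_G(S)+|E(G[S])|=2|S|$ as forcing every terminal to have degree exactly $2$ with no two terminals adjacent is incorrect. The relevant inequality is
\[
\sum_{s\in S} d^{in}_{\protect\dG}(s) \;\le\; d_G(S)+|E(G[S])|,
\]
since an edge with one endpoint in $S$ contributes at most $1$ to the left side, while an edge with both endpoints in $S$ contributes \emph{exactly} $1$ regardless of orientation. The quantity $d_G(S)$ is the cut, not $\sum_{s}d_G(s)$, so adjacent terminals are perfectly compatible with the identity. Indeed the paper's construction has degree-$3$ terminals $s^i_{x,C}$, three of them per side per clause, arranged pairwise adjacent in a triangle $S^i_C$, and the identity still holds because each triangle contributes $3$ to $|E(G[S])|$ while reducing $d_G(S)$. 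By restricting yourself to degree-$2$ non-adjacent terminals you are working inside a strictly smaller design space, which is why the clause gadget becomes so hard to pin down.

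Second, and more seriously, the clause gadget is underspecified and it is doubtful it can work as sketched. A single degree-$2$ terminal $s_C$ requires both of its incident edges to be oriented inward, so $s_C$ directly constrains exactly two arcs. To encode the not-all-equal condition on three variables via $\lambda(r,s_C)\ge 2$, you must arrange the internal wiring so that a size-$1$ cut appears precisely when all three wires agree; it is unclear how a degree-$2$ terminal with a fixed pair of neighbors could be symmetrically responsive to all three literals and to both failure modes (all true and all false) at once. The paper sidesteps this by using six terminals $s^i_{x,C}$ per clause, two directed triangles $S^1_C$ and $S^2_C$, and connecting $s^i_{x,C}$ to $u^i_x$: in the forward direction the triangle is oriented as a directed $3$-cycle so the ``second'' path to $s^i_{x,C}$ enters via a neighbor in the triangle whose corresponding variable is set oppositely, and in the backward direction the set $Y = S^1_C \cup \{z_1, u^1_{x_1}, u^1_{x_2}, u^1_{x_3}\}$ has in-degree at most $1$ whenever all three variables agree. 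That triangle mechanism is the structural core of the argument, and your proposal has no analogue of it. As written, the proposal is a plan rather than a proof, and the plan has a missing ingredient where the actual difficulty lies.
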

\begin{proof}
    We prove the result by a reduction from \textsc{MNAE-3-SAT}. Let $(X,\mathcal{C})$ be an instance of \textsc{MNAE-3-SAT}. We now create an instance $(G,S,r)$ of \twoori. We let $V(G)$ contain the vertex $r$, a set $Z$ consisting of two vertices $z_1$ and $z_2$, a set $U$ consisting of two vertices $u_x^1$ and $u_x^2$ for every $x \in X$, and the set $S$ that consists of a vertex $s_x$ for every $x \in X$ and a vertex $s_{x,C}^{i}$ for every $C \in \mathcal{C}$, $x \in C$ and $i \in [2]$. For every $C \in \mathcal{C}$ and every $i \in [2]$, we let $S_C^{i}$ be the set consisting of the vertices $s_{x,C}^{i}$ for $x \in C$. 
    
    We now describe the edge set of $G$. We first let $E(G)$ contain an edge connecting $r$ and $z_1$ and an edge connecting $r$ and $z_2$. Next, for every $i \in [2]$ and every $x \in X$, we let $E(G)$ contain an edge connecting $z_i$ and $u_x^{i}$. We also connect $u_x^{1}$ and $u_x^{2}$ with an edge for all $x \in X$. Further, for every $i \in [2]$ and every $x \in X$, we let $E(G)$ contain an edge connecting $u_x^{i}$ and $s_x$. Next, for every $i \in [2]$, every $C \in \mathcal{C}$ and every $x \in C$, we let $E(G)$ contain an edge connecting $u_x^{i}$ and $s_{x,C}^{i}$. Finally, for every $i \in [2]$, every $C \in \mathcal{C}$ and for every pair of vertices in $S_C^{i}$, we let $E(G)$ contain an edge connecting the vertices in this pair. This finishes the description of $(G,S,r)$; see Figure~\ref{fig:k2_hardness} for an illustration. It is easy to see that $(G,S,r)$ can be constructed in polynomial time from $(X,\mathcal{C})$, the size of $(G,S,r)$ is polynomial in the size of $(X,\mathcal{C})$, and $d_G(S)+|E(G[S])|=2|S|$. We claim that $(G,S,r)$ is a yes-instance of \twoori if and only if $(X,\mathcal{C})$ is a yes-instance of \textsc{MNAE-3-SAT}.

    \begin{figure} \centering
        \begin{tikzpicture}
    		\tikzset{
    			every circle node/.style={minimum size=0.35cm, inner sep=0cm,font=\small},
    			every rectangle node/.style={minimum size=0.3cm, inner sep=0cm},font=\small,
    			>={Latex[scale=1]}
    		}
    		\def\d{2.7}
    		\def\e{1.1}
    		
    		\node[circle, draw, minimum size=1cm, inner sep=1cm, label={$r$}] (r) at (0,0) {};
    		\node[circle, fill, label={$z_1$}] (z1) at (-3,-0.5) {};
    		\node[circle, fill, label={$z_2$}] (z2) at (3,-0.5) {};
    		\draw[->] (r) -- (z1);
    		\draw[->] (r) -- (z2);
    		
    		\node[circle, fill, label={[xshift=-0.05cm]:$u^1_{x_1}$}] (u11) at ($(-2.1*\d-0.5,-2.5)$) {};
    		\node[circle, fill, label={$u^2_{x_1}$}] (u12) at ($(u11)+(1, 0)$) {};
    		\node[fill, label={[yshift=-0.05cm]270:$s_{x_1}$}] (s1) at ($(u11)+(0.5, -0.5)$) {};
    		\draw[->] (u11) -- (s1);
    		\draw[->] (u12) -- (s1);
    		\draw[->] (u12) -- (u11);
    		\path (z1) edge[->, bend right=5] (u11);
    		\path (z2) edge[->, bend right=15] (u12);
    		
    		\node[circle, fill, label={[xshift=-0.3cm]:$u^1_{x_2}$}] (u21) at ($(u11)+(\d, 0)$) {};
    		\node[circle, fill, label={[xshift=0.1cm]:$u^2_{x_2}$}] (u22) at ($(u12)+(\d, 0)$) {};
    		\node[fill,  label={[yshift=-0.05cm]270:$s_{x_2}$}] (s2) at ($(s1)+(\d, 0)$) {};
    		\draw[->] (u21) -- (s2);
    		\draw[->] (u22) -- (s2);
    		\draw[->] (u21) -- (u22);
    		\path (z1) edge[->] (u21);
    		\draw[->] (z2) -- (u22);
    		
    		\node[circle, fill, label={[xshift=0.05cm]:$u^1_{x_3}$}] (u31) at ($(u21)+(\d, 0)$) {};
    		\node[circle, fill, label={[xshift=-0.05cm]:$u^2_{x_3}$}] (u32) at ($(u22)+(\d, 0)$) {};
    		\node[fill,  label={[yshift=-0.05cm]270:$s_{x_3}$}] (s3) at ($(s2)+(\d, 0)$) {};
    		\draw[->] (u31) -- (s3);
    		\draw[->] (u32) -- (s3);
    		\draw[->] (u31) -- (u32);
    		\draw[->] (z1) -- (u31);
    		\draw[->] (z2) -- (u32);
    		
    		\node[circle, fill] (u41) at ($(u31)+(\d, 0)$) {};
    		\node[circle, fill] (u42) at ($(u32)+(\d, 0)$) {};
    		\node[rectangle, fill] (s4) at ($(s3)+(\d, 0)$) {};
    		\draw[->] (u41) -- (s4);
    		\draw[->] (u42) -- (s4);
    		\draw[->] (u42) -- (u41);
    		\draw[->] (z1) -- (u41);
    		\draw[->] (z2) -- (u42);

    		\node[circle, fill] (u51) at ($(u41)+(1.2*\d, 0)$) {};
    		\node[circle, fill] (u52) at ($(u42)+(1.2*\d, 0)$) {};
    		\node[fill=black] (s5) at ($(s4)+(1.2*\d, 0)$) {};
    		\draw[->] (u51) -- (s5);
    		\draw[->] (u52) -- (s5);
    		\draw[->] (u51) -- (u52);
    		\draw[->] (z1) -- (u51);
    		\draw[->] (z2) -- (u52);
    		
    		\node[label=270:{$\ldots$}] at ($(u42)!0.5!(u51)$) {};
    		
    		\node[fill, label={[xshift=-0.55cm]:$s^1_{x_1, C}$}] (s1C1) at (-5.5,-5) {};
    		\node[fill, label={[xshift=-0.33cm]90:{$s^1_{x_2, C}$}}] (s2C1) at ($(s1C1)+(\e, 0)$) {};
    		\node[fill, label={[xshift=-0.2cm]:$s^1_{x_3, C}$}] (s3C1) at ($(s2C1)+(\e, 0)$) {};
    		\draw[->] (s1C1) -- (s2C1);
    		\draw[->] (s2C1) -- (s3C1);
    		\path (s3C1) edge[bend left, ->] (s1C1);
    		\draw[->] (u11) -- (s1C1);
    		\draw[->] (u21) -- (s2C1);
    		\draw[->] (u31) -- (s3C1);
    		
    		\node[fill, label={[xshift=0.15cm]:$s^2_{x_1, C}$}] (s1C2) at ($(s3C1)+(\e, 0)$) {};
    		\node[fill, label={[xshift=0.3cm]:$s^2_{x_2, C}$}] (s2C2) at ($(s1C2)+(\e, 0)$) {};
    		\node[fill, label={[xshift=0.53cm]:$s^2_{x_3, C}$}] (s3C2) at ($(s2C2)+(\e, 0)$) {};
    		\draw[->] (s1C2) -- (s2C2);
    		\draw[->] (s2C2) -- (s3C2);
    		\path (s3C2) edge[bend left, ->] (s1C2);
    		\draw[->] (u12) -- (s1C2);
    		\draw[->] (u22) -- (s2C2);
    		\draw[->] (u32) -- (s3C2);
    		
    		\node at (4, -5) {$\ldots$};
	    \end{tikzpicture}
        \caption{An illustration of the proof of \cref{serdgh}. $C=\{x_1, x_2, x_3\}$ is a clause, and the orientation corresponds to a truth assignment $\phi$ with $\phi(x_1)=\false$ and $\phi(x_2)=\phi(x_3)=\true$.}         \label{fig:k2_hardness}
    \end{figure}
    
    First suppose that  $(X,\mathcal{C})$ is a yes-instance of \textsc{MNAE-3-SAT}, so there exists a satisfying assignment $\phi\colon X \to \{\true,\false\}$ for $(X,\mathcal{C})$. We now define an orientation $\dG$ of $G$. First, we orient all edges between $r$ and $Z$ from $r$ to $Z$, we orient all edges between $Z$ and $U$ from $Z$ to $U$, and we orient all edges between $U$ and $S$ from $U$ to $S$. Further, for every $C \in \mathcal{C}$ and $i \in [2]$, we orient the edges in $E(G[S_C^{i}])$ in a way that $\dG[S_C^{i}]$ is a directed triangle. Observe that this orientation is not unique, however this small ambiguity will be effectless. Finally, for every $x \in X$ with $\phi(x)=\true$, we orient the edge connecting $u_x^{1}$ and $u_x^{2}$ from $u_x^{1}$ to $u_x^{2}$ and for every $x \in X$ with $\phi(x)=\false$, we orient the edge connecting $u_x^{1}$ and $u_x^{2}$ from $u_x^{2}$ to $u_x^{1}$.
    
    It remains to show that $\lambda_{\dG}(r,s)\geq 2$ for all $s \in S$. 
    For $x \in X$, $rz_1u_x^1s_x$ and $rz_2u_x^2s_x$ are two arc-disjoint directed $r$-$s_x$ paths in $\dG$, so $\lambda_{\dG}(r,s_x)\geq 2$.
    Now consider $s_{x_1,C}^{1}$ for some $C \in \mathcal{C}$ and $x_1 \in C$.
    Let $x_2, x_3 \in X$ be the variables such that $C=\{x_1, x_2, x_3\}$ and $A(\dG[S^1_{C}]) = \{s^1_{x_1, C} s^1_{x_2, C}, s^1_{x_2, C} s^1_{x_3, C}, s^1_{x_3, C} s^1_{x_1, C}\}$.
    As $\phi$ is satisfying assignment, at least one of $\phi(x_1)$, $\phi(x_2)$, and $\phi(x_3)$ is $\false$.
    In each case, $\dG$ contains two arc-disjoint directed $r$-$s^1_{x_1, C}$ paths: $rz_1u^1_{x_3}s^1_{x_3,C}s^1_{x_1, C}$ and $rz_2u^2_{x_1}u^1_{x_1}s^1_{x_1,C}$ if $\phi(x_1)=\false$, $rz_1u^1_{x_1}s^1_{x_1,C}$ and $rz_2u^2_{x_2}u^1_{x_2}s^1_{x_2,C}s^1_{x_3,C}s^1_{x_1,C}$ if $\phi(x_2)=\false$, 
    and $rz_1u^1_{x_1}s^1_{x_1,C}$ and $rz_2u^2_{x_3}u^1_{x_3}s^1_{x_3,C}s^1_{x_1,C}$ if $\phi(x_3)=\false$.
    This shows that $\lambda_{\dG}(r,s_{x_1,C}^1)\geq 2$. A similar argument shows that  $\lambda_{\dG}(r,s_{x,C}^2)\geq 2$ for all $C \in \mathcal{C}$ and $x \in C$.
    
    Now suppose that $(G,S,r)$ is a yes-instance of \twoori, so there exists a Steiner rooted $2$-orientation $\dG$ of $G$.
    First consider some arbitrary $s \in S$.
    Observe that, as $\dG$ is Steiner rooted $2$-arc-connected and by construction, we have that $2|S|\leq \sum_{s \in S}\lambda_{\dG}(r,s)\leq\sum_{s \in S} d_{\dG}^{in}(s) \leq d_G(S)+|E(G[S])|=2|S|$. Hence equality holds throughout and so the edges between $U$ and $S$ are oriented from $U$ to $S$ in $\dG$.
    Next, consider some $x \in X$ and let $U_x\coloneqq \{u_x^1,u_x^2,s_x\}$.
    As all edges between $U$ and $S$ are oriented from $U$ to $S$ in $\dG$ and $\dG$ is Steiner rooted $2$-arc-connected, we obtain that $2\leq \lambda_{\dG}(r,s_x)\leq d_{\dG}^{in}(U_x)\leq d_G(Z,U_x)=2$.
    It follows that equality holds throughout and all edges between $Z$ and $U$ are oriented from $Z$ to $U$ in $\dG$.
    
    We now define a function $\phi\colon X \to \{\true,\false\}$ as follows: for every $x \in X$, we set $\phi(x)=\true$ if the edge connecting $u_x^1$ and $u_x^2$ is oriented from $u_x^1$ to $u_x^2$ and we set $\phi(x)=\false$ otherwise. 
    Suppose for the sake of a contradiction that there exists some clause $C\in \mathcal{C}$ containing the variables $x_1$, $x_2$, and $x_3$ such that $\phi(x_1)=\phi(x_2)=\phi(x_3)=\true$. Let $Y\coloneqq S^1_C \cup \{z_1,u_{x_1}^1,u_{x_2}^1,u_{x_3}^1\}$.
    As all edges between $Z$ and $U$ are oriented from $Z$ to $U$ in $\dG$, all edges between $U$ and $S$ are oriented from $U$ to $S$ in $\dG$, and by assumption and construction, we have that $d_{\dG}^{in}(Y)\leq 1$. We obtain that $\lambda_{\dG}(r,s_{x_1,C}^1)\leq d_{\dG}^{in}(Y)\leq 1$, a contradiction to $\dG$ being Steiner rooted $2$-arc-connected. A similar argument shows that  $\phi(x_1)=\phi(x_2)=\phi(x_3)=\false$ is impossible. It follows that $\phi$ is a satisfying assignment, so $(X,\mathcal{C})$ is a yes-instance of \textsc{MNAE-3-SAT}.
    
    The statement now follows from the fact that \textsc{MNAE-3-SAT} is NP-hard by \cref{vfcds}.
\end{proof}

We are now ready to prove \cref{kfix} which we restate here for convenience.

\kfix*

\begin{proof}
    Let $k \geq 2$ be fixed. By \cref{serdgh}, we may suppose that $k \geq 3$. We prove the statement by a reduction based on \cref{serdgh}. Let $(G,S,r)$ be an instance of \twoori for which $d_G(S)+|E(G[S])|=2|S|$ holds. We now create a graph $G'$ from $G$ by adding a set $W$ of $k-2$ new vertices and adding a new edge connecting $w$ and $s$ for every $w \in W$ and every $s \in S+r$. Clearly, $G'$ can be computed from $G$ in polynomial time and the size of $G'$ is polynomial in the size of $G$. We claim that $(G',S,r)$ is a yes-instance of \kori if and only if $(G,S,r)$ is a yes-instance of \twoori.
    
    First suppose that $(G,S,r)$ is a yes-instance of \twoori, so there exists an orientation $\dG$ of $G$ such that $\dG$ is Steiner rooted $2$-arc-connected. We now define an orientation $\dGp$ of $G'$ by giving every edge in $E(G)$ the orientation it has in $\dG$, orienting all edges between $r$ and $W$ from $r$ to $W$ and orienting all edges between $W$ and $S$ from $W$ to $S$. Now consider some $s \in S$. As $\dG$ is Steiner rooted $2$-arc-connected, there exist two arc-disjoint directed  $r$-$s$ paths $P_1, P_2$ in $\dG$ and hence in $\dGp[V(G)]$. Further, for every $w\in W$, observe that $rws$ is a directed $r$-$s$ path in $\dGp$. As all of these directed paths are arc-disjoint, it follows that $\lambda_{\dGp}(r,s)\geq k$, so  $\dGp$ is Steiner rooted $k$-arc-connected. It follows that $(G',S,r)$ is a yes-instance of  \kori.
    
    Now suppose that $(G',S,r)$ is a yes-instance of  \kori, so there exists an orientation $\dGp$ of $G'$ such that $\dGp$ is Steiner rooted $k$-arc-connected. Let $\dG\coloneqq \dGp[V(G)]$. By assumption and construction, we have that $d_{G'}(S)+|E(G'[S])|=k|S|$ holds. As $\dGp$ is Steiner rooted $k$-arc-connected, we obtain that $k|S|\leq \sum_{s \in S}\lambda_{\dGp}(r,s)\leq \sum_{s \in S}d_{\dGp}^{in}(s)\leq d_{G'}(S)+|E(G'[S])|=k|S|$, so equality holds throughout. In particular, it follows that all edges between $W$ and $V(G)-r$ are oriented from $W$ to $V(G)-r$, so any directed path containing an arc in $A(\dG)$ is fully contained in $\dGp[V(G)]$. Now consider some $s \in S$. As $\dGp$ is Steiner rooted $k$-arc-connected, there exist $k$ arc-disjoint paths from $r$ to $S$ in $\dGp$. As $d_{G'}(r,W)=k-2$, it follows that at least two of these directed paths contain an arc in $A(\dG)$. By the above remark, it follows that these two directed paths are fully contained in $\dG$. This yields that  $\dG$ is Steiner rooted $2$-arc-connected, so $(G,S,r)$ is a yes-instance of \twoori.

    The theorem now follows from \cref{serdgh}.
\end{proof}

\begin{rem}
    Frank, Kir\'aly, and Kir\'aly~ \cite{frank2003orientation} proved that {\sc $R$-orientation} is NP-hard even if $R(u,v)\leq 3$ for all $(u,v)\in V\times V$. \Cref{kfix} improves this bound from $3$ to $2$. 
\end{rem}

\subsection{Hardness for Fixed \texorpdfstring{$t$}{t} with \texorpdfstring{$k$}{k} as Part of the Input}
\label{sec:fixt}

In this section we prove \cref{tfix}. 
\Cref{3colsat} introduces a new version of the satisfiability problem that will be used in our reduction. \Cref{dircut} shows that the problem is hard even with three terminals, assuming that a certain set of edges is preoriented. Finally, \Cref{concludetfix} explains how to adapt this result to prove \Cref{tfix}.

\subsubsection{A Modified SAT Problem} \label{3colsat}

An instance of \textsc{Maximum 2-Satisfiability} (\textsc{MAX-2-SAT}) consists of a set $X$ of binary variables, a set $\mathcal{C}$ of clauses over $X$ such that each clause contains exactly 2 literals over $X$, and an integer $k$.
We say that a clause $C \in \mathcal{C}$ {\it contains} a variable $x$ if it contains one of the literals $x$ and $\bar{x}$.
For an assignment $\phi\colon X \rightarrow \{\true, \false\}$, we say that some $C \in \mathcal{C}$ is satisfied if at least one of the two literals contained in $C$ is set to $\true$.
The problem is to decide whether there exists an assignment for $(X,\mathcal{C})$ satisfying at least $k$ clauses of $\mathcal{C}$. We will use the following well-known hardness result~\cite{garey1976some}.

\begin{thm}[Garey, Johnson, Stockmeyer]\label{thm:max2sat}
    \textsc{MAX-2-SAT} is NP-hard.
\end{thm}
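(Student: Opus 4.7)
The plan is to reduce from \textsc{3-SAT}, whose NP-hardness is classical. Given a 3-SAT instance $(X,\mathcal{C})$ with $m \coloneqq |\mathcal{C}|$, I will construct a MAX-2-SAT instance $(X', \mathcal{C}', k')$ such that $(X,\mathcal{C})$ is satisfiable if and only if $(X',\mathcal{C}',k')$ is a yes-instance. Take $X' \coloneqq X \cup \{y_C : C \in \mathcal{C}\}$, where $y_C$ is a fresh variable for each 3-clause. For each clause $C = (\ell_1 \vee \ell_2 \vee \ell_3)$ of $\mathcal{C}$, add to $\mathcal{C}'$ the following $10$ 2-clauses: the four units $(\ell_1),(\ell_2),(\ell_3),(y_C)$; the three pairwise negations $(\bar\ell_1 \vee \bar\ell_2), (\bar\ell_2 \vee \bar\ell_3), (\bar\ell_3 \vee \bar\ell_1)$; and the three ``activation'' clauses $(\ell_i \vee \bar y_C)$ for $i \in \{1,2,3\}$. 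Set the threshold $k' \coloneqq 7m$. The reduction clearly runs in polynomial time.

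The correctness rests on a local calculation for one gadget, which I regard as the main (if only mildly technical) step. For any assignment $\phi$ to $X'$ and any $C \in \mathcal{C}$, I will show that the number of the $10$ clauses associated with $C$ that $\phi$ satisfies is at most $7$, and that the value $7$ is attainable over some choice of $\phi(y_C)$ if and only if $\phi|_X$ satisfies the original 3-clause $C$. This is a finite case analysis on $j \coloneqq |\{i : \phi(\ell_i) = \true\}|$ and on $\phi(y_C) \in \{\true, \false\}$: when $j = 0$, the three negation clauses contribute $3$ and the rest contribute at most $3$, yielding at most $6$; when $j = 3$, setting $\phi(y_C) = \true$ yields $7$ (all units except one negation family), and when $j \in \{1,2\}$, a suitable choice of $\phi(y_C)$ (false for $j=1$, either for $j=2$) also yields exactly $7$.

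Summing the bound over the $m$ disjoint gadgets gives $|\{C' \in \mathcal{C}' : \phi \text{ satisfies } C'\}| \leq 7m$, with equality if and only if $\phi|_X$ satisfies every original 3-clause. This yields both directions: any satisfying assignment of $(X,\mathcal{C})$ extends, by the gadget-wise choice of $y_C$ above, to a global assignment meeting the threshold $7m$; conversely, any assignment attaining $7m$ must hit $7$ in each gadget, forcing $j \geq 1$ for every $C \in \mathcal{C}$, so $\phi|_X$ satisfies the 3-SAT instance. Combined with the NP-hardness of \textsc{3-SAT}, this gives the theorem. The only genuinely delicate ingredient is the design of the $10$-clause gadget: it is calibrated precisely so that the optimum differs by exactly $1$ between satisfied and unsatisfied 3-clauses, which is what makes the threshold argument exact.
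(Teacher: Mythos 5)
The paper does not prove this statement; it cites it to Garey, Johnson, and Stockmeyer. Your argument is a correct reconstruction of precisely the reduction from \textsc{3-SAT} given in that cited paper: the $10$-clause gadget, the threshold $k'=7m$, and the case analysis on $j=|\{i:\phi(\ell_i)=\true\}|$ all check out, with the gadget maximum being $7$ exactly when $j\geq 1$ and only $6$ when $j=0$ (one can verify: for $j=0$ the totals are $4$ or $6$ depending on $y_C$; for $j=1$ they are $6$ or $7$; for $j=2$ they are $7$ or $7$; for $j=3$ they are $7$ or $6$). One small mismatch with the paper's formulation: the paper defines \textsc{MAX-2-SAT} so that each clause contains \emph{exactly} two literals, whereas your gadget uses unit clauses such as $(\ell_i)$ and $(y_C)$. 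This is harmless — each unit clause $(\ell)$ can be written as $(\ell \vee \ell)$ without changing anything — but it is worth stating so that your construction matches the definition as given.
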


Given an instance $(X,\mathcal{C})$ of \textsc{MAX-2-SAT}, we define its {\it clause intersection graph} to be the graph $H$ with $V(H)=\mathcal{C}$ and in which two clauses $C_1$ and $C_2$ are linked by an edge if there exists a variable that is contained in both of them.
We define the \textsc{3-Colored Maximum 2-Satisfiability} (\textsc{3-COL-MAX-2-SAT}) problem as follows. 
An instance of the problem consists of a set $X$ of binary variables, a set $\mathcal{C}$ of clauses each containing exactly two literals over $X$, a proper coloring $\psi\colon  V(H)\rightarrow [3]$ of the clause intersection graph $H$ of $(X,\mathcal{C})$ and 3 integers $k_1,k_2,k_3$. The question is whether there exists an assignment $\phi\colon X \rightarrow \{\true, \false\}$ such that for every $i \in [3]$, at least $k_i$ of the clauses in $\psi^{-1}(i)$ are satisfied. We prove the following hardness result.

\begin{lem}\label{3colhard} 
    \textsc{3-COL-MAX-2-SAT} is NP-hard.
\end{lem}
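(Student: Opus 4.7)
The plan is to give a polynomial-time reduction from \textsc{MAX-2-SAT}, which is NP-hard by \cref{thm:max2sat}. Given an instance $(X,\mathcal{C},k)$ of \textsc{MAX-2-SAT}, the idea is to ``spread out'' each variable's occurrences into separate copies, one per clause it appears in, so that the primary clauses become pairwise variable-disjoint (and hence can all share color $1$), and then to glue the copies back together using half-equivalence clauses arranged in an even cycle, which we color alternately with colors $2$ and $3$.

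Concretely, for each variable $x\in X$ with $m_x$ occurrences, I would introduce copies $x^{(1)},\ldots,x^{(m_x)}$, together with one additional dummy copy $x^{(m_x+1)}$ whenever $m_x$ is odd; write $m'_x$ for the resulting (even) number of copies. For each clause $C_j\in\mathcal{C}$, form a primary clause $C'_j$ by substituting each literal with the literal over the corresponding fresh copy, and assign color $1$ to every such $C'_j$. For each $x$, add the half-equivalence clauses $E^x_j\coloneqq(\bar{x}^{(j)}\vee x^{(j+1)})$ for $j=1,\ldots,m'_x$, with indices taken modulo $m'_x$, and assign them colors $2$ and $3$ alternately around the cycle. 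Finally, set $k_1\coloneqq k$ and let $k_2$ and $k_3$ equal the total number of color-$2$ and color-$3$ clauses respectively, so that these thresholds force all half-equivalence clauses to be satisfied in any feasible assignment.

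Next I would verify that the coloring is proper: within color $1$, distinct primary clauses use disjoint fresh copies and hence share no variable; within each of colors $2$ and $3$, two half-equivalence clauses for different original variables involve copies of different originals, and the alternation around each cycle ensures that consecutive half-equivalences for the same $x$ receive distinct colors. Each ``real'' copy $x^{(j)}$ with $1\le j\le m_x$ appears in exactly three clauses, namely the primary $C'_{\sigma(j)}$ (color $1$) and the two adjacent half-equivalences $E^x_{j-1},E^x_j$, one of color $2$ and the other of color $3$; each dummy $x^{(m_x+1)}$ appears only in two half-equivalences of opposite colors. Hence the occurrence pattern is compatible with the proposed $3$-coloring.

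For correctness, the key observation is that on an even directed cycle of half-implications $x^{(1)}\Rightarrow x^{(2)}\Rightarrow\cdots\Rightarrow x^{(m'_x)}\Rightarrow x^{(1)}$, every clause is satisfied if and only if all copies of $x$ receive the same truth value, since any mixed assignment must contain a $\true\Rightarrow\false$ transition somewhere around the cycle. Thus any assignment meeting the thresholds $k_2,k_3$ can be pulled back to $\phi(x)\coloneqq\phi'(x^{(1)})$, and the $\ge k$ satisfied primary clauses correspond bijectively to $\ge k$ satisfied clauses of $\mathcal{C}$; conversely, any assignment of $\mathcal{C}$ extends via $\phi'(x^{(j)})\coloneqq\phi(x)$, satisfying every half-equivalence and at least $k$ primary clauses. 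The main subtlety to get right is the parity issue in the cycle coloring, which is precisely why the dummy copy for odd $m_x$ is introduced; once this parity correction is in place, the remaining verification is entirely routine.
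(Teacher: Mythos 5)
Your proposal is correct and follows essentially the same reduction as the paper: spread each variable's occurrences into fresh copies (padding with one extra copy when the count is odd so the equivalence cycle has even length), glue the copies with half-implication clauses arranged in a cycle colored alternately with two colors, color the rewritten primary clauses with the third color, and use tight thresholds on the two cycle colors to force all copies to agree. The only difference is a relabeling of which color index carries the primary clauses (you use color $1$ with $k_1=k$, the paper uses color $3$ with $k_3=k$), which is immaterial.
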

\begin{proof}
    We prove this statement by a reduction from \textsc{MAX-2-SAT}.
    Let $(X,\cC, k)$ be an instance of \textsc{MAX-2-SAT} where $X=\{x_1,\ldots, x_n\}$ and let $m\coloneqq |\cC|$.
    We may assume that each variable is contained in each clause at most once.
    For $i \in [n]$, let $m_i$ denote the number of clauses containing $x_i$, and define $m'_i \coloneqq m_i$ if $m_i$ is even and $m'_i\coloneq m_i+1$ otherwise.
    Let $x_{i,j}$ be a new binary variable for each $i \in [n]$ and $j \in [m'_i]$, and let $X'\coloneqq\{x_{i,j}: i \in [n], j \in [m'_i]\}$.
    We define an instance of \textsc{3-COL-MAX-2-SAT} with set of variables $X'$.
    Consider any ordering of the clauses in $\cC$, and for each clause $C \in \cC$, replace each variable $x_i$ contained in $C$ by $x_{i,j}$ if $C$ is the $j$th clause containing $x_i$ (while keeping the possible negation of the variable).
    Let $\tilde{\cC}$ denote the thus obtained clauses.
    For $i \in [n]$, let
    \[\cC_i\coloneqq \{(x_{i,1}\vee \widebar{x_{i,2}}), ~ (x_{i,2}\vee \widebar{x_{i,3}}), \ldots, ~ (x_{i,m'_i-1}\vee \widebar{x_{i,m'_i}}), ~ (x_{i,m'_i}\vee \widebar{x_{i,1}})\}.\]
    Let $\cC'\coloneqq \tilde{\cC} \cup \cC_1\cup \ldots \cup \cC_n$.
    Observe that any two clauses in $\tilde{\cC}$ contain pairwise distinct variables, and so do any clause from $\cC_{i_1}$ and any clause from $\cC_{i_2}$ for $i_1, i_2 \in [n]$ with $i_1 \ne i_2$. 
    For $i \in [n]$, as $m'_i$ is even, the clause intersection graph of $(X', \cC_i)$ is a cycle of even length.
    This shows that a proper 3-coloring $\psi$ of the intersection graph of $(X',\cC')$ can be computed in polynomial time: each $\cC_i$ can be properly colored using the colors 1 and 2, and by coloring the clauses in $\tilde{C}$ to 3, we get a proper 3-coloring.
    Finally, we set $k_1\coloneqq |\psi^{-1}(1)|$, $k_2 \coloneqq |\psi^{-1}(2)|$, and $k_3\coloneqq k$. 
    It is clear that $(X',\cC',\psi, k_1, k_2, k_3)$ is an instance of \textsc{3-COL-MAX-2-SAT} which can be computed in polynomial time.
    We claim that $(X,\cC, k)$ is a yes-instance of \textsc{MAX-2-SAT} if and only if $(X', \cC', \psi, k_1, k_2, k_3)$ is a yes-instance of \textsc{3-COL-MAX-2-SAT}.

    Assume first that $(X, \cC, k)$ is a yes-instance, that is, there is an assignment $\phi\colon X \to \{\true, \false\}$ satisfying at least $k$ clauses of $\cC$.
    Define the assignment $\phi'\colon X'\to \{\true, \false\}$ such that $\phi'(x_{i,1})=\ldots = \phi'(x_{i,m'_i}) = \phi(x_i)$ for each $i \in [n]$.
    Then each clause in $\cC_i$ is satisfied for each $i \in [n]$, and at least $k$ clauses in $\tilde{C}$ are also satisfied.
    This shows that $\phi'$ satisfies at least $k_t$ clauses of $\psi^{-1}(t)$ for $t \in [3]$, thus $(X', \cC', \psi, k_1, k_2, k_3)$ is a yes-instance of 3-COL-MAX-2-SAT.
    
    Conversely, suppose that $(X', \cC', \psi, k_1, k_2, k_3)$ is a yes-instance, that is, there is an assignment $\phi'\colon X \to \{\true, \false\}$ satisfying at least $k_t$ clauses of $\psi^{-1}(t)$ for $t \in [3]$.
    Since $k_1=|\psi^{-1}(1)|$, $k_2=|\psi^{-1}(2)|$, and $\cC_i$ is colored using the colors 1 and 2, $\phi'$ satisfies each clause of $\cC_i$ for $i \in [n]$.
    This implies that $\phi'(x_{i,1})=\ldots = \phi'(x_{i,m'_i})$ holds for $i \in [n]$, thus we can define $\phi\colon X \to \{\true, \false\}$ such that $\phi(x_i)=\phi'(x_{i,1})=\ldots=\phi'(x_{i,m'_i})$ holds for $i \in [n]$.
    It is clear that the number of clauses of $\cC$ satisfied by $\phi$ is the same as the number of clauses of $\tilde{C}$ satisfied by $\phi'$, thus $(X, \cC, k)$ is a yes-instance of \textsc{MAX-2-SAT}.  This finishes the proof using \cref{thm:max2sat}.  
\end{proof}

\subsubsection{Hardness Under Further Constraints}
\label{dircut}

We here deal with a modified version of \tori that serves as an intermediate step in our reduction. Namely, an instance of \mtori consists of a graph $G$, a vertex $r \in V(G)$, a set $S \subseteq V(G)-r$, an integer $k$, and a set $Y \subseteq V(G)\setminus S$ with $r \in Y$. The question is whether there exists an orientation $\dG$ of $G$ such that $\lambda_{\dG}(r,s)\geq k$ holds for all $s \in S$ and all edges in $\delta_G(Y)$ are oriented away from $Y$. We prove the following result.

\begin{lem}\label{etxdrcdzfvzugbu}
    \mthreeori is NP-hard.
\end{lem}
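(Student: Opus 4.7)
The plan is to prove NP-hardness by a polynomial-time reduction from \textsc{3-COL-MAX-2-SAT}, which is NP-hard by Lemma~\ref{3colhard}. Given an instance $(X,\cC,\psi,k_1,k_2,k_3)$ with $m_i\coloneqq|\psi^{-1}(i)|$, I plan to construct an instance $(G,r,S,k,Y)$ of \mthreeori with $S=\{s_1,s_2,s_3\}$, one terminal per color class, the target $k\coloneqq\max_{i\in[3]}(m_i+k_i)+1$, and $p_i\coloneqq k-m_i-k_i\ge 1$ parallel preoriented padding edges $r\to s_i$ equalizing the thresholds across colors. With this calibration, the connectivity requirement $\lambda_{\dG}(r,s_i)\ge k$ will correspond to the condition ``at least $k_i$ clauses of color $i$ are satisfied.''

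The construction follows the blueprint of Lemma~\ref{serdgh}, modified so as to use only one terminal per color class. Concretely, $G$ will contain the root $r$, three dispatcher vertices $z_1,z_2,z_3$, a clause vertex $c_C$ for each $C\in\cC$, and, for each variable $x_j$, a variable gadget whose core consists of two literal vertices $v_j^T,v_j^F$ joined by an undirected assignment edge. Taking $Y\coloneqq\{r,z_1,z_2,z_3\}$, the preoriented edges $\delta_G(Y)$ will consist of the padding edges, arcs $r\to v_j^T$ and $r\to v_j^F$ with carefully chosen multiplicities, and one baseline arc $z_i\to c_C$ per color-$i$ clause. The remaining undirected edges will include two parallel edges $c_Cs_{\psi(C)}$ per clause; the assignment edges; and, for each literal $\ell\in C$ with $\ell=x_j$ (respectively $\ell=\bar x_j$), an edge $v_j^Tc_C$ (respectively $v_j^Fc_C$).

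The analysis will proceed in three steps. First, a cut argument on $V\setminus\{s_i\}$ together with the padding multiplicities will force the two parallel edges $c_Cs_i$ to be oriented toward $s_i$ in every feasible orientation. Second, the flow contribution of each clause vertex $c_C$ colored $i$ to $\lambda_{\dG}(r,s_i)$ will be shown to be $1$ if no literal-to-clause arc enters $c_C$ and $2$ otherwise, yielding the decomposition $\lambda_{\dG}(r,s_i)=p_i+m_i+N_i$, where $N_i$ counts color-$i$ clauses receiving at least one literal arc. Third, by translating the orientation of each assignment edge into a truth value, I will argue that $N_i$ equals the number of color-$i$ clauses satisfied by the induced assignment, giving the desired equivalence.

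The main obstacle will be forcing the literal-to-clause orientations to respect a globally consistent truth assignment rather than being chosen independently to trivially boost every clause. My plan to address this is to augment the variable gadget so that the assignment edge carries a critical unit of flow whose direction must propagate to all literal-to-clause edges of $x_j$; I expect this to be achievable by an appropriate calibration of the preoriented multiplicities into $v_j^T$ and $v_j^F$, possibly together with a few additional auxiliary vertices or cut-inducing edges that turn the total literal-flow capacity into a tight budget. The proper 3-coloring of the clause intersection graph is crucial here, as it guarantees that clauses of the same color share no variables; consequently, the flow into each $s_i$ will decompose through pairwise disjoint variable gadgets, so that the consistency argument can be carried out locally, color by color, completing the reduction.
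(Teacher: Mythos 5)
Your high-level strategy matches the paper's: reduce from \textsc{3-COL-MAX-2-SAT}, build variable gadgets with an undirected ``assignment'' edge, one clause vertex per clause, one terminal per color, and pad $r\to s_i$ so that the threshold $\lambda_{\dG}(r,s_i)\ge k$ encodes ``at least $k_i$ color-$i$ clauses satisfied.'' The proper 3-coloring is used for the same purpose in both (clauses of one color share no variables, so each terminal sees pairwise disjoint variable gadgets).

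The critical divergence is where $Y$ is placed, and this is exactly where your plan has a real gap. You set $Y=\{r,z_1,z_2,z_3\}$ and keep the literal vertices $v_j^T,v_j^F$ outside $Y$, so every literal-to-clause edge and every assignment edge is a free, unoriented edge. The paper instead places all literal vertices $u_\ell$ \emph{inside} $Y$: every $u_\ell$--$v_C$ edge is then forcibly preoriented from $u_\ell$ to $v_C$, the assignment edge $u_x$--$u_{\bar x}$ is internal to $Y$ and hence the unique free ``switch'', and the padding is calibrated so that each color-$i$ clause contributes $2$ units (via its two literal vertices, using two parallel $u_\ell$--$v_C$ edges and three parallel $v_C$--$s_i$ edges) plus one extra unit exactly when the assignment edge of one of its literals points favorably. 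The reverse direction then goes through a clean cut: taking $Z$ to contain $r$, the other terminals, all clause vertices except the unsatisfied color-$i$ ones, and the ``wrong'' literal vertices of those unsatisfied clauses, one gets $d^{out}_{\dG}(Z)\le k+|\cC'_i|-k_i$, and $\lambda_{\dG}(r,s_i)\ge k$ forces $|\cC'_i|\ge k_i$.

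Your own plan explicitly leaves this crux unresolved: you write that you ``expect'' consistency to be enforceable by calibrating the multiplicities of $r\to v_j^T$ and $r\to v_j^F$, ``possibly together with a few additional auxiliary vertices or cut-inducing edges.'' As written, this does not work: since clauses of color $i$ pairwise share no variables, each $v_j^T$ (or $v_j^F$) need only deliver \emph{one} unit of flow toward the \emph{single} color-$i$ clause it touches. As soon as there is even a single preoriented arc $r\to v_j^T$, an adversarial orientation can point every $v_j^T c_C$ and $v_j^F c_C$ edge into $c_C$ and realize $N_i=m_i$ for every $i$, independently of any consistent assignment, so your claimed decomposition $\lambda_{\dG}(r,s_i)=p_i+m_i+N_i$ with $N_i$ equal to the number of satisfied clauses does not follow. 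The mechanism you defer (forcing the direction of the assignment edge to propagate) is precisely what the whole construction hinges on; until that mechanism is concretely built and verified by a cut argument, the proposal is a plan rather than a proof.
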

\begin{proof}
    We prove this statement by a reduction from \textsc{3-COL-MAX-2-SAT}. Let $(X,\mathcal{C},\psi,k_1,k_2,k_3)$ be an instance of \textsc{3-COL-MAX-2-SAT}. For $i \in [3]$, we let $\mathcal{C}_i$ consist of all $C \in \mathcal{C}$ with $\psi(C)=i$. 
    We may assume that $k_i \le |\cC_i|$ for $i \in [3]$.
    We now create an instance $(G,S,r,k,Y)$ of \mthreeori as follows: first, we let $V(G)$ consist of two vertices $u_x$ and $u_{\bar{x}}$ for every $x \in X$, a vertex $v_C$ for every $C \in \mathcal{C}$, a vertex $r$ and a set of vertices $S=\{s_1,s_2,s_3\}$. We first let $E(G)$ contain an edge connecting $r$ and $u_\ell$ for every literal $\ell$ over $X$ and an edge connecting $u_x$ and $u_{\bar{x}}$ for every $x \in X$. Next, for every $C \in \mathcal{C}$ and every literal $\ell$ contained in $C$, we let $E(G)$ contain two parallel edges connecting $u_\ell$ and $v_C$. Further, for every $i \in [3]$ and every $C \in \mathcal{C}_i$, we add 3 parallel edges connecting $v_C$ and $s_i$. We now set $k\coloneqq 3|\mathcal{C}|$ and for every $i \in [3]$, we add $k-(2|\mathcal{C}_i|+k_i)$ parallel edges connecting $r$ and $s_i$. Finally, we let $Y$ consist of $r$ and the vertex $u_\ell$ for every literal $\ell$ over $X$. This finishes the description of $(G,S,r,k,Y)$; see Figure~\ref{fig:mod_Steiner} for an illustration. It is easy to see that $(G,S,r,k,Y)$ can be constructed from $(X,\mathcal{C},\psi,k_1,k_2,k_3)$ in polynomial time. We claim that $(G,S,r,k,Y)$ is a yes-instance of \mtori if and only if $(X,\mathcal{C},\psi,k_1,k_2,k_3)$ is a yes-instance of \textsc{3-COL-MAX-2-SAT}.

    \begin{figure}
        \centering
    	\newcommand{\threeparallel}[3]{%
    		\draw[color=#3, ->] (#1) to[bend left=6] ($(#1)!1cm!4:(#2)$) to ($(#2)!1cm!-4:(#1)$) to[bend left=6] (#2);
    		\draw[color=#3,->] (#1) to[bend right=6] ($(#1)!1cm!-4:(#2)$) to ($(#2)!1cm!4:(#1)$) to[bend right=6] (#2);
    		\draw[color=#3, ->] (#1) to (#2);
    	}
    	\newcommand{\twoparallel}[3]{
    		\draw[color=#3, ->] (#1) to[bend left=2] ($(#1)!0.7cm!4:(#2)$) to ($(#2)!0.7cm!-4:(#1)$) to[bend left=2] (#2);
    		\draw[color=#3, ->] (#1) to[bend right=2] ($(#1)!0.7cm!-4:(#2)$) to ($(#2)!0.7cm!4:(#1)$) to[bend right=2] (#2);
    	}
    	\newcommand{\twotwoparallel}[4]{
    		\twoparallel{#1}{#3}{#4};
    		\twoparallel{#2}{#3}{#4};
    	}
    	
    	\begin{tikzpicture}
    		\tikzset{
    			every circle node/.style={minimum size=0.35cm, inner sep=0cm},
    			every rectangle node/.style={minimum size=0.3cm, inner sep=0cm},
    			font=\small,
    			>={Latex[scale=1]}
    		}
    		\def\d{2.4}
    		\def\e{1.6}
    		\def\h{2.3}
    
    		\node[circle, draw, minimum size=1cm, inner sep=1cm, label={$r$}] (r) at (0,0) {};
    		\node[circle, fill, label={[xshift=-0.05cm,yshift=-0.03cm]:$u_{x_1}$}] (u11) at ($(-2.5*\d-0.45,-\h)$) {};
    		\node[circle, fill, label={[xshift=0.05cm,yshift=-0.03cm]:$u_{\widebar{x_1}}$}] (u12) at ($(u11)+(0.9, 0)$) {};
    		\node[circle, fill, label={[xshift=-0.05cm,yshift=-0.03cm]:$u_{x_2}$}] (u21) at ($(u11)+(\d, 0)$) {};
    		\node[circle, fill, label={[xshift=-0.02cm,yshift=-0.03cm]:$u_{\widebar{x_2}}$}] (u22) at ($(u12)+(\d, 0)$) {};
    		\node[circle, fill, label={[xshift=-0.25cm,yshift=-0.02cm]:$u_{x_3}$}] (u31) at ($(u21)+(\d, 0)$) {};
    		\node[circle, fill, label={[xshift=0.42cm, yshift=-0.1cm]:$u_{\widebar{x_3}}$}] (u32) at ($(u22)+(\d, 0)$) {};
    		\node[circle, fill, label={[xshift=-0.38cm, yshift=-0.1cm]:$u_{x_4}$}] (u41) at ($(u31)+(\d, 0)$) {};
    		\node[circle, fill, label={[xshift=0.15cm,yshift=-0.03cm]:$u_{\widebar{x_4}}$}] (u42) at ($(u32)+(\d, 0)$) {};
    		\node[circle, fill, label={[xshift=0.00cm,yshift=-0.03cm]:$u_{x_5}$}] (u51) at ($(u41)+(\d, 0)$) {};
    		\node[circle, fill, label={[xshift=-0.05cm,yshift=-0.03cm]:$u_{\widebar{x_5}}$}] (u52) at ($(u42)+(\d, 0)$) {};
    		\node[circle, fill, label={[xshift=-0.09cm,yshift=-0.03cm]:$u_{x_6}$}] (u61) at ($(u51)+(\d, 0)$) {};
    		\node[circle, fill, label={[xshift=0.00cm,yshift=-0.03cm]:$u_{\widebar{x_6}}$}] (u62) at ($(u52)+(\d, 0)$) {};
    
    		\draw[<-] (u11) -- (u12);
    		\draw[->] (u21) -- (u22);		
    		\draw[->] (u31) -- (u32);
    		\draw[<-] (u41) -- (u42);
    		\draw[->] (u51) -- (u52);
    		\draw[->] (u61) -- (u62);
    		
    		\draw[bend right=12, ->] (r) to (u11);
    		\draw[bend right=5, ->] (r) to (u12);
    		\draw[bend right=5, ->] (r) to (u21);
    		\draw[->] (r) to (u22);
    		\draw[->] (r) to (u31);
    		\draw[->] (r) to (u32);
    		\draw[->] (r) to (u41);
    		\draw[->] (r) to (u42);
    		\draw[->] (r) to (u51);
    		\draw[bend left=5, ->] (r) to (u52);
    		\draw[bend left=5, ->] (r) to (u61);
    		\draw[bend left=12, ->] (r) to (u62);
    		
    		\node[circle, myred, fill,label={[myred, xshift=-0.45cm]:$v_{C_1}$}] (v1) at ($(-3.5*\e,-2*\h)$) {};
    		\node[circle, myred, fill, label={[myred, xshift=-0.1cm]:$v_{C_2}$}] (v2) at ($(v1)+(\e, 0)$) {};
    		\node[circle, fill, myred, label={[myred, xshift=0.05cm]:$v_{C_3}$}] (v3) at ($(v2)+(\e, 0)$) {};
    		\node[circle, fill, myblue, label={[myblue, xshift=0.4cm, yshift=-0.1cm]:$v_{C_4}$}] (v4) at ($(v3)+(\e, 0)$) {};
    		\node[circle, fill, myblue,  label={[myblue, xshift=0.4cm, yshift=-0.1cm]:$v_{C_5}$}] (v5) at ($(v4)+(\e, 0)$) {};
    		\node[circle, fill, myblue, label={[myblue, xshift=-0.2cm]:$v_{C_6}$}] (v6) at ($(v5)+(\e, 0)$) {};
    		\node[circle, fill, mygreen, label={[mygreen, xshift=0.3cm, yshift=-0.1cm]:$v_{C_7}$}] (v7) at ($(v6)+(\e, 0)$) {};
    		\node[circle, fill, mygreen, label={[mygreen, xshift=0.5cm, yshift=-0.1cm]:$v_{C_8}$}] (v8) at ($(v7)+(\e, 0)$) {};
    		
    		\draw[rounded corners=2mm, densely dotted, draw] (-6.9,-3.2) rectangle (7,1);
    		\node at (7.2, -1.2) {$Y$};
    		
    		\twotwoparallel{u11}{u21}{v1}{myred};
    		\twotwoparallel{u31}{u42}{v2}{myred};
    		\twotwoparallel{u52}{u61}{v3}{myred};
    		
    		\twotwoparallel{u11}{u32}{v4}{myblue};
    		\twotwoparallel{u22}{u41}{v5}{myblue};
    		\twotwoparallel{u52}{u62}{v6}{myblue};
    
    		\twotwoparallel{u22}{u42}{v7}{mygreen};
    		\twotwoparallel{u62}{u51}{v8}{mygreen};
    		
    		\node[fill, myred, label={[myred, xshift=0.04cm]0:$s_1$}] (s1) at ($(v2)+(0, -\h)$) {};
		      \node[fill, myblue, label={[myblue, xshift=-0.04cm]180:$s_2$}] (s2) at ($(v5)+(0, -\h)$) {};
		      \node[fill, mygreen, label={[mygreen, xshift=-0.04cm]180:$s_3$}] (s3) at ($(v7)!0.5!(v8) + (0,-\h)$) {};
	
    		\threeparallel{v1}{s1}{myred};
    		\threeparallel{v2}{s1}{myred};
    		\threeparallel{v3}{s1}{myred};
    		\threeparallel{v4}{s2}{myblue};
    		\threeparallel{v5}{s2}{myblue};
    		\threeparallel{v6}{s2}{myblue};
    		\threeparallel{v7}{s3}{mygreen};
    		\threeparallel{v8}{s3}{mygreen};
            \foreach \i in {-1.5,0,1.5} {
    			\draw[rounded corners=2, ->, very thin] (r) -- ($(r)+(-1, \i*0.04)$) -- ($(-3*\d-\i*0.03, \i*0.04)$) -- ($(-3*\d-0.03*\i, -3*\h-\i*0.04)$) -- ($(s1)+(-1,-\i*0.04)$) -- (s1);
    		}
    		\foreach \i in {-1.5,0,1.5} {
    			\draw[rounded corners=2, ->, very thin] (r) -- ($(r)+(1, \i*0.04)$) -- ($(3.2*\d+\i*0.03, \i*0.04)$) -- ($(3.2*\d+0.03*\i, -3*\h-\i*0.04)$) -- ($(s3)+(1,-\i*0.04)$) -- (s3);
    		}
    		\foreach \i in {-1.5,0,1.5} {
    			\draw[rounded corners=2, ->, very thin] (r) -- ($(r)+(0.5, 0.5+\i*0.04)$) -- ($(3.2*\d+\i*0.03+0.5, 0.5+\i*0.04)$) -- ($(3.2*\d+0.03*\i+0.5, -3*\h-\i*0.04-0.5)$) -- ($(s2)+(-\i*0.04,-0.5-\i*0.04)$) -- (s2);
    		}
            \draw[densely dotted] (-2.7,0.15) -- (-2.7, -0.15);  
            \node[label={\scriptsize $k-(2|\cC_1|+k_1)$}] at (-1.6, -0.12) {};
            \draw[densely dotted] (1.5,0.15) -- (1.5, -0.15);  
             \node[label={\scriptsize $k-(2|\cC_3|+k_3)$}] at (2.6, -0.12) {};
             \draw[densely dotted] (1.5,0.65) -- (1.5, 0.35);  
             \node[label={\scriptsize $k-(2|\cC_2|+k_2)$}] at (2.6, 0.38) {};
    	\end{tikzpicture}
        \caption{An illustration of the proof of \cref{etxdrcdzfvzugbu}. The set of clauses consists of $\cC_1=\{(x_1\vee x_2), (x_3\vee \widebar{x_4}), (\widebar{x_5}\vee x_6)\}$, $\cC_2=\{(x_1\vee \widebar{x_3}, ), (\widebar{x_2}\vee x_4), (\widebar{x_5}\vee \widebar{x_6}))\}$, and $\cC_3=\{(\widebar{x_2}\vee \widebar{x_4}), (x_5 \vee \widebar{x_6})\}$. The orientation corresponds to the truth assignment $\phi$ with $\phi(x_1)=\phi(x_4)=\true$, $\phi(x_2)=\phi(x_3)=\phi(x_5)=\phi(x_6)=\false$.}
        \label{fig:mod_Steiner}
    \end{figure}

    First suppose that $(X,\mathcal{C},\psi,k_1,k_2,k_3)$ is a yes-instance of \textsc{3-COL-MAX-2-SAT}, so there exists an assignment $\phi\colon X\rightarrow \{\true, \false\}$ such that for $i \in [3]$, at least $k_i$ of the clauses in $\mathcal{C}_i$ are satisfied. We now define an orientation $\dG$ of $G$ as follows: first, we orient all edges in $\delta_G(r)$ away from $r$ and we orient all edges in $\delta_G(S)$ toward $S$.
    Next, for every $C \in \mathcal{C}$ and every literal $\ell$ contained in $C$, we orient the edge connecting $u_\ell$ and $v_C$ from $u_\ell$ to $v_C$. Finally, for every $x \in X$ with $\phi(x)=\true$, we orient the edge connecting $u_x$ and $u_{\bar{x}}$ from $u_{\bar{x}}$ to $u_x$ and for every $x \in X$ with $\phi(x)=\false$, we orient the edge connecting $u_x$ and $u_{\bar{x}}$ from $u_x$ to $u_{\bar{x}}$. This finishes the description of $\dG$.
    Clearly, we have that all the edges in $\delta_G(Y)$ are oriented away from $Y$ in $\dG$. It hence suffices to prove that $\lambda_{\dG}(r,s_i)\geq k$ holds for every $i \in [3]$. 
    To this end, fix some $i \in [3]$ and let $\mathcal{C}'_i$ be the set of clauses in $\mathcal{C}_i$ that are satisfied by $\phi$.
    Further, for every $C \in \mathcal{C}_i$ that contains some literals $\ell_1$ and $\ell_2$, let $A_C\subseteq A(\dG)$ be the set of arcs that are orientations of edges in $E(G)$ incident to at least one of the vertices $u_{\ell_1}$ ,$u_{\widebar{\ell_1}}$     $u_{\ell_2}$, $u_{\widebar{\ell_2}}$, and $v_C$.
    Observe that $A_C$ contains arc-disjoint directed $r$-$s_i$ paths $ru_{\ell_1}v_Cs_i$ and $ru_{\ell_2}v_Cs_i$, and if $C\in \cC_i$, then the edge $u_{\widebar{\ell_j}}u_{\ell_j}$ is oriented towards $u_{\ell_j}$ for some $j\in [2]$, thus there is a third arc-disjoint directed $r$-$s_i$ path $ru_{\widebar{\ell_j}} u_{\ell_j} v_C s_i$.
    As $\psi$ is a proper coloring, $A_{C_1}$ and $A_{C_2}$ are disjoint for all distinct $C_1, C_2 \in \cC_i$.
    This shows that there exist $2|\cC_i|+|\cC'_i|$ arc-disjoint directed $r$-$s_i$ paths contained in $\bigcup_{C\in \cC_i} A_C$.
    Moreover, $\delta_{\dG}^{out}(Z)$ contains $k-(2|\mathcal{C}_i|+k_i)$ arcs from $r$ to $s_i$ by construction, and we have $|\mathcal{C}'_i|\geq k_i$ by the choice of $\phi$.
    These together show that $\lambda_{\dG}(r,s_i) \ge 2|\cC_i|+|\cC'_i|+k-(2|\mathcal{C}_i|+k_i) \ge k$.
    It follows that $(G,S,r,k,Y)$ is a yes-instance of \mtori.

    Now suppose that $(G,S,r,k,Y)$ is a yes-instance of \mtori, so there exists an orientation $\dG$ of $G$ such that all edges in $\delta_G(Y)$ are oriented away from $Y$ and $\lambda_{\dG}(r,s_i)\geq k$ holds for all $i \in [3]$. We now define an assignment $\phi\colon X \to\{\true, \false\}$ as follows: for every $x \in X$ for which the edge connecting $u_x$ and $u_{\bar{x}}$ is oriented from $u_{\bar{x}}$ to $u_x$, we set $\phi(x)=\true$ and for every $x \in X$ for which the edge connecting $u_x$ and $u_{\bar{x}}$ is oriented from $u_{x}$ to $u_{\bar{x}}$, we set $\phi(x)=\false$. We still need to show that for every $i \in [3]$, at least $k_i$ of the clauses in $\mathcal{C}_i$ are satisfied by $\phi$. To this end, fix some $i \in [3]$, let $\mathcal{C}'_i$ be the set of clauses in $\mathcal{C}_i$ that are satisfied by $\phi$ and let $\mathcal{C}''_i\coloneqq\mathcal{C}_i\setminus \mathcal{C}'_i$. Now let $Z \subseteq V(G)$ be the set that contains $S \setminus \{s_i\}$, the vertex $v_C$ for all $C \in \mathcal{C}\setminus \mathcal{C}''_i$, the vertices $u_x$ and $u_{\bar{x}}$ for all variables $x \in X$ that are not contained in a clause of $\mathcal{C}''_i$ and, for all $C=\{\ell_1,\ell_2\}\in \mathcal{C}''_i$, the vertices $u_{\widebar{\ell_1}}$ and $u_{\widebar{\ell_2}}$. For every $a \in \delta_{\dG}^{in}(s_i)\cap \delta_{\dG}^{out}(Z)$, we have by construction that $a$ is either an orientation of one of the $k-(2|\mathcal{C}_i|+k_i)$ edges connecting $r$ and $s_i$ or an orientation of one of the 3 edges connecting $v_C$ and $s_i$ for some $C \in \mathcal{C}_i$. Now consider some $C\in \mathcal{C}''_i$. By the definition of $Z$, we have $\delta_{\dG}^{out}(Z)\cap \delta_{\dG}^{in}(v_C)=\emptyset$. Now let $\ell$ be a literal contained in $C$. As $C$ is not satisfied by $\phi$, as all edges in $\delta_G(Y)$ are oriented away from $Y$ in $\dG$ and by construction, we obtain that $|\delta_{\dG}^{out}(Z)\cap \delta_{\dG}^{in}(u_\ell)|\leq 1$. We obtain that $d_{\dG}^{out}(Z)\leq k-(2|\mathcal{C}_i|+k_i)+3|\mathcal{C}'_i|+2|\mathcal{C}''_i|=k+|\mathcal{C}'_i|-k_i$. On the other hand, as $r \in Z, s_i \in V(G)\setminus Z$ and $\lambda_{\dG}(r,s_i)\geq k$, we have $d_{\dG}^{out}(Z)\geq k$. We obtain $|\mathcal{C}'_i|\geq k_i$. Hence $(X,\mathcal{C},\psi,k_1,k_2,k_3)$ is a yes-instance of \textsc{3-COL-MAX-2-SAT}.
\end{proof}

\subsubsection{Proof of \texorpdfstring{\cref{tfix}}{Theorem 1.3}}
\label{concludetfix}

In this section we derive \cref{tfix} from \cref{etxdrcdzfvzugbu}. 
We will use the following simple observation on connectivities in directed graphs.

\begin{lem}\label{vuzuighj}
    Let $D$ be a directed graph and $r,v,s\in V(D)$. Further, let $D'$ be obtained from $D$ by adding an arc from $r$ to $v$ and an arc from $v$ to $s$. Then $\lambda_{D'}(r,s)=\lambda_D(r,s)+1$.
\end{lem}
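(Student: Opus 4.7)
The plan is to prove the two inequalities $\lambda_{D'}(r,s) \geq \lambda_D(r,s) + 1$ and $\lambda_{D'}(r,s) \leq \lambda_D(r,s) + 1$ separately; both directions are essentially one-liners, so this lemma should not present any real obstacle.

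For the lower bound, I would take any family of $\lambda_D(r,s)$ pairwise arc-disjoint directed $r$-$s$ paths in $D$. These remain arc-disjoint directed $r$-$s$ paths in $D'$, since $D \subseteq D'$. Augment the family with the two-arc path $r \to v \to s$ that uses precisely the two newly added arcs. Since none of the original paths uses these new arcs, the augmented family is still pairwise arc-disjoint, giving $\lambda_{D'}(r,s) \geq \lambda_D(r,s)+1$.

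For the upper bound, I would invoke Menger's theorem: let $U \subseteq V(D)$ be a minimum $r$-$s$ cut in $D$, so $r \in U$, $s \notin U$, and $d_D^{out}(U) = \lambda_D(r,s)$. A short case analysis on the location of $v$ shows that among the two newly added arcs $rv$ and $vs$, exactly one leaves $U$ in $D'$: if $v \in U$, then $rv$ is internal to $U$ and $vs$ leaves $U$; if $v \notin U$, then $rv$ leaves $U$ and $vs$ lies outside $U$. In either case $d_{D'}^{out}(U) = d_D^{out}(U)+1 = \lambda_D(r,s)+1$, so by Menger's theorem applied in $D'$ we get $\lambda_{D'}(r,s) \leq d_{D'}^{out}(U) = \lambda_D(r,s)+1$, completing the proof.
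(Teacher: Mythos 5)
Your proof is correct and takes essentially the same approach as the paper: the key observation in both is the case analysis on whether $v$ lies in the cut, showing that exactly one of the two new arcs leaves any $r$-$s$ cut. The paper applies this cut-counting to all cuts at once and invokes Menger to conclude $\lambda_{D'}(r,s)=\lambda_D(r,s)+1$ directly, whereas you split into two inequalities and handle the lower bound constructively by augmenting an optimal path family with the two-arc path $r\to v\to s$; this is a minor stylistic variation, not a genuinely different argument.
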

\begin{proof}
    Let $Z \subseteq V(D)$ with $r \in Z$ and $s \in V(D)\setminus Z$. If $v \in Z$, then $vs$ is the unique arc in $\delta_{D'}^{out}(Z)\setminus \delta_{D}^{out}(Z)$. If $v \in V(D)\setminus Z$, then $rv$ is the unique arc in $\delta_{D'}^{out}(Z)\setminus \delta_{D}^{out}(Z)$. It follows that $d_{D'}^{out}(Z)=d_D^{out}(Z)+1$, from which the statement follows directly.
\end{proof}

We are now ready to give the main proof of \cref{tfix} which we restate here for convenience.

\tfix*

\begin{proof}
Given an instance $(G,S,r,k)$ of \fourori and some $t \geq 4$, we obtain an equivalent instance of \tori by adding $t-4$ new terminals and $k$ edges linking each of them with $r$. It hence suffices to prove that \fourori is NP-hard.  We prove this by a reduction from \mthreeori , which is NP-hard by \Cref{etxdrcdzfvzugbu}. Let $(G,S,r,k,Y)$ be an instance of \mthreeori . We now create an instance $(G',S',r,k')$ of \fourori. We first let $V(G')$ contain $V(G)$. Next, for every $e=yy'\in \delta_G(Y)$ with $y \in Y$, we let $V(G')$ contain two more vertices $a_e$ and $b_e$. We further let $V(G')$ contain 4 more vertices $s_1',s'_2,s_3'$ and $s^*$. We now define two edge sets $E_1$ and $E_2$. First, we let $E_1$ contain all edges in $E(G)\setminus \delta_G(Y)$.
    Next, we let $E_1$ contain an edge connecting $y$ and $a_e$, an edge connecting $a_e$ and $b_e$, and an edge connecting $b_e$ and $y'$ for all $e=yy'\in \delta_G(Y)$ with $y \in Y$. Further, we let $E_1$ contain an edge connecting $b_e$ and $s^*$ for all $e \in \delta_G(X)$ and $k$ edges connecting $r$ and $s^*$. Finally, we let $E_1$ contain a set of $k$ parallel edges connecting $s_i$ and $s_i'$ for $i \in [3]$. We now let $E_2$ consist of an edge connecting $r$ and $a_e$ and an edge connecting $a_e$ and $s'_i$ for all $e \in \delta_G(Y)$ and all $ i \in [3]$. We set $E(G')=E_1 \cup E_2, S'=\{s'_1,s'_2,s'_3,s^*\}$ and $k'=k+d_G(Y)$. This finishes the description of $(G',S',r,k')$; see 
    Figure~\ref{fig:t5_hardness} for an illustration.
    For $i \in [2]$, we use $G'_i$ for the graph defined by $V(G'_i)=V(G')$ and $E(G'_i)=E_i$. It is easy to see that $(G',S',r,k')$ can be constructed from $(G,S,r,k,Y)$ in polynomial time and $|S'|=4$ holds. We claim that $(G',S',r,k')$ is a yes-instance of \fourori if and only if $(G,S,r,k,Y)$ is a yes-instance of \mthreeori.

    \begin{figure}
        \centering
        \begin{tikzpicture}
		\tikzset{
			every circle node/.style={minimum size=0.35cm, inner sep=0cm, font=\small},
			every rectangle node/.style={minimum size=0.3cm, inner sep=0cm, font=\small},
			>={Latex[scale=1]}
		}
		
		\node[circle, draw, minimum size=1cm, inner sep=1cm, label={$r$}] (r) at (2.5,0) {};
		
		\def\angles{-5, 0, 5} 
		\foreach \i in {1,2,3} {
			\node[circle, fill, label={90:$s_\i$}] (s\i) at (\i+1,-4.7) {};
			\node[fill, label={[yshift=-0.05cm]270:$s'_\i$}] (sp\i) at ($(s\i)+(0, -1.4)$) {};
			\foreach \angle in \angles {
				\draw[->] (s\i) to[bend right=\angle] (sp\i);
			}
			\draw[densely dotted] ($(s\i) + (255:0.5)$) 
			arc (255:285:0.5);
		}
		\foreach \i in {1} {
			\node at ($(s\i)+(-0.3,-0.6)$) {\small $k$};
		}
		\foreach \i in {2,3} {
			\node at ($(s\i)+(0.3,-0.6)$) {\small $k$};
		}
		
		\def\angles{0,2,4} 
		\node[fill, label={[yshift=-0.05cm]270:$s^*$}] (sstar) at (5.5,-4.7) {};
		\foreach \angle in \angles {
			\draw[->] (r) to[bend left=\angle+30] (sstar);
		}
		\draw[densely dotted] ($(r) + (315:0.5)$) 
		arc (315:345:0.5);
		\node at ($(r)+(0.7,-0.05)$) {$k$};
		\draw[rounded corners=2mm, densely dotted, draw] (0.8,-1.8) rectangle (5,0.6);
		\node at (5.25, -0.5) {$Y$};
		
		\node[circle, fill, label={[]90:$y$}] (y1) at (2.5,-1.5) {};
		\node[circle, fill, label={[xshift=-0.05cm]180:$a_e$}] (a1) at ($(y1)+(0, -1)$) {};
		\node[circle, fill, label={[xshift=0.05cm, yshift=-0.05cm]270:$b_e$}] (b1) at ($(a1)+(0.6, -0.6)$) {};
		\node[circle, fill, label={[xshift=0.05cm]270:$y'$}] (z1) at ($(y1)+(1.6,-1.6)$) {};
		\draw[->] (y1) to (a1);
		\draw[->] (a1) to (b1);
		\draw[->] (b1) to (z1);
		\draw[dashed] (y1) -- (z1);	
		\node[label={[xshift=0.1cm, yshift=-0.1cm]$e$}] at ($(y1)!0.5!(z1)$) {};
		
		\draw[->, bend right=20] (r) to (a1);
		\draw[->] (b1) to (sstar);
		\draw[->, bend left=4] (a1) to (sp1);
		\draw[->, bend right=4] (a1) to (sp2);
		\draw[->, bend right=0] (a1) to (sp3);
	\end{tikzpicture}
        \caption{An illustration of the proof of \cref{tfix}.} \label{fig:t5_hardness}
    \end{figure}

    First suppose that $(G,S,r,k,Y)$ is a yes-instance of \mthreeori, so there exists an orientation $\dG$ of $G$ such that all edges in $\delta_G(Y)$ are oriented away from $Y$ and $\lambda_{\dG}(r,s_i)\geq k$ holds for all $i \in [3]$. We now define an orientation $\dGp$ of $G'$ as follows: first, we let every $e \in E(G)\setminus \delta_G(Y)$ have the same orientation it has in $\dG$. Next, we orient all the edges incident to $r$ away from $r$ and we orient all edges in $\delta_{G'}(S')$ toward $S'$. Finally, for every $e=yy' \in \delta_G(Y)$ with $y \in Y$, we orient the edge connecting $y$ and $a_e$ from $y$ to $a_e$, we orient the edge connecting $a_e$ and $b_e$ from $a_e$ to $b_e$ and we orient the edge connecting $b_e$ and $y'$ from $b_e$ to $y'$. This finishes the description of $\dGp$.
    We let $\dGpo$ and $\dGpt$ denote the inherited orientations of $G_1'$ and $G_2'$, respectively. We still need to show that $\lambda_{\dGp}(r,s')\geq k'$ holds for every $s' \in S'$.
    We clearly have $\lambda_{\dGp}(r,s^*) \ge k'=k+d_G(Y)$ since $\dGp$ contains $k$ parallel $rs^*$ arcs, and $ra_eb_es^*$ for $e \in \delta_G(Y)$ is a collection of $d_G(Y)$ arc-disjoint directed $r$-$s^*$ paths in $\dGp$. Now we consider some $i \in [3]$ and show that $\lambda_{\dGp}(r, s'_i) \ge k'$.
    As $\lambda_{\dG}(r, s_i)\ge k$, there exist arc-disjoint directed $r$-$s_i$ paths $P_1, \ldots, P_k$ in $\dG$.
    For each $j \in [k]$, $P_j$ contains a unique arc $e = yy'\in \delta_G(Y)$ with $y \in Y$. 
    Let $P'_j$ denote the path obtained from $P_j$ by replacing the arc $yy'$ with the path $ya_eb_ey'$.
    Then, $P'_1,\ldots, P'_k$ is a collection of $k$ arc-disjoint directed $r$-$s_i$ paths in $\dG'_1$.
    As $\dGpo$ contains $k$ parallel $s_is'_i$ arcs, we have $\lambda_{\dG'_1}(r, s'_i)\ge k$.
    We also have $\lambda_{\dGpt}(r, s'_i)\ge d_G(Y)$ as $ra_es'_i$ is a directed $r$-$s'_i$ path in $\dGpt$ for each $e \in \delta_G(Y)$.
    It follows that $\lambda_{\dGp}(r, s'_i) \ge \lambda_{\dGpo}(r, s_i')+\lambda_{\dGpt}(r,s_i')\ge k+d_G(Y)=k'$.
    Hence $(G',S',r,k')$ is a yes-instance of \fourori.

    Now suppose that $(G',S',r,k')$ is a yes-instance of \fourori, so there exists an orientation $\dGp$ of $G'$ with $\lambda_{\dGp}(r,s')\geq k'$ for all $s' \in S'$. We first collect some structural properties of $\dGp$. First observe that for every $s'\in S'$, we have $k'\leq \lambda_{\dGp}(r,s')\leq d_{\dGp}^{in}(s')\leq d_{G'}(s')=k'$. Hence equality holds throughout and all edges in $\delta_{G'}(S')$ are oriented toward $S'$ in $\dGp$. We now prove two claims.
    
    \begin{cla}\label{drztfugziuh}    
        For all $e\in \delta_G(Y)$, the edge connecting $a_e$ and $b_e$ is oriented from $a_e$ to $b_e$ in $\dGp$.
    \end{cla}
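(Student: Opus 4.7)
My plan is to prove the claim by a single cut-counting argument applied to a cleverly chosen set that handles all edges $a_e b_e$ simultaneously. I would set $W \coloneqq Y \cup \{a_e : e \in \delta_G(Y)\} \cup \{s'_1, s'_2, s'_3\}$. Since $r \in Y$ we have $r \in W$, while by construction $s^* \notin W$, so the Steiner rooted $k'$-arc-connectivity of $\dGp$ yields $d_{\dGp}^{out}(W) \geq \lambda_{\dGp}(r, s^*) = k'$.

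The key step is to enumerate $\delta_{G'}(W)$ by running through the edge classes in the construction of $G'$. The role of the vertices $a_e$ in $W$ is to make the helper edges $ra_e$ and $a_e s'_i$ internal to $W$; the role of including $Y$ is that for every $e = yy' \in \delta_G(Y)$ with $y \in Y$ and $y' \notin Y$, the subdivision path $y$-$a_e$-$b_e$-$y'$ contributes exactly one cut edge, namely $a_e b_e$. The edges of $E(G[Y])$, $E(G[V(G) \setminus Y])$, and the $b_e s^*$ edges each lie entirely on one side, while the only other cut edges are the $k$ parallel $rs^*$ edges and the $3k$ parallel $s_i s'_i$ edges. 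This gives $|\delta_{G'}(W)| = 4k + f$, where $f \coloneqq d_G(Y) = k' - k$.

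To conclude, I would invoke the already established fact that every edge in $\delta_{G'}(S')$ is oriented toward $S'$: the $3k$ arcs $s_i s'_i$ thus enter $W$, and the $k$ arcs $rs^*$ leave $W$. Combined with $d_{\dGp}^{out}(W) \geq k'$, this forces $d_{\dGp}^{in}(W) \leq (4k + f) - k' = 3k$, which is already attained by the $s_i s'_i$ arcs alone. Hence no edge $a_e b_e$ can be oriented into $W$, and since $a_e \in W$ while $b_e \notin W$, each must be oriented $a_e \to b_e$, as required.

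The main obstacle, though routine, is making the enumeration of $\delta_{G'}(W)$ airtight. One must in particular handle the bookkeeping when $r$ is itself an endpoint of some $e \in \delta_G(Y)$ (which produces parallel $ra_e$ edges in $G'$ from both $E_1$ and $E_2$, both still internal to $W$), and verify that every edge of $E(G) \setminus \delta_G(Y)$, being entirely inside $Y$ or entirely inside $V(G) \setminus Y$, has both endpoints on the same side of $W$.
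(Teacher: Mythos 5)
Your proposal is correct and uses the same cut-counting argument as the paper, with the identical set $W = Z = Y \cup \{a_e : e \in \delta_G(Y)\} \cup \{s'_1, s'_2, s'_3\}$. The only cosmetic difference is that the paper bounds $d_{\dGp}^{out}(Z)$ directly (it is at most $k + d_G(Y)$ since the $s_is'_i$ arcs all enter $Z$), whereas you count the total $|\delta_{G'}(W)| = 4k + d_G(Y)$ and then deduce $d_{\dGp}^{in}(W) = 3k$ exactly; both routes force every $a_eb_e$ to leave the set.
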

    \begin{claimproof}
        Let $Z=Y \cup \{a_e:e\in \delta_G(Y)\}\cup \{s'_1,s'_2, s'_3\}$. As all edges incident to $s'_i$ are oriented toward $s'_i$ for $i \in [3]$ and by construction, we obtain that $\delta_{\dGp}^{out}(Z)$ consists of the orientations of the $k$ edges connecting $r$ and $s^*$ and orientations of edges of the form $a_eb_e$ for some $e \in \delta_G(Y)$. As $r \in Z, s^*\in V(G')\setminus Z$ and $\lambda_{\dGp}(r,s^*)\geq k'$ by assumption, we obtain $k'\leq \lambda_{\dGp}(r,s^*)\leq d_{\dGp}^{out}(Z)\leq d_G(Y)+k=k'$. Hence equality holds throughout and the statement follows.
    \end{claimproof}

    In the following, let $\dGpo$ and $\dGpt$ be the orientations of $G'_1$ and $G'_2$ inherited from $\dGp$, respectively. 
   
    \begin{cla}\label{dtfuzgu}
        For every $i \in [3]$, we have $\lambda_{\dGpo-S'}(r,s_i)\geq k$.
    \end{cla}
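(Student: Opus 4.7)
My approach is a cut argument based on Menger's theorem combined with the observations already established about $\dGp$. It suffices to fix $i\in [3]$ and prove that $d^{out}_{\dGpo - S'}(Z)\ge k$ for every subset $Z\subseteq V(G')\setminus S'$ with $r\in Z$ and $s_i\notin Z$. The plan is to relate such a cut in $\dGpo-S'$ to a cut in the full orientation $\dGp$ for which the $r$-$s_i'$ connectivity bound $k'=k+d_G(Y)$ is available, and then to control the ``excess'' arcs by exactly $d_G(Y)$.

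Concretely, first I will exploit the property proved immediately before the claim: all edges in $\delta_{G'}(S')$ are oriented toward $S'$, so the vertices $s^*, s_1', s_2', s_3'$ are sinks in $\dGp$. Given a candidate cut $Z$ as above, I set $Z' \coloneqq Z\cup (S'\setminus \{s_i'\})$, which satisfies $r\in Z'$ and $s_i'\notin Z'$. Because the newly added vertices are sinks, every arc leaving $Z'$ in $\dGp$ actually leaves $Z$, so
\[
d^{out}_{\dGp}(Z') \;=\; \bigl|\{\text{arcs of }\dGp\text{ from }Z\text{ to }V(G')\setminus (Z\cup S')\}\bigr| \;+\; \bigl|\{\text{arcs of }\dGp\text{ from }Z\text{ to }s_i'\}\bigr|.
\]
Using $d^{out}_{\dGp}(Z')\ge k'=k+d_G(Y)$, this rearranges into a lower bound on the number of arcs of $\dGp$ from $Z$ into $V(G')\setminus (Z\cup S')$, from which I still need to subtract the arcs contributed by $\dGpt$ to obtain $d^{out}_{\dGpo-S'}(Z)$.

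The main step is then a counting lemma: the quantity
\[
\bigl|\{\text{arcs of }\dGp\text{ from }Z\text{ to }s_i'\}\bigr| \;+\; d^{out}_{\dGpt - S'}(Z)
\]
is at most $d_G(Y)$. The point is that the only non-sink arcs of $\dGpt$ are the orientations of the edges $ra_e$ for $e\in\delta_G(Y)$, and the only arcs entering $s_i'$ from $Z$ are those coming from the vertices $a_e$ with $a_e\in Z$ (since $s_i\notin Z$ and the edges $s_is_i'$ are oriented toward $s_i'$). Thus each $e\in\delta_G(Y)$ contributes at most one to the sum above, depending on whether $a_e\in Z$ or not, giving the bound $d_G(Y)$. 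Combining this with the decomposition of $d^{out}_{\dGp}(Z')$ yields
\[
d^{out}_{\dGpo - S'}(Z) \;\ge\; k' - d_G(Y) \;=\; k,
\]
and Menger's theorem finishes the proof.

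The main potential obstacle is purely bookkeeping: making sure that every edge of $G'$ at $s_i'$, every edge of $E_2$ not already accounted for, and every edge between $Z$ and the other sinks $s^*, s_{j}'$ ($j\ne i$) is properly placed in the inequality above. This is why the choice $Z'=Z\cup(S'\setminus\{s_i'\})$ is essential: it absorbs precisely the sinks whose outgoing side would otherwise be miscounted, reducing the argument to the clean dichotomy $a_e\in Z$ versus $a_e\notin Z$ that yields the bound $d_G(Y)$.
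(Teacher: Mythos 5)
Your proof is correct. It takes a somewhat different route from the paper's proof of Claim~\ref{dtfuzgu}. The paper argues at the level of connectivity values: it first applies Lemma~\ref{vuzuighj} repeatedly (once per $e \in \delta_G(Y)$, peeling off the arc pair $ra_e$, $a_e s_i'$) to deduce $\lambda_{\dGpo}(r,s_i')\ge k$, then observes that deleting the sinks $S'\setminus\{s_i'\}$ preserves this, and finally transfers the bound from $s_i'$ to $s_i$ by noting that every edge of $G_1'-(S'\setminus s_i')$ at $s_i'$ runs to $s_i$. You instead run a single Menger-type cut computation: you take an arbitrary $r$-$s_i$ cut $Z$ in $\dGpo-S'$, enlarge it to $Z'=Z\cup(S'\setminus\{s_i'\})$, and bound the excess $d^{out}_{\dGp}(Z')-d^{out}_{\dGpo-S'}(Z)$ by $d_G(Y)$ via the dichotomy ``$a_e\in Z$ versus $a_e\notin Z$.'' This is the same insight that drives Lemma~\ref{vuzuighj}, inlined, and it absorbs the sink deletions and the $s_i'$-to-$s_i$ translation in one stroke. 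Two small bonuses of your version: it works regardless of how the $ra_e$ edges are oriented (the paper invokes a ``we may suppose'' to point them away from $r$), and it automatically disposes of the arcs $a_e s_j'$ for $j\ne i$ (which the paper discards without comment), since those never leave $Z'$ -- if $a_e\in Z$ the arc lies inside $Z'$, and if $a_e\notin Z$ the arc enters $Z'$.
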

    \begin{claimproof}
        For every $e \in \delta_G(Y)$, we have that the edge connecting $a_e$ and $s_i'$ is oriented from $a_e$ to $s_i'$ and clearly, we may suppose that the edge connecting $r$ and $a_e$ is oriented from $r$ to $a_e$. Hence, repeatedly applying \cref{vuzuighj}, we obtain that $\lambda_{\dGpo}(r,s_i')=k'-d_G(Y) =k$. As all edges in $\delta_{G'}(S')$ are oriented toward $S'$ in $\dGp$ and $S'$ is an independent set in $G'$ by construction, we obtain that $\lambda_{\dGpo-(S'\setminus s_i)}(r,s_i')\geq k$. Now let $Z \subseteq V(\dGpo-S')$ with $r \in Z$ and $s_i \in V(\dGpo-S')\setminus Z$. As $\lambda_{\dGpo-(S'\setminus s_i')}(r,s_i')\geq k$, we obtain that $d_{\dGpo-(S'\setminus s_i')}^{out}(r,s_i')(Z)\geq k$. As by construction all edges in $E(G_1'-(S'\setminus s_i'))$ that are incident to $s'_i$ are also incident to $s_i$, we obtain $d_{\dGpo-S'}^{out}(Z)\geq d_{\dGpo-(S'\setminus s_i')}^{out}(Z)\geq k$. Hence the statement follows.
    \end{claimproof}

We are now ready to finish the proof. Let $\dG$ be the orientation of $G$ in which every edge in $E(G)\setminus\delta_G(Y)$ has the same orientation as in $\dGp$ and all edges in $\delta_G(Y)$ are oriented away from $Y$. It suffices to prove that $\lambda_{\dG}(r,s_i)\geq k$ holds for all $i \in [3]$. Now let $i \in [3]$ and let $Z \subseteq V(G)$ be a set with $r \in Z$ and $s_i \in V(G)\setminus Z$. Further, let $Z'$ be the set that contains $Z$ and that for every $e=yy'\in \delta_G(Y)$ with $y \in Y$ contains $a_e$ if $Z$ contains $y$ and contains $b_e$ if $Z$ contains $y'$. By Claim \ref{dtfuzgu}, we have $d_{\dGpo-S'}(Z')\geq k$. Now consider some $a \in \delta_{\dGpo-S'}(Z')$. It follows by construction that either $a$ is an orientation of an edge in $E(G)\setminus \delta_G(Y)$ or $a$ is an orientation of the edge connecting $a_e$ and $b_e$ for some $e=yy' \in \delta_G(Y)$ with $y \in Y$. In the first case, we obtain that $a$ is also contained in $\delta_{\dG}^{out}(Z)$. In the latter case, by Claim \ref{drztfugziuh}, we obtain that $a_e \in Z'$ and $b_e \in V(G')\setminus Z$. It hence follows by construction that $yy' \in \delta_{\dG}^{out}(Z)$. We obtain that $d_{\dG}^{out}(Z)\geq d_{\dGp}(Z')\geq k$. We obtain that $\lambda_{\dG}(r,s_i)\geq k$ and hence $(G,S,r,k,Y)$ is a yes-instance of \mthreeori. 
\end{proof}

\section{Conclusion}
\label{sec:conclusion}

In this paper, we studied the Steiner Rooted Orientation problem, motivated by ensuring reliable connectivity from a root to a set of terminals. We developed an algorithm running in time $f(k,t)\cdot n^{O(1)}$, establishing fixed-parameter tractability with respect to the required connectivity $k$ and the number of terminals $t$. On the other hand, we showed that the problem becomes NP-hard as soon as either $k$ or $t$ is part of the input, which implies that our algorithm is optimal from a parameterized point of view. Our results also extend to the more general setting where a local requirement $R(u,v)$ is given for every ordered pair $(u,v)$ of vertices, and the parameter is the total demand $\alpha=\sum_{(u,v)\in V\times V}R(u,v)$. We close the paper by mentioning some open problems.

\begin{enumerate}\itemsep0em
    \item While our results provide an almost complete picture for edge-disjoint rooted connectivity, some cases remain unresolved. The most immediate question concerns \Cref{tfix}: What is the complexity of \kori when $k$ is part of the input and $t\in\{2,3\}$? 
    \item Another direction is to consider orientations that guarantee internally vertex-disjoint paths rather than arc-disjoint ones, or to allow some edges to be pre-oriented. These variants are at least as hard as the arc-disjoint setting considered here, raising the following questions: Does an analogue of our fixed-parameter tractability result hold when we require $k$ internally vertex-disjoint paths from the root to each terminal, or when dealing with mixed graphs where unoriented and pre-oriented edges coexist? Do the corresponding rooted orientation problems remain tractable when parameterized by $k$ and $t$?
\end{enumerate}

\medskip

\paragraph{Acknowledgment.} 
András Imolay was supported by the EKÖP-KDP-25 University Research Scholarship Program, Cooperative Doctoral Program of the Ministry for Culture and Innovation, from the source of the National Research, Development and Innovation Fund. The research received further support from the Lend\"ulet Programme of the Hungarian Academy of Sciences -- grant number LP2021-1/2021, from the Ministry of Innovation and Technology of Hungary from the National Research, Development and Innovation Fund -- grant numbers ADVANCED 150556 and ELTE TKP 2021-NKTA-62, and from the Dynasnet European Research Council Synergy project -- grant number ERC-2018-SYG 810115. This work was supported in part by EPSRC grant EP/X030989/1.

\paragraph{Data Availability.} No data are associated with this article. Data sharing is not applicable to this article. 

\bibliographystyle{abbrv}
\bibliography{steiner}

\end{document}